\newtheorem{theorem}{Theorem}
\newtheorem{definition}[theorem]{Definition}
\newcommand{\R}{\mathbb{R}}
\newcommand{\C}{\mathbb{C}}
\newcommand{\bH}{H}
\newcommand{\dps}{\displaystyle}
\newcommand{\ii}{\infty}
\newcommand\1{{\ensuremath {\mathds 1} }}
\renewcommand\phi{\varphi}
\newcommand{\gH}{\mathfrak{H}}
\newcommand{\cS}{\mathcal{S}}
\newcommand{\cO}{\mathcal{O}}
\newcommand{\cA}{\mathcal{A}}
\newcommand{\cT}{\mathcal{T}}
\newcommand{\cV}{\mathcal{V}}
\newcommand{\cC}{\mathcal{C}}
\newcommand{\cU}{\mathcal{U}}
\newcommand{\cM}{\mathcal{M}}
\newcommand{\cE}{\mathcal{E}}
\newcommand{\cN}{\mathcal{N}}
\newcommand{\cJ}{\mathcal{J}}
\newcommand\pscal[1]{{\ensuremath{\left\langle #1 \right\rangle}}}
\newcommand{\norm}[1]{ \left\| #1 \right\|}
\renewcommand{\geq}{\geqslant}
\renewcommand{\leq}{\leqslant}
\renewcommand{\tilde}{\widetilde}
\newcommand{\eps}{\varepsilon}
\newcommand{\nn}{\nonumber}
\newcommand{\rd}{\mathrm{d}}
\newcommand{\dt}{\rd t}
\DeclareMathOperator{\tr}{Tr}
\DeclareMathOperator{\tspan}{span}
\DeclareMathOperator{\sign}{sign}
\newcommand{\bra}[1]{\ensuremath{\langle #1 \vert}}
\newcommand{\ket}[1]{\ensuremath{\vert #1  \rangle}}
\begin{document}

\title{Geometric theory of constrained Schr{\"o}dinger dynamics with application to time-dependent density-functional theory on a finite lattice}

\author{\'Eric~Canc\`es}
\email{eric.cances@enpc.fr}
\affiliation{CERMICS, \'Ecole des Ponts -- Institut Polytechnique de Paris and Inria, 6-8 avenue Blaise Pascal, Cité Descartes, 77455 Marne-la-Vallée, France}

\author{Th\'eo~Duez}
\email{theo.duez@enpc.fr}
\affiliation{CERMICS, \'Ecole des Ponts -- Institut Polytechnique de Paris and Fédération CNRS Bézout, 6-8 avenue Blaise Pascal, Cité Descartes, 77455 Marne-la-Vallée, France}

\author{Jari~van~Gog}
\email{jari.van\_gog@etu.sorbonne-universite.fr}
\affiliation{Laboratoire de Chimie Théorique, Sorbonne Université and CNRS, 4 place Jussieu, 75005 Paris, France}

\author{Asbj{\o}rn~B{\ae}kgaard~Lauritsen}
\email{lauritsen@ceremade.dauphine.fr}
\affiliation{CEREMADE, CNRS, Université Paris-Dauphine, PSL Research University, Place de Lattre de Tassigny, 75016 Paris, France}

\author{Mathieu~Lewin}
\email{Mathieu.Lewin@math.cnrs.fr}
\affiliation{CEREMADE, CNRS, Université Paris-Dauphine, PSL Research University, Place de Lattre de Tassigny, 75016 Paris, France}

\author{Julien~Toulouse}
\email{toulouse@lct.jussieu.fr}
\affiliation{Laboratoire de Chimie Théorique, Sorbonne Université and CNRS, 4 place Jussieu, 75005 Paris, France}

\begin{abstract}
Time-dependent density-functional theory (TDDFT) is a central tool for studying the dynamical electronic structure of molecules and solids, yet aspects of its mathematical foundations remain insufficiently understood. In this work, we revisit the foundations of TDDFT within a finite-dimensional setting by developing a general geometric framework for Schrödinger dynamics subject to prescribed expectation values of selected observables. We show that multiple natural definitions of such constrained dynamics arise from the underlying geometry of the state manifold. The conventional TDDFT formulation emerges from demanding stationarity of the action functional, while an alternative, purely geometric construction leads to a distinct form of constrained Schrödinger evolution that has not been previously explored. This alternative dynamics may provide a more mathematically robust route to TDDFT and may suggest new strategies for constructing nonadiabatic approximations. Applying the theory to interacting fermions on finite lattices, we derive novel Kohn--Sham schemes in which the density constraint is enforced via an imaginary potential or, equivalently, a nonlocal Hermitian operator. Numerical illustrations for the Hubbard dimer demonstrate the behavior of these new approaches.
\end{abstract}

\maketitle

\tableofcontents

\section{Introduction}

Density-functional theory (DFT)~\cite{HohKoh-64,KohSha-PR-65} and its time-dependent extension (TDDFT)~\cite{RunGro-84,vanLeeuwen-99} are among the most powerful and widely used approaches for investigating the static and dynamic electronic structure of molecular and solid-state systems. While the mathematical foundations of DFT are well established~\cite{Lieb-83b}, the theoretical framework of TDDFT remains less rigorously understood. In particular, the original proofs of the Runge–Gross and van Leeuwen theorems~\cite{RunGro-84,vanLeeuwen-99} rely on the assumption of time analyticity of the external potential and of the wavefunction, which fails to hold for singular potentials such as Coulomb potentials~\cite{MaiTodWooBur-10,YanMaiBur-12,FouLamLewSor-16}. The mathematical framework of TDDFT for continuous and lattice systems has seen substantial development~\cite{FarTok-12,RugPenBau-JPA-09,RugLee-11,PenRug-11,RugGiePenLee-12,RugPenLee-15}, yet certain aspects of its foundations could benefit from further clarification. Beyond issues of mathematical rigor, advancing our theoretical understanding of TDDFT could guide the construction of more accurate approximations, potentially overcoming the limitations of the adiabatic approximation, which arguably remains one of the most pressing challenges in TDDFT today (see, e.g., Ref.~\cite{LacMai-NPJCM-23}).

In this work, we revisit the foundations of TDDFT within a finite-dimensional framework. To this end, we first develop a general and abstract formulation of Schrödinger's dynamics subject to prescribed expectation values of selected observables (such as the spatial one-particle density). By emphasizing the geometric structure underlying these constrained dynamics, we demonstrate that there exist several natural ways to define them. The choice that renders the action stationary corresponds to the conventional formulation adopted in standard TDDFT. However, we identify an alternative and equally natural definition based purely on the geometry of the manifold of constrained states. This leads to a distinct form of constrained Schrödinger dynamics, which, to the best of our knowledge, has not been previously explored in the context of TDDFT. This alternative formulation may provide a mathematically more robust approach for TDDFT and may lead to new forms of approximations.

We then apply the general theory to the case of TDDFT for interacting fermions on a finite lattice. In this setting, the new form of constrained Schrödinger's dynamics naturally gives rise to novel types of Kohn–-Sham schemes, in which the prescribed density is enforced through a nonlinear imaginary potential or, equivalently, a nonlocal Hermitian operator. We illustrate these approaches on the Hubbard dimer.

In the companion paper~\cite{MAQUI_TDDFT-26_ppt} we use the geometric framework described in this article to revisit the foundations of TDDFT for continuous systems.

The paper is organized as follows. In Section~\ref{sec:theory}, we establish the general geometric framework for constrained Schrödinger dynamics. Within this framework, Section~\ref{sec:TDVP} introduces the standard formulation based on making the action stationary, called the variational principle. Section~\ref{sec:McLachlan} then presents an alternative formulation, the geometric principle, derived purely from geometric considerations. In Section~\ref{sec:oblique}, we introduce an interpolating approach that connects the variational and geometric principles. In Section~\ref{sec:qubit}, we provide an illustration of the different principles on one qubit. Section~\ref{sec:Hubbard} applies the theory to TDDFT on a finite lattice, while Section~\ref{sec:Hubbard_dimer} provides numerical illustrations for the Hubbard dimer. Finally, the appendices 
contain additional derivations and proofs (Appendices~\ref{app:qubit_oblique}, \ref{app:proof_matrix_K}, and \ref{sec:independence}), an extension of the main results to mixed states (\Cref{app:mixed}),
and an equivalent algebraic derivation of our results (\Cref{sec.algebra.corr.term}).


\section{General geometric theory of constrained Schrödinger dynamics}\label{sec:theory}

We discuss how to modify the time-dependent Schrödinger equation in order to force its solution to satisfy some prescribed constraints. 

\subsection{Description of the problem}

We work in finite dimension $d$, hence the state of our quantum system is represented by a $d$-component state $\psi\in\C^d$. Some of our arguments apply to infinite dimensions as well, but not all of them do. We will particularly emphasize the geometric interpretation of the constrained dynamics. In order to clarify the mathematical structure of our problem, it is useful to consider a general class of constraints taking the form of fixed expectation values
\begin{equation}
\pscal{\psi(t),\cO_m\psi(t)}=o_m(t) ,
\qquad m=1,...,M,
\label{eq:constraints_Oj}
\end{equation}
for some general observables $\cO_1,...,\cO_M$ and all times $t$. Without loss of generality, we can assume that the $\cO_m$'s are linearly independent.
A similar abstract setting was considered before in~\cite{SchGunNoa-95,XuMaoGaoLiu-22,Song-23,PenLee-PRA-25}. We assume that the $\cO_m$'s are all time-independent $d \times d$ Hermitian matrices (see \Cref{sec.algebra.corr.term} for an extension to time-dependent observables). For TDDFT, the observables will be the density operators $\cO_m =\sum_{\sigma\in\{\uparrow,\downarrow\}}a_{m\sigma}^\dagger a_{m\sigma}$ where $a_{m\sigma}^\dagger$ and $a_{m\sigma}$ are the creation and annihilation operators of a particle of spin $\sigma$ at position $m$ over a finite lattice. We will come back to this specific case later in Section~\ref{sec:Hubbard}. Current density-functional theory and density-matrix functional theory also fit into this framework (see Section~\ref{sec.current.DFT}). 

The reference \textbf{Schrödinger equation} is
\begin{equation}
\left\{
\begin{aligned}
i\partial_t\psi^\text{S}(t)&=H(t)\psi^\text{S}(t)\\
\psi^\text{S}(0)&=\psi_0
\end{aligned}
\right.
\label{eq:Schrodinger_unconstrained}
\end{equation}
for some time-dependent Hamiltonian $H(t)$ (a $d\times d$ Hermitian matrix) and some normalized quantum state $\psi_0\in\C^d$. Even if the constraints are satisfied at time $t=0$, the Schrödinger evolution~\eqref{eq:Schrodinger_unconstrained} will in general not preserve them at later times. In particular, if the $o_m$'s are time-independent, this is only the case for all initial states $\psi_0$ when the matrices $\cO_m$'s commute with $H(t)$, i.e. $\cO_m H(t)=H(t)\cO_m$ for all times $t$. Since we do not make such an assumption, the Schrödinger equation~\eqref{eq:Schrodinger_unconstrained} must be modified to enforce the constraints~\eqref{eq:constraints_Oj} at all times. There are several possibilities to do so, that we will discuss at length below. In short, we will introduce several \textbf{modified Schrödinger equations}, that all take the general form
\begin{equation}
\left\{
\begin{aligned}
i\partial_t\psi(t)&=\left( H(t)+F\big(t,\psi(t)\big) \right) \psi(t)\\
\psi(0)&=\psi_0,    
\end{aligned}
\right.
\label{eq:Schrodinger_perturbed}
\end{equation}
with $F\big(t,\psi(t)\big)$ a correction term used to enforce the constraints~\eqref{eq:constraints_Oj} at all times. In our applications the correction term will depend on the time $t$ (through the Hamiltonian $H(t)$ and the constraints $o_m(t)$), together with the instantaneous state $\psi(t)$. In general it could be a more complicated function depending on the whole trajectory $\{\psi(s)\}_{0\leq s\leq t}$ until the current time $t$. In order to ensure that our model is causal, we will not allow it to depend on later times, however. The modified Schrödinger equation~\eqref{eq:Schrodinger_perturbed} ends up being a highly nonlinear Schrödinger equation whose solution $\psi(t)$ will usually be very different from the solution $\psi^\text{S}(t)$ of the reference Schrödinger equation~\eqref{eq:Schrodinger_unconstrained}. In this paper we will particularly focus on the geometric interpretation of the different choices for the correction term $F\big(t,\psi(t)\big)$. We note that, in a TDDFT spirit, for a given Hamiltonian $t \mapsto H(t)$, the correction term $F$ could alternatively be considered as a functional of the trajectory of the expectation values $\{o(s)\}_{0\leq s\leq t}$ and of the initial state $\psi_0$.

\subsection{Geometric structure of the set of constrained states for time-independent constraints}\label{sec:geomectric-structure-constraints}

We now discuss the interpretation of the constraints in the framework of differential geometry. For simplicity, we consider first time-independent constraint values $o_m$ and postpone the time-dependent case to Section~\ref{sec:timedepcons}. Thus, at each time $t$, the wavefunction $\psi(t)$ belongs to the time-independent set
\begin{equation}
\cC =\big\{\psi\in \C^d\text{ satisfying~\eqref{eq:constraints_Oj} and $\|\psi\|=1$}\big\}.
\label{eq:manifold_constraints}
\end{equation}
where $\|\psi\|:=\sqrt{\psi^\dagger\psi}$ denotes the usual norm of $\psi$. Here $\psi^\dagger$ denotes the complex conjugation and transposition of the column vector $\psi$. 

We have to be careful to work with the \emph{real} structure of our state space $\C^d$, and not the complex structure. This is because it is not the same to be differentiable in the real sense or in the complex sense. 
(Differentiable in this context refers to a function of $\psi$ being differentiable with respect to the components $\psi_j$ of $\psi$.)
By real structure we mean that we see any vector $\psi$ in $\C^d$ as a vector in $\R^{2d}$, each component $\psi_j$ being split into its real part $\Re(\psi_j)$ and imaginary part $\Im(\psi_j)$. 
The usual real scalar product in $\R^{2d}$ of two vectors $\psi,\psi'\in\C^d$ is
$$\sum_{j=1}^d\Re(\psi_j)\Re(\psi_j')+\sum_{j=1}^d\Im(\psi_j)\Im(\psi_j')=\Re\pscal{\psi,\psi'}$$
where $\pscal{\psi,\psi'}=\psi^\dagger\psi'=\sum_{j=1}^d\overline{\psi_j}\psi_j'\in\C$ denotes the usual complex scalar product of $\C^d$. The real scalar product $\Re\pscal{\psi,\psi'}$ will play a central role in our study.

The set $\cC$ in \eqref{eq:manifold_constraints} can be decomposed into a regular part $\cM$ and a singular part $\cS$ as $\cC = \cM \cup \cS$. 
The regular part $\cM$ is the set where the constraints are non-redundant (or qualified in the language of constrained optimization) in the sense that the $\cO_m \psi$'s are $\R$-linearly independent. 
It follows that $\cM$ is a smooth  (infinitely differentiable) submanifold of $\R^{2d}$ of dimension $2d-M$,
see for instance \cite[Section 7.7]{Boumal-23} or \cite[Theorem 5.12]{Lee-12}.
The singular part $\cS$ is interpreted as a boundary of $\cM$, where (some of) the constraints become redundant. 
We will restrict ourselves to states evolving only within the regular part $\cM$. 

The tangent space $\cT_\psi$ at a state $\psi\in\cM$ is 
the set of directions $h\in\C^d$ such that the constraints~\eqref{eq:constraints_Oj} are all satisfied by the modified state $\psi+\eps h$ up to an error of size $O(\eps^2)$.
Recalling the expansion
\begin{multline*}
\pscal{\psi+\eps h,\cO_m(\psi+\eps h)}\\=\pscal{\psi,\cO_m\psi}+2\eps\Re\pscal{h, \cO_m\psi}+\eps^2\pscal{h,\cO_m h}
\end{multline*}
due to the Hermiticity of the observable $\cO_m$, we obtain that the tangent space is
\begin{equation}
\cT_\psi:=\big\{h \in\C^d\ |\ \Re\pscal{h,\cO_m\psi}=0,\ \forall m=1,...,M\big\}.
\label{eq:T_psi_M}
\end{equation}
Notice that the real scalar product mentioned before naturally arises. The constraint $\Re\pscal{h,\cO_m\psi}=0$ for all $m$ is interpreted by saying that $h$ must be orthogonal to all the vectors $\cO_m\psi$ for the \emph{real} structure, that is, when seen as vectors in $\R^{2d}$. The \emph{normal space} at a state $\psi\in\cM$ is by definition the orthogonal complement to $\cT_\psi$ for the real structure and thus equals
\begin{align}
\cN_\psi&:=\tspan_\R(\cO_1\psi,...,\cO_M\psi)\nn\\
&=\left\{\sum_{m=1}^M v_m\cO_m\psi,\quad v_1,...,v_M\in\R\right\}.
\label{eq:N_psi_M}
\end{align}
Note that, since we removed the singular states $\psi$ for which the $\cO_m\psi$'s are linearly dependent, we have
\begin{equation}
\dim_\R(\cN_{\psi})=M,\qquad\dim_\R(\cT_\psi)=2d-M.
\label{eq:full_rank}
\end{equation}
An equivalent way of expressing that the vectors $\cO_m\psi$'s are $\R$-linearly independent is that 
the $M\times M$ symmetric overlap (\textit{a.k.a.}~Gram) matrix
\begin{equation}
(S^\psi)_{mn}=\Re\pscal{\cO_m\psi,\cO_n\psi}=\pscal{\psi,\frac{\{\cO_n,\cO_m\}}{2}\psi}, 
\label{eq:overlap}
\end{equation}
with $\{A,B\} := A B + BA$ denoting the anticommutator, 
is invertible, i.e. $\det(S^\psi)\neq0$.
The regular set is thus given by 
\begin{equation}
    \cM = \big\{\psi \in \cC \,|\, \det( S^\psi) \ne 0\big\}.
\end{equation}

We assume that our initial state belongs to the regular part $\psi_0 \in \cM$. That is, we assume that $S^{\psi_0}$ is invertible. 
In addition, we will only work with trajectories $t\mapsto \psi(t)$ defined over an interval of times $t\in[0,T)$ for which $S^{\psi(t)}$ remains invertible.
Note that a matrix very similar to the matrix $S^\psi$ already appeared in a work on the abstract extension of the Hohenberg--Kohn theorem~\cite{XuMaoGaoLiu-22}. For the special case of commuting observables $\cO_m$'s, the matrix $S^\psi$ also appears in a recent work~\cite{PenLee-PRA-25} in the context of constrained search in imaginary time for ground-state DFT.
The invertibility of $S^\psi$ is closely linked to the {\bfseries unique $v$-representability problem} \cite{PenLeu-21} in the case of TDDFT, see \Cref{sec:v-rep.invert-S-K}. We discuss further the invertiblity of $S^\psi$ in \Cref{sec:independence}. Generically, one should expect $S^\psi$ to be invertible.

Next we discuss under which condition a continuously-differentiable trajectory $t\mapsto\psi(t)$ with $\psi_0\in\cM$ stays on the manifold~$\cM$. Differentiating in time, we find for any $\psi(t)$
$$\frac{\rd}{\dt}\pscal{\psi(t),\cO_m\psi(t)}=2\Re\pscal{\partial_t\psi(t),\cO_m\psi(t)}.$$
The left-hand-side vanishes if and only if the right-hand-side does. 
Thus, from the formula for the normal space in \eqref{eq:N_psi_M} above, we conclude that
\emph{a trajectory $t\mapsto\psi(t)$ starting on $\cM$ stays on $\cM$ if and only if $\partial_t\psi(t)$ belongs to the tangent space $\cT_{\psi(t)}$ at all times}:
\begin{equation}
\partial_t\psi(t)\in \cT_{\psi(t)}.
\label{eq:stays_on_M}
\end{equation}

\subsection{Time-dependent constraints}
\label{sec:timedepcons}
Let us next explain the necessary modifications in the case of time-dependent constraint values $o_m(t)$, that is, a set $\cM(t)$ of constrained states that is moving with time. 
Here we again restrict ourselves to states $\psi(t)$ for which $S^{\psi(t)}$ is invertible.
A trajectory now satisfies $\psi(t)\in\cM(t)$ if and only if
$$2\Re\pscal{\partial_t\psi(t),\cO_m\psi(t)}=o_m'(t).$$
Let $\nu_{\psi(t)}$ denote the unique vector in the normal space $\cN_{\psi(t)}$ such that, for all $m$,
\begin{equation}
2\Re\pscal{\nu_{\psi(t)},\cO_m\psi(t)}=o_m'(t).
\label{eq:def_nu_psi}
\end{equation}
Note that existence and uniqueness of such a $\nu_{\psi(t)}$ follow from the fact that \eqref{eq:def_nu_psi} consists of $M$ linearly independent equations, and that $\cN_{\psi(t)}$ is $M$-dimensional.
From the definition~\eqref{eq:N_psi_M} of $\cN_\psi$, we can write \begin{equation}\label{eq:nupsi}
\nu_{\psi(t)}=\sum_{m=1}^Mc_{\psi,m}(t)\cO_m\psi(t)
\end{equation}
with $c_{\psi,m}(t)\in\R$ and thus obtain the linear equation in~$\R^M$
\begin{equation}
S^{\psi(t)}c_\psi(t)= \frac{o'(t)}{2},
\label{eq:def_nu_psi_linear}
\end{equation}
where $o'(t) \in \R^M$ is the column vector with entries $o_m'(t)$, $m=1,...,M$.
The latter admits a unique solution because we always assume that we work in the region where $S^{\psi(t)}$ is invertible. We conclude that \emph{a trajectory $t\mapsto\psi(t)$ starting on $\cM(0)$ satisfies the time-dependent constraints if and only if}
\begin{equation}
\boxed{\partial_t\psi(t)-\nu_{\psi(t)}\in\cT_{\psi(t)}}
\label{eq:stays_on_Mt}
\end{equation}
with $\nu_{\psi(t)}$ defined by~\eqref{eq:nupsi}-\eqref{eq:def_nu_psi_linear}.
The velocity $\partial_t\psi(t)$ must therefore be the sum of two terms. The first term $\nu_{\psi(t)}$ is here to reproduce the normal displacement of the tangent space due to the time variations of the constraints. The second term must be in $\cT_{\psi(t)}$ to ensure that the trajectory continues to fulfill the constraints at all times. When the constraints do not depend on time the unique solution to the linear equation~\eqref{eq:def_nu_psi} is $\nu_{\psi(t)}=0$ and we recover the condition~\eqref{eq:stays_on_M}.

As a conclusion, we have to modify the reference Schrödinger equation~\eqref{eq:Schrodinger_unconstrained} to enforce the property~\eqref{eq:stays_on_Mt}. Writing the modified Schrödinger equation in the form~\eqref{eq:Schrodinger_perturbed}, we arrive at the condition
\begin{equation}
-iH(t)\psi(t)-iF\big(t,\psi(t)\big)\psi(t)-\nu_{\psi(t)}\in\cT_{\psi(t)}
\label{eq:constraint_correction_term}
\end{equation}
on the correction term $F\big(t,\psi(t)\big)$. We see that there are many possible choices, because we can add an arbitrary vector in $i\cT_{\psi(t)}$ to $F\big(t,\psi(t)\big) \psi(t)$ and still obtain~\eqref{eq:constraint_correction_term}.
More precisely, \eqref{eq:constraint_correction_term} says that the space of possible correction terms is an affine space, meaning that if $F_0(t,\psi)$ and $F_1(t,\psi)$ are two possible correction terms at given time $t$ and state $\psi$, then $F_\lambda(t,\psi) = \lambda F_1(t,\psi) + (1-\lambda) F_0(t,\psi)$ is a possible correction as well for any real number $\lambda$. 
In this paper, we discuss multiple natural choices of $F$ and study the resulting equations.

Historically, this problem was studied first by Dirac~\cite{Dirac-30} and Frenkel~\cite[p.~253]{Frenkel-34}. These authors were however considering the manifold of Slater determinants, that has a natural complex structure, so that several of the difficulties we will encounter do not occur in their case. Meyer, K\v{u}car and Cederbaum were probably the first to notice in~\cite{KucMeyCed-88} that several definitions of the constrained Schrödinger dynamics that were the same for Dirac and Frenkel can give different results in other situations. The general theory was further developed in~\cite{BroLatKesLeu-88,Raab-00,HacGuaShiHaeDemCir-20,MarBur-20,LasSu-22} but, to the best of our knowledge, it was never applied to the case of constraining expectation values $\pscal{\psi(t),\cO_m\psi(t)}$. This is the situation of interest for TDDFT. 
In the next two sections we describe the main two methods used in practice to choose the correction term $F$ so that the solution to the modified Schr\"odinger equation stays on the considered manifold at all times, and specify the expressions of $F$ in the case of a manifold defined by such constrained expectation values. We will also introduce a new family of constrained Schrödinger equations that interpolate between the previous two methods, which does not seem to have been considered so far.

\section{Variational principle}\label{sec:TDVP}
We first consider the variational principle, which is the traditional approach in  TDDFT.

\subsection{Stationarity of the action}
Dirac~\cite{Dirac-30} was the first to emphasize the importance of the symplectic structure of Schrödinger's equation and to mention that the exact Schrödinger trajectory~\eqref{eq:Schrodinger_unconstrained} can be recovered by requiring the action to be stationary, similarly as the Wei\ss~action principle in classical mechanics. This point of view was further developed by Kramer and Saraceno in their famous book~\cite{KraSar-81}.

In this section, we investigate under which conditions on the observables $\cO_m$ one can define a unique constrained dynamics solely based on the \textbf{stationarity of the action}. 
As we will recall below, standard TDDFT fits into this framework by applying the action principle to the wavefunction at fixed density (and not to the density itself as in~\cite{RunGro-84,vanLeeuwen-98,Vignale-08}). Because of this link with standard TDDFT, it is important to understand how the action principle works (or, in fact, why it does not work so easily in this particular case).

As before, we first look at time-independent constraint values $o_m$. We give ourselves a state $\psi_0$ in $\cM$ and a trajectory $t\in[0,T]\mapsto \psi(t)$ drawn on $\cM$ with $\psi(0)=\psi_0$. We assume that $\psi(t)$ satisfies $\det(S^{\psi(t)})\neq0$ for all $0\leq t<T$, so that the trajectory stays in a region where $\cM$ is a smooth manifold. We also assume that $\psi$ is continuously differentiable in $t$. The final time $T$ is rather arbitrary and the hope is that it can be sent to infinity, or at least varied as we wish. In practice $T$ will only be finite when the trajectory reaches the boundary~$\cS$ of the manifold $\cM$, that is, for which $\det(S^{\psi(T)})=0$.

We say that a trajectory $t\mapsto \psi(t)$ satisfies the \textbf{variational principle (VP)} if the action functional
\begin{equation}
\cA[\phi]:=\int_0^T\pscal{\phi(t),(i\partial_t-H(t))\phi(t)}\,\dt
\label{eq:action}
\end{equation}
is stationary at the trajectory $\psi$. This means that if we consider trajectories $\phi$ that are obtained by small deformations of the trajectory $\psi$, fixing the two end points $\phi(0)=\psi(0)$ and $\phi(T)=\psi(T)$, the action should only vary to second order in the displacement $\phi-\psi$. The difficulty here is of course that we work in a curved space, hence we have to deform $\psi$, while staying on the manifold $\cM$, in order to impose the constraints. Fortunately, the assumption that $S^{\psi(t)}$ is invertible for all $0\leq t< T$ implies that there is a tubular neighborhood around the trajectory, where we can easily move the curve, by essentially pushing it in directions belonging to the tangent space (see Figure~\ref{fig:TDVP}).

Using integration by parts for the term involving the time derivative, we can write
\begin{multline*}
\cA[\phi]-\cA[\psi]=2\int_0^T \!\! \Re\pscal{\phi(t)-\psi(t),(i\partial_t-H(t))\psi(t)}\,\dt\\
+\cA[\phi-\psi]
\end{multline*}
for any continuously differentiable trajectory $\phi$ such that $\phi(0)=\psi(0)$, $\phi(t) \in \mathcal M$ for all $t \in [0,T]$,  and $\phi(T)=\psi(T)$. The second term on the right-hand side is quadratic in the displacement and thus, we require that the first term vanishes. As $\phi(t)$ and $\psi(t)$ are both on $\mathcal M$, to first order, the small displacement $\phi(t)-\psi(t)$ must be in the tangent space, and
since we can move parts of the path independently from each other, this leads to the local-in-time condition that
$$\Re\pscal{h(t),i\partial_t\psi(t) -H(t)\psi(t)}=0,\qquad \forall h(t)\in \cT_{\psi(t)}$$
or, in other words,
\begin{equation}
\boxed{i\partial_t\psi(t)-H(t)\psi(t)\in \cN_{\psi(t)}.}
\label{eq:TDVP_pre}
\end{equation}
The details of this argument can for instance be read in~\cite[Prop.~3.1]{LasSu-22}. Stationarity of the action can therefore be re-interpreted as the requirement that the residue of the unconstrained Schrödinger equation $i\partial_t\psi(t)-H(t)\psi(t)$ belongs to the normal space of $\psi(t)$ all along the trajectory. This is to make it orthogonal to the small deformations of the path that can only happen in the directions of the tangent space. We emphasize once again the importance of the real structure of the problem in this result due to the appearance of the real scalar product.

In the literature, the above principle is often called the time-dependent variational principle (TDVP)~\cite{KraSar-81,KucMeyCed-88,MarBur-20}, the {Lagrangian action principle}~\cite{HacGuaShiHaeDemCir-20} or the {Kramer--Saraceno principle}~\cite{LasSu-22}. We simply call it the \textbf{variational principle (VP)} for brevity.

\begin{figure}[t]
\includegraphics{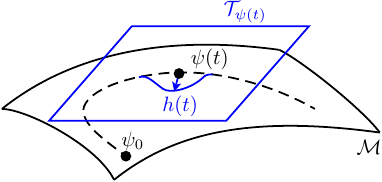}
\caption{In the variational principle, an optimal trajectory $t\mapsto\psi(t)$ is by definition such that the action functional is stationary against small deformations staying on the constraining manifold $\cM$. To first order, such deformations $h(t)$ are vectors in the tangent space $\cT_{\psi(t)}$.
\label{fig:TDVP}}
\end{figure}

So far everything was very general and holds for any kind of manifold. For constraints of the form~\eqref{eq:constraints_Oj} the normal space is given by~\eqref{eq:N_psi_M} and we conclude from~\eqref{eq:TDVP_pre} that we must find 
some real numbers $v_m(t)$ such that
\begin{equation}
\boxed{i\partial_t\psi(t)=\left(H(t)+\sum_{m=1}^Mv_m(t)\,\cO_m\right)\psi(t).}
\label{eq:TDVP}
\end{equation}
This is a very natural modified Schrödinger equation. The action principle says that we should add to our Hamiltonian $M$ time-dependent Lagrange multipliers $v_m(t)$ in order to enforce the constraints $\pscal{\psi(t),\cO_m\psi(t)}=o_m$. We can interpret the $v_m(t)$'s as some kind of external potential used to adjust the desired expectation values. The difficulty is, of course, that the $v_m(t)$'s are unknown and must be found. This is not such an easy task because, for time-independent constraint values $o_m$, we must also fulfill the condition~\eqref{eq:stays_on_M}, and it is not so clear how adding to $-iH(t)\psi(t)$ a vector $-i \sum v_m(t) \cO_m \psi(t) \in i\cN_{\psi(t)}$ could be useful to make the resulting $\partial_t\psi(t)$ belong to $\cT_{\psi(t)}$.

Before we discuss the solution of this problem, let us first turn to time-dependent constraint values $o_m(t)$. If we keep the action principle as it is without change, we arrive at the exact same condition~\eqref{eq:TDVP_pre}, and thus the same
constrained Schrödinger equation~ \eqref{eq:TDVP}. 
This is because the {variation of the} action involves the difference $\phi-\psi$ and therefore the additional vectors $\nu_{\phi(t)} \in \mathcal N_{\phi(t)}$  and $\nu_{\psi(t)} \in \mathcal N_{\psi(t)}$ appearing in the condition~\eqref{eq:stays_on_Mt} are equal at first order. We therefore look for $v_m(t)$ such that
$$-i\left(H(t)+\sum_{m=1}^Mv_m(t)\,\cO_m\right)\psi(t)-\nu_{\psi(t)}\in\cT_{\psi(t)}$$
and we hope that there is a unique solution to this problem.
Neither existence nor uniqueness of such $v_m(t)$'s is obvious. In fact, in TDDFT the existence of such $v_m(t)$'s is closely linked to the question of time-dependent $v$-representability.

\subsection{The symplectic case}
It is possible to solve the above problem under a very natural assumption on $\psi_0$ that involves the \textbf{symplectic structure} of $\R^{2d}$~\cite{RowRymRos-80,KraSar-81}. As we will explain below, this strategy is only valid for some families of observables $(\cO_m)_{m=1}^M$ and \textbf{does not work at all for TDDFT}. We think it is important to explain it in detail to clarify the situation and better emphasize the difficulties of TDDFT. Moreover, the symplectic case is interesting since it may be applied to current-density functional theory, which we consider in \Cref{sec.current.DFT}.

The argument relies on a new $M\times M$ antisymmetric matrix defined by
\begin{equation}
(A^{\psi})_{mn}:=\Im\pscal{\cO_m\psi,\cO_n\psi}=\pscal{\psi,\frac{i[\cO_n,\cO_m]}{2}\psi}
\label{eq:A_symplectic}
\end{equation}
with $[A,B]:=AB-BA$ denoting the commutator of the operators $A$ and $B$.
It is defined similarly as the overlap matrix $S^\psi$ in~\eqref{eq:overlap} but with an imaginary part instead of a real part. The imaginary part is reminiscent of the symplectic structure but we shall not give more details on this. As long as the matrix $A^{\psi(t)}$ is invertible along the dynamics, the following theorem provides existence and uniquess of the $v_m(t)$'s in~\eqref{eq:TDVP}.

\begin{theorem}[Variational principle in symplectic case]\label{thm:TDVP_symplectic}
Let $\cO_1,...,\cO_M$ be a family of $d\times d$ Hermitian matrices.
Assume that on some time interval $[0,T]$, we have
\begin{itemize}
\item a continuous map $t\mapsto H(t)$ of Hermitian matrices;
\item $M$ continuously-differentiable functions $t\mapsto o_1(t),...,o_M(t)$;
\item an initial state $\psi_0\in\C^d$ satisfying the constraints $\pscal{\psi_0,\cO_m\psi_0}=o_m(0)$ for $m=1,...,M$, as well as $\|\psi_0\|=1$.
\end{itemize}
We further assume that $A^{\psi_0}$ is invertible. Then there exists a maximal time $0<T'\leq T$ and \textbf{uniquely defined continuous functions} $t \mapsto v_1(t),...,v_M(t)$ on $[0,T')$ such that the solution $\psi(t)$ to the equation~\eqref{eq:TDVP} with $\psi(0)=\psi_0$ satisfies the constraints $\pscal{\psi(t),\cO_m\psi(t)}=o_m(t)$ for all $m=1,...,M$
and $A^{\psi(t)}$ stays invertible for all $0\leq t<T'$.
\end{theorem}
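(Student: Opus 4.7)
The plan is to convert the requirement ``$\psi(t)$ obeys \eqref{eq:TDVP} and $\langle\psi(t),\cO_m\psi(t)\rangle=o_m(t)$'' into a closed nonlinear ODE for $\psi(t)$ by solving for the vector $v(t)=(v_1(t),\dots,v_M(t))^\top$ in an explicit $M\times M$ linear system, and then invoke the Cauchy--Lipschitz theorem. The crucial observation I would work to make explicit is that the matrix governing this linear system is not the overlap matrix $S^{\psi(t)}$ but the antisymmetric matrix $A^{\psi(t)}$ of \eqref{eq:A_symplectic}, whose invertibility is precisely the standing hypothesis.

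First I would differentiate the constraint $\langle\psi(t),\cO_m\psi(t)\rangle=o_m(t)$ in $t$ and substitute the ansatz \eqref{eq:TDVP}. Using $\Re\langle -iX\psi,\cO_m\psi\rangle = -\Im\langle\psi,X\cO_m\psi\rangle$ for Hermitian $X$, a short calculation reduces constraint preservation to the real linear system
$$2\,A^{\psi(t)}\,v(t) \;=\; o'(t) + 2\,b(t,\psi(t)), \qquad b_m(t,\psi):=\Im\langle\psi,H(t)\cO_m\psi\rangle.$$
Both $b(t,\psi)\in\R^M$ and $A^\psi$ is real antisymmetric, so any solution $v(t)$ is automatically real-valued, which is what we need to keep the effective generator Hermitian (and hence $\|\psi(t)\|=1$). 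Since $A^{\psi_0}$ is invertible and $\psi\mapsto A^\psi$ is smooth, $A^\psi$ stays invertible on some open neighborhood $\cU\subset\C^d$ of $\psi_0$, where we may explicitly set
$v(t,\psi):=\tfrac{1}{2}(A^\psi)^{-1}\bigl(o'(t)+2b(t,\psi)\bigr)$.

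Substituting this expression back into \eqref{eq:TDVP} yields a closed ODE whose right-hand side is continuous in $t$ and $C^1$ (hence locally Lipschitz) in $\psi$ on $\cU$. The Cauchy--Lipschitz theorem then produces a unique maximal solution on some interval $[0,T')$, and I would let $T'$ be the largest time such that $\psi(t)$ stays in $\cU$, i.e., $A^{\psi(t)}$ remains invertible. Constraint satisfaction then follows by construction: by the linear system, $\tfrac{d}{dt}\langle\psi(t),\cO_m\psi(t)\rangle = o_m'(t)$ pointwise on $[0,T')$, and integrating with the initial data $\langle\psi_0,\cO_m\psi_0\rangle=o_m(0)$ gives $\langle\psi(t),\cO_m\psi(t)\rangle=o_m(t)$. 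Norm preservation follows from Hermiticity of $H(t)+\sum_m v_m(t)\cO_m$, and continuity of the $v_m(t)$ is immediate from the explicit formula together with the continuity of $t\mapsto H(t)$, $t\mapsto o'(t)$, and $t\mapsto\psi(t)$.

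The main obstacle is conceptual rather than technical: recognizing that $A^\psi$ (and not $S^\psi$) is the matrix controlling solvability. A priori one might expect the Gram matrix $S^\psi$ to appear, as it does for the purely geometric displacement $\nu_\psi$ in \eqref{eq:def_nu_psi_linear}. The reason $A^\psi$ appears instead is that the Lagrange multipliers enter the dynamics multiplied by $-i$, and this factor rotates the real pairing $\Re\langle\cO_m\psi,\cO_n\psi\rangle$ into the imaginary pairing $\Im\langle\cO_m\psi,\cO_n\psi\rangle=(A^\psi)_{mn}$. Thus the variational principle closes exactly when $A^\psi$ is invertible; this is a nontrivial condition which, as the authors emphasize, fails identically for TDDFT because commuting local densities give $[\cO_n,\cO_m]=0$ and hence $A^\psi\equiv 0$.
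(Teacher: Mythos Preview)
Your proof is correct and follows essentially the same approach as the paper: differentiate the constraints, recognize that the resulting linear system for $v(t)$ is governed by the antisymmetric matrix $A^{\psi(t)}$, invert it to obtain an explicit formula $v=V(t,\psi)$, substitute back to close the ODE, and invoke Cauchy--Lipschitz. Your formula $2A^\psi v = o' + 2b$ with $b_m=\Im\langle\psi,H\cO_m\psi\rangle=\Im\langle H\psi,\cO_m\psi\rangle$ matches the paper's $v=(A^\psi)^{-1}b$ with $b_m=o_m'/2+\Im\langle H\psi,\cO_m\psi\rangle$, and your concluding remarks on why $A^\psi$ rather than $S^\psi$ appears are apt.
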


The theorem says that the inverse problem of finding the $v_m(t)$'s from the constraints can always be solved uniquely for some short time and then continues to do so as long as the antisymmetric matrix $A^{\psi(t)}$ stays invertible. If $T'<T$ then we must have $\det(A^{\psi(T')})=0$. The invertibility of $A^{\psi}$ requires in particular that we have an even number $M$ of constraints, because an antisymmetric matrix in odd dimension always has $0$ in its spectrum.

Notice that the invertibility of the matrix $A^\psi$ implies the invertibility of $S^\psi$. (If $\cO_{k_0}\psi=\sum_{k\neq k_0}\alpha_k\cO_k\psi$ for some real coefficients $\alpha_k$, then the $k_0$-th column of $A^\psi$ is the same linear combination of the other columns.) In particular, the trajectory obtained in the theorem solves the variational principle. Of course it could well be that $A^{\psi(t)}$ ceases to be invertible at a time $T'$ whereas $S^{\psi(t)}$ still is. In that case we have nothing to say about what happens at later times.

Let us finally remark that there is no need to impose the normalization constraint $\|\psi(t)\|=1$ in this theory. The latter is automatically satisfied for the solution to~\eqref{eq:TDVP} because it is a Schrödinger equation involving a (time-dependent) Hermitian matrix. In fact, we cannot add the identity matrix to the list of the constrained observables $\cO_m$ because this would make the matrix $A^{\psi}$ not invertible.

Let us now quickly describe the proof of Theorem~\ref{thm:TDVP_symplectic}. We need to show that the $v_m(t)$'s exist and are uniquely defined. To this end we differentiate the constraints and obtain
\begin{align}
o_m'(t)&=\pscal{\psi(t),i\left[H(t)+\sum_{n=1}^M v_n(t)\cO_n,\cO_m\right]\psi(t)}\nn\\
&=\pscal{\psi(t),i[H(t),\cO_m]\psi(t)}+2\sum_{n=1}^M(A^{\psi(t)})_{mn}v_n(t),\label{eq:diff_constraint_TDVP}
\end{align}
where we have used that $\pscal{\psi,i[\cO_n,\cO_m]\psi}=2A^\psi_{mn}$.
This is how the antisymmetric matrix $A^\psi$ arises. We have assumed that $A^{\psi_0}$ is invertible and we look for continuous functions $t \mapsto v_m(t)$ and a continuously-differentiable $t \mapsto \psi(t)$. By continuity of the determinant we conclude that $A^{\psi(t)}$ must stay invertible for some short time. Inverting $A^{\psi(t)}$ in~\eqref{eq:diff_constraint_TDVP} allows us to express the $v_m(t)$'s as functions of the current time $t$ and $\psi(t)$. In fact, denoting by $v(t)$ the column vector whose components are the sought-after $v_m(t)$, we conclude that $v(t)=V\big(t,\psi(t)\big)$ with
\begin{align}
V(t,\psi)&:=(A^{\psi})^{-1}\, b(t,\psi),\nn\\
b_m(t,\psi)&:=\frac{o_m'(t)}2+\Im\pscal{H(t)\psi,\cO_m\psi}.
\label{eq:def_v_NL}
\end{align}
This proves that $\psi(t)$ must solve the highly nonlinear Schrödinger equation
\begin{equation}
i\partial_t\psi(t)=\left(H(t)+\sum_{m=1}^M V_m \big(t,\psi(t)\big)\cO_m\right)\psi(t).
\label{eq:TDVP_NL}
\end{equation}
Conversely, if we are able to solve this equation we obtain some $v_m(t)$'s solving~\eqref{eq:diff_constraint_TDVP}. Integrating in time and using that $\pscal{\psi_0,\cO_m\psi_0}=o_m(0)$ we conclude that the desired constraints hold. The existence of a unique solution to~\eqref{eq:TDVP_NL} follows from the Cauchy--Lipschitz theorem, using that the right-hand side is continuous in $t$ and Lipschitz in $\psi$, as long as $A^\psi$ remains invertible. This concludes the proof of the theorem.

\medskip
The correction term is thus
$$F(t,\psi(t)):=\sum_{m=1}^M V_m(t,\psi(t))\cO_m,$$
and $F(t,\psi(t)) \psi(t)$ is related to the \emph{symplectic projection} of $-iH(t)\psi(t)-\nu_{\psi(t)}$ on $\cT_{\psi(t)}$, which is why the symplectic matrix $A^\psi$ arises in this theory (see~\cite{LasSu-22} for more details).

As a last remark, let us consider the solution $\psi^\text{S}(t)$ to the reference Schrödinger equation~\eqref{eq:Schrodinger_unconstrained} and assume that the matrix $A^{\psi^\text{S}(t)}$ is invertible for all $0\leq t< T$. The uniqueness in Theorem~\ref{thm:TDVP_symplectic} implies that if we choose $o_m(t):=\pscal{\psi^\text{S}(t),\cO_m\psi^\text{S}(t)}$ then we must have $v_m(t)=0$ and $\psi(t)=\psi^\text{S}(t)$. It is reassuring that the variational principle does nothing if the constraints are already satisfied, at least under the invertibility assumption on $A^{\psi^\text{S}(t)}$.

\subsection{The case of commuting observables}\label{sec:TDVP_commuting}
We cannot expect that the matrix $A^\psi$ will always be invertible and, in fact, in many cases it never is. In this subsection we consider the special case where the observables $\cO_m$'s commute:
$$\cO_m\cO_n=\cO_n\cO_m, \qquad m,n=1,...,M.$$
Then we have $A^\psi\equiv0$ for every $\psi$ and the above symplectic theory does not work at all. This is the situation in TDDFT since the observables $\cO_m =\sum_{\sigma\in\{\uparrow,\downarrow\}}a_{m\sigma}^\dagger a_{m\sigma}$ commute.

In this case it is possible, but harder, to express the sought-after potentials $v_m(t)$ in terms of $\psi(t)$. Differentiating once as in~\eqref{eq:diff_constraint_TDVP} we obtain, using that the $\cO_m$'s commute,
\begin{equation}
o_m'(t)=\pscal{\psi(t),i[H(t),\cO_m]\psi(t)}.
\label{eq:condition_supp_TDVP}
\end{equation}
While, in the symplectic case when $A^{\psi(t)}$ is invertible, Eq.~\eqref{eq:diff_constraint_TDVP} allows one to identify the $v_m$'s, Eq.~\eqref{eq:condition_supp_TDVP} does not provide any information on the sought-after potentials. On the other hand, evaluating this relation at time $t=0$, we see that we obtain new constraints we had not anticipated! We have to require that the initial state $\psi_0$ satisfies
\begin{equation}
o_m'(0)=-2\Im\pscal{H(0)\psi_0,\cO_m\psi_0},\qquad m=1,...,M,
\label{eq:derivative}
\end{equation}
otherwise there is no hope of finding a solution. In other words, not all initial conditions lead to a well-defined trajectory. For instance, if $\psi_0$ is an eigenfunction of $H(0)$, then $o_m'(0)=0$. This difficulty is well explained in a different context in~\cite{RowRymRos-80}.

Differentiating once more, we obtain a linear equation involving the $v_m(t)$'s:
\begin{align}
o_m''(t) & = -\pscal{\psi(t),[H(t),[H(t),\cO_m]]\psi(t)}
\nonumber
\\ & \quad 
+\pscal{\psi(t),i[H'(t),\cO_m]\psi(t)}
\nonumber
\\ & \quad 
-\sum_{n=1}^M v_n(t)\pscal{\psi(t),[\cO_n,[H(t),\cO_m]]\psi(t)}.
\label{eq:vanLeeuwen_finite_dim}
\end{align}
We will call this the \textbf{van Leeuwen equation} because for TDDFT this is exactly the fundamental equation appearing in Ref.~\cite{vanLeeuwen-99}. 
The new relation~\eqref{eq:vanLeeuwen_finite_dim} suggests to introduce a new real symmetric matrix
\begin{align}
K_{mn}^{\psi}(t)&:=\frac12\pscal{\psi,[\cO_n,[H(t),\cO_m]]\psi}\label{eq:matrix_M_TDVP}\\
&=\Re\pscal{\cO_n\psi,H(t)\cO_m\psi}-\Re\pscal{H(t)\psi,\cO_m\cO_n\psi}.\nn
\end{align}
The matrices $S^\psi$ and $A^\psi$ only involve the observables $\cO_m$'s and can thus be interpreted as purely geometric objects. By contrast, the new matrix $K^{\psi}(t)$ involves the Hamiltonian $H(t)$, hence is model-dependent. 

The new constraint~\eqref{eq:derivative} is, loosely speaking, because the inverse problem of finding the $v_m$'s is of order two in time instead of order one as it was in the symplectic case. 
By arguing exactly as before with the addition initial condition~\eqref{eq:derivative}, we can prove the following.

\begin{theorem}[Variational principle for commuting observables]\label{thm:TDVP_commuting}
Let $\cO_1,...,\cO_M$ be a family of \textbf{commuting} $d\times d$ Hermitian matrices.
Assume that on some time interval $[0,T]$, we have
\begin{itemize}
\item a continuously-differentiable map $t\mapsto H(t)$ of Hermitian matrices;
\item $M$ twice continuously-differentiable functions $t\mapsto o_1(t),...,o_M(t)$;
\item a normalized initial state $\psi_0$ satisfying the two constraints
\begin{align*}
\pscal{\psi_0,\cO_m\psi_0}&=o_m(0),\\
2\Im\pscal{H(0)\psi_0,\cO_m\psi_0}&=-o_m'(0),
\end{align*}
for $m=1,...,M$.
\end{itemize}
We further assume that the $M\times M$ symmetric matrix $K^{\psi_0}(0)$ defined in~\eqref{eq:matrix_M_TDVP} is invertible. Then there exists a maximal time $0<T'\leq T$ and \textbf{uniquely defined continuous functions} $t \mapsto v_1(t),...,v_M(t)$ on $[0,T']$ such that the solution $\psi(t)$ to the equation~\eqref{eq:TDVP} with $\psi(0)=\psi_0$ satisfies the constraints $\pscal{\psi(t),\cO_m\psi(t)}=o_m(t)$ for all $m=1,...,M$
and $K^{\psi(t)}(t)$ stays invertible for all $0\leq t<T'$.
\end{theorem}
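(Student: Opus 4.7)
The proof follows the same overall strategy as Theorem~\ref{thm:TDVP_symplectic}, with the essential modification that since the $\cO_m$'s commute, the constraints must be differentiated \emph{twice} rather than once to identify the multipliers $v_m(t)$. Indeed, Eq.~\eqref{eq:condition_supp_TDVP} shows that the first time derivative of $\pscal{\psi(t),\cO_m\psi(t)}$ is blind to the $v_n$'s, so the sought-after relation between $v(t)$ and $\psi(t)$ can only come from the second-order relation~\eqref{eq:vanLeeuwen_finite_dim}. The first step is to rewrite that equation as the linear system $2K^{\psi(t)}(t)\,v(t) = b\bigl(t,\psi(t)\bigr)$, with
$$b_m(t,\psi) := -o_m''(t) - \pscal{\psi,[H(t),[H(t),\cO_m]]\psi} + \pscal{\psi,i[H'(t),\cO_m]\psi},$$
and to define, on a neighborhood of $(0,\psi_0)$ on which $K^\psi(t)$ remains invertible (guaranteed by the hypothesis $\det K^{\psi_0}(0)\neq 0$ and continuity of the determinant), the map $V(t,\psi) := \tfrac{1}{2}(K^\psi(t))^{-1} b(t,\psi)$. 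The entries of $K^\psi(t)$ and $b(t,\psi)$ are quadratic in $\psi$ and continuous in $t$ (using the continuous differentiability of $H$ and $o_m$), so $V$ is continuous in $t$ and locally Lipschitz in $\psi$.

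I would then apply the Cauchy--Lipschitz theorem to the nonlinear Schrödinger equation
$$i\partial_t\psi(t) = \Bigl(H(t) + \sum_{m=1}^M V_m\bigl(t,\psi(t)\bigr)\,\cO_m\Bigr)\psi(t),\qquad \psi(0)=\psi_0,$$
to obtain a unique continuously-differentiable solution on a maximal time interval $[0,T')\subseteq[0,T]$ along which $K^{\psi(t)}(t)$ remains invertible, and set $v_m(t):=V_m(t,\psi(t))$. Uniqueness of the $v_m(t)$'s on $[0,T')$ then follows from the invertibility of $K^{\psi(t)}(t)$, which makes the linear system for $v(t)$ uniquely solvable at each time.

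The final step is to verify that the constructed trajectory actually satisfies the constraints. Setting $f_m(t) := \pscal{\psi(t),\cO_m\psi(t)} - o_m(t)$, the first initial condition gives $f_m(0)=0$, and the commuting computation~\eqref{eq:condition_supp_TDVP} combined with the second initial condition gives $f_m'(0)=0$. Differentiating $f_m'$ once more along the nonlinear Schrödinger equation and invoking the relation $2K^{\psi(t)}(t)\,v(t) = b(t,\psi(t))$ forces $f_m''(t)\equiv 0$ on $[0,T')$, so integrating twice from $t=0$ yields $f_m\equiv 0$, as desired. The main obstacle, relative to the symplectic case, is structural rather than analytic: first-order differentiation is blind to the $v_m$'s, so one is forced to use the second-order van Leeuwen relation, which in turn necessitates both the extra compatibility condition on $\psi_0$ involving $o_m'(0)$ and the additional differentiability assumptions on $H$ and on the $o_m$'s. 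Once these are in place, the Cauchy--Lipschitz machinery carries over essentially unchanged.
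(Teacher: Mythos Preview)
Your proposal is correct and follows essentially the same approach as the paper. The paper's own proof is just the sentence ``By arguing exactly as before with the addition[al] initial condition~\eqref{eq:derivative}'', and you have correctly fleshed out what that means: invert the van Leeuwen relation~\eqref{eq:vanLeeuwen_finite_dim} to define $V(t,\psi)$, apply Cauchy--Lipschitz to the resulting nonlinear Schr\"odinger equation, and then integrate twice using both initial compatibility conditions to recover the constraints.
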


Exactly as before we note that the invertibility of $K^{\psi(t)}(t)$ implies the full-rank condition~\eqref{eq:full_rank}, i.e., the invertibility of $S^{\psi(t)}$, hence the stationarity of the action for the trajectory $\psi(t)$ provided by the theorem. Also, the normalization $\|\psi(t)\|=1$ is automatic and the identity matrix should not be included in the list of the $\cO_m$'s, otherwise the matrix $K^{\psi(t)}(t)$ cannot be invertible.

If we interpret the $v_m(t)$'s as some kind of time-dependent local potentials, the result is of the same spirit as the Runge--Gross uniqueness theorem~\cite{RunGro-84} for TDDFT, with the important difference that we get both existence and uniqueness, and that we do not need any analyticity~\cite{FouLamLewSor-16}. In the context of TDDFT on a finite lattice, Theorem~\ref{thm:TDVP_commuting} was sketched in~\cite{FarTok-12} and~\cite[Sec.~4]{RugPenLee-15}. Equation~\eqref{eq:condition_supp_TDVP} is then a discrete analogue of the continuity equation.

\section{Geometric principle}\label{sec:McLachlan}
In this section we turn to a completely different principle for defining the constrained dynamics, which is more based on the geometric structure of the manifold of interest $\cM$. The output $\psi(t)$ will be very different.

\subsection{Geometric Schrödinger equation}

We again start with the case of time-independent constraint values $o_m$. We recall that $\cM$ is the regular part of the set defined in~\eqref{eq:manifold_constraints}, and $\cT_\psi$ and $\cN_\psi$ the tangent and normal spaces at $\psi\in\cM$ introduced in~\eqref{eq:T_psi_M} and~\eqref{eq:N_psi_M}. We work in the region $\cM$ in which the matrix
$$(S^\psi)_{mn}=\Re\pscal{\cO_m\psi,\cO_n\psi}$$
is invertible. Recall that any smooth trajectory $t\mapsto \psi(t)$ drawn on $\cM$ must satisfy $\partial_t\psi(t)\in\cT_{\psi(t)}$ for all $t$.

The \textbf{geometric principle (GP)} requires that, all along the trajectory, the vector \textbf{$\partial_t\psi(t)$ is ``the closest it can be'' to $-i H(t)\psi(t)$} (for the distance associated with the norm $\|\cdot\|$ of $\C^d$), within the tangent space $\cT_{\psi(t)}$. In other words, $\partial_t\psi(t)$ must be equal to the orthogonal projection of $-iH\psi(t)$ on $\cT_{\psi(t)}$ for the real inner product $\Re \langle \cdot,\cdot\rangle$ (see Figure~\ref{fig:McLachlan}). This is equivalent to saying that
\begin{equation}
\Re\pscal{h(t),\partial_t \psi(t)+iH(t)\psi(t)}=0,\qquad\forall h(t)\in\cT_{\psi(t)},
\nonumber
\label{}
\end{equation}
or, equivalently,
\begin{equation}
\boxed{\partial_t\psi(t)+iH(t)\psi(t)\in\cN_{\psi(t)}.}
\label{eq:McLachlan_pre}
\end{equation}
The interpretation is that we are trying to be as close as we can to the original Schrödinger equation, while satisfying the constraints. The reader is urged to notice the difference with the variational principle based on the action in~\eqref{eq:TDVP_pre}, which is similar to~\eqref{eq:McLachlan_pre} but with an additional $i$ in front. The condition~\eqref{eq:McLachlan_pre} can of course also be written as $i \partial_t\psi(t)- H(t)\psi(t)\in i\cN_{\psi(t)}$. In general the two spaces $i\cN_{\psi(t)}$ and $\cN_{\psi(t)}$ are different, hence these two conditions will usually give completely different solutions.

A manifold for which the tangent and normal spaces are complex-linear and not just real-linear (i.e., $i\cT_\psi=\cT_\psi$ and $i\cN_\psi=\cN_\psi$) is called a \emph{Kähler manifold}. For such a manifold, the variational principle~\eqref{eq:TDVP_pre} and the geometric principle~\eqref{eq:McLachlan_pre} give the same answer, possibly up to an irrelevant global phase factor. When the manifold is defined by fixing expectation values of some observables $\cO_m$, the spaces $\cT_\psi$ and $\cN_\psi$ will not be complex-linear, however. For instance, if the $\cO_m$'s are commuting matrices (as they are in TDDFT) then we have $\Im\pscal{\cO_m\psi,\cO_n\psi}=0$ for all $m,n$, as we have seen in the previous section. This implies $\Re\pscal{i\cO_m\psi,\cO_n\psi}=0$, that is $i\cN_\psi\subset\cT_\psi$.
More abstractly, the expectation value $\pscal{\psi, \cO_m \psi}$ is a function of both $\psi$ and $\psi^\dagger$ and hence not a holomorphic function of $\psi$. In particular, the manifold $\cM$, which is given by level sets of such functions, cannot be a Kähler manifold.

Since the condition~\eqref{eq:McLachlan_pre} is only based on an orthogonal projection relying on the geometric structure of the constraint manifold $\cM$, we call it the \textbf{geometric principle (GP)}. In the literature, the latter has been given different names. It is often called the \textbf{McLachlan principle}~\cite{McLachlan-64}. The interpretation in terms of a projection on the tangent space was already mentioned by Frenkel in~\cite[p. 253]{Frenkel-34} and, apparently, in an appendix of the Russian version of the book~\cite{Dirac-30} by Dirac (see the comments in~\cite{Lubich-08}). The McLachlan principle was rediscovered by Nazarov in~\cite{Nazarov-85,NazSil-24}.

For some time there was some controversy as to whether the variational and geometric principles are the same or not. In~\cite{LowMuk-72}, it was explained that they coincide if $\cT_\psi$ is complex-linear, but no comment was made about situations in which they would fail to coincide. Meyer, Ku\v{c}ar and Cederbaum~\cite{KucMeyCed-88} studied a particular non-Kähler manifold and were the first to emphasize the difference between the two conditions. They mention that ``the McLachlan principle offers a clearer and more appealing view in the way how an optimal result is determined''. A few months later, the authors of \cite{BroLatKesLeu-88} wrote that their ``investigation is motivated by the fact that considerable confusion and ambiguity exists in the literature concerning this question''.
The situation was reviewed and clarified in several recent works, including~\cite{Raab-00,HacGuaShiHaeDemCir-20,MarBur-20,LasSu-22}.

\begin{figure}[t]
\includegraphics[width=\columnwidth]{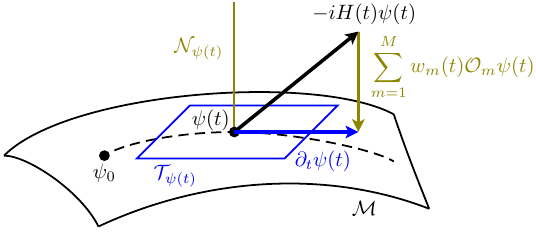}
\caption{In the geometric principle, an optimal trajectory $t\mapsto\psi(t)$ is by definition such that the tangent $\partial_t\psi(t)$ to the trajectory is at every time the orthogonal projection of $-iH(t)\psi(t)$ on the tangent space $\cT_{\psi(t)}$. This projection is $\partial_t \psi(t) = -iH(t)\psi(t) + \sum_{m=1}^M w_m(t) \cO_m(t)\psi(t)$ for some real numbers $w_m(t)$.
\label{fig:McLachlan}}
\end{figure}

Next, we go on with the study of the geometric principle. One difference compared with the variational principle is that we will have to put the normalization of the state, $\|\psi(t)\|=1$, in the list of constraints. More precisely, we require that the identity matrix  $I_d$ belongs to $\textrm{span}_\R(\cO_1,...,\cO_M)$. Using the formula in \eqref{eq:N_psi_M} of $\cN_\psi$, the condition~\eqref{eq:McLachlan_pre} can be re-expressed in the form
\begin{equation}
\boxed{i\partial_t\psi(t)=\left(H(t)+i\sum_{m=1}^Mw_m(t)\,\cO_m\right)\psi(t),}
\label{eq:McLachlan}
\end{equation}
where $w_m(t)$ are some real numbers.
This equation looks exactly like~\eqref{eq:TDVP} except for the additional $i$ in front of the corrective term. This is the general form of the geometric principle, which is thus very different from the variational principle in~\eqref{eq:TDVP}.

Because of the $i$, at first sight it may seem that we are introducing a non-Hermitian perturbation of the Hamiltonian $H(t)$. But this is not the best way to see it. Recall that this is a nonlinear equation where the $w_m(t)$'s depend on $\psi(t)$ itself, and not a general linear equation. In fact, since we add the preservation of the norm of $\psi(t)$ as a constraint, we can rewrite \eqref{eq:McLachlan} in an equivalent form involving a $\psi$-dependent Hermitian Hamiltonian, where the norm-preservation is more apparent. 
Indeed,
from the relation $(\rd/\dt) \pscal{\psi(t),\psi(t)}=0$, we find that
\begin{equation}
\sum_{m=1}^M w_m(t)\pscal{\psi(t),\cO_m\psi(t)}=0.
\label{eq:norm_preservation}
\end{equation}
This tells us that, for every time, $\psi(t)$ is orthogonal to the correction term $F\big(t,\psi(t)\big) \psi(t)$ where
$$F\big(t,\psi(t)\big):=i\sum_{m=1}^Mw_m(t)\,\cO_m.$$
Hence we can rewrite~\eqref{eq:McLachlan} in the equivalent form
\begin{equation}
\boxed{i\partial_t\psi(t)=\Big(H(t)+G\big(t,\psi(t)\big)\Big)\psi(t)}
\label{eq:McLachlan_rank2}
\end{equation}
with the Hermitian operator (for every given $\psi(t)$)
\begin{align}
G\big(t,\psi(t)\big)&=\big|F\big(t,\psi(t)\big)\psi(t)\big\rangle\big\langle\psi(t)\big|
\nonumber\\
&\qquad +\big|\psi(t)\big\rangle\big\langle F\big(t,\psi(t)\big)\psi(t)\big|\nn\\
&=i\left[\sum_{m=1}^Mw_m(t)\cO_m\,,\, |\psi(t)\rangle\langle\psi(t)|\right]\label{eq:geom_pert_pure}
\end{align}
describing the ``geometric'' modification to Schrödinger's equation. This is a nonlocal Hermitian perturbation of rank two that is very different from the simple potential-type perturbation we got in the variational principle.

Next we turn to time-dependent constraint values $o_m(t)$. Recall that we need to fulfill the condition~\eqref{eq:stays_on_Mt} that $\partial_t\psi(t)-\nu_{\psi(t)}\in\cT_{\psi(t)}$. The geometric principle simply requires that $\partial_t\psi(t)-\nu_{\psi(t)}$ be the orthogonal projection of $-iH(t)\psi(t)$ onto $\cT_{\psi(t)}$, leading to the condition that
$$\partial_t\psi(t)-\nu_{\psi(t)}+iH(t)\psi(t)\in\cN_{\psi(t)}.$$
Since $\nu_{\psi(t)}$ already belongs to the normal space $\cN_{\psi(t)}$ by definition, the resulting equation takes the exact same form as in~\eqref{eq:McLachlan} for time-dependent constraints. The potential $w_m$ appearing there is the sum of the component corresponding to the displacement of the tangent space and the one associated with the geometric projection.

Next we discuss the existence and uniqueness of the $w_m(t)$'s in~\eqref{eq:McLachlan}. The statement is the following.

\begin{theorem}[Geometric principle]\label{thm:McLachlan}
Let $\cO_1,...,\cO_M$ be a family of Hermitian $d\times d$ matrices with $I_d\in\tspan_\R(\cO_1,...,\cO_M)$. Assume that on some time interval $[0,T]$, we have
\begin{itemize}
\item a continuous map $t\mapsto H(t)$ of Hermitian matrices;
\item $M$ continuously-differentiable functions $t\mapsto o_1(t),...,o_M(t)$;
\item a normalized initial state $\psi_0\in\C^d$ satisfying the constraints $\pscal{\psi_0,\cO_m\psi_0}=o_m(0)$ for $m=1,...,M$.
\end{itemize}
We further assume that the $M\times M$ symmetric matrix $S^{\psi_0}$ is invertible. Then there exists a maximal time $0<T'\leq T$ and \textbf{uniquely defined continuous functions} $t \mapsto w_1(t),...,w_M(t)$ on $[0,T')$ such that the solution $\psi(t)$ to the equation~\eqref{eq:McLachlan} with $\psi(0)=\psi_0$ satisfies the constraints $\pscal{\psi(t),\cO_m\psi(t)}=o_m(t)$ for all $m=1,...,M$
with $S^{\psi(t)}$ staying invertible for all $0\leq t<T'$.
\end{theorem}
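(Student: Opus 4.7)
The plan is to mirror the proof of Theorem~\ref{thm:TDVP_symplectic}, replacing the antisymmetric matrix $A^\psi$ by the symmetric overlap matrix $S^\psi$. The extra factor $i$ in front of the correction term in~\eqref{eq:McLachlan} swaps the roles of real and imaginary parts when the constraints are differentiated in time, and this is precisely what makes $S^\psi$ (rather than $A^\psi$) the matrix controlling solvability.

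I would first derive a linear system for $w(t)$. Starting from $\partial_t\psi=-iH\psi+\sum_n w_n\cO_n\psi$ and using Hermiticity of the $\cO_m$'s, one computes
\begin{equation*}
o_m'(t)=2\Re\pscal{\partial_t\psi(t),\cO_m\psi(t)}=-2\Im\pscal{H(t)\psi(t),\cO_m\psi(t)}+2\bigl(S^{\psi(t)}w(t)\bigr)_m,
\end{equation*}
which yields the linear system $S^{\psi(t)}w(t)=b(t,\psi(t))$ with $b_m(t,\psi):=o_m'(t)/2+\Im\pscal{H(t)\psi,\cO_m\psi}$, formally analogous to~\eqref{eq:def_v_NL} but with $S^\psi$ in place of $A^\psi$.

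Next, by the hypothesis $\det S^{\psi_0}\neq 0$ and continuity of $\psi\mapsto S^\psi$, the matrix $S^\psi$ remains invertible on an open neighborhood of $\psi_0$. On this neighborhood I set $W(t,\psi):=(S^\psi)^{-1}b(t,\psi)$, which is continuous in $t$ and smooth (hence locally Lipschitz) in $\psi$. Inserting $w(t)=W(t,\psi(t))$ into~\eqref{eq:McLachlan} yields a nonlinear ODE whose right-hand side is continuous in $t$ and locally Lipschitz in $\psi$ wherever $S^\psi$ is invertible. The Cauchy--Lipschitz theorem then produces a unique $C^1$ solution $\psi(t)$ on a maximal interval $[0,T')$ with $T'\leq T$, where by maximality either $T'=T$ or $\det S^{\psi(t)}\to 0$ as $t\to T'^-$. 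Uniqueness of the continuous $w_m(t)$ follows from uniqueness of the solution of the invertible linear system along this unique $\psi(t)$.

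It then remains to verify the constraints. By the very construction of $W$, the identity $(d/dt)\pscal{\psi(t),\cO_m\psi(t)}=o_m'(t)$ holds along the solution for every $m$, and the initial values coincide by hypothesis, so integrating gives $\pscal{\psi(t),\cO_m\psi(t)}=o_m(t)$ for all $t\in[0,T')$. The assumption $I_d\in\tspan_\R(\cO_1,\ldots,\cO_M)$ enters precisely here: writing $I_d=\sum_m\alpha_m\cO_m$, the relation $\|\psi(t)\|^2=\sum_m\alpha_m o_m(t)$ is then automatically preserved as a consequence of the other constraints, which is the reason why---contrary to the variational principle---the normalization need not be imposed by hand. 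The only real subtlety is the real/imaginary bookkeeping in the first step: one must carefully see that the extra $i$ converts $\Re\pscal{-iH\psi,\cO_m\psi}$ into $-\Im\pscal{H\psi,\cO_m\psi}$ and $\Re\pscal{\cO_n\psi,\cO_m\psi}$ into $S^\psi_{mn}$, so that the system is governed by the generically invertible symmetric matrix $S^\psi$, rather than by the antisymmetric $A^\psi$ which vanishes identically for commuting observables.
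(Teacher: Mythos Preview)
Your proof is correct and follows essentially the same route as the paper: differentiate the constraints to obtain the linear system $S^{\psi(t)}w(t)=b(t,\psi(t))$ with the same $b$ as in~\eqref{eq:def_v_NL}, invert $S^{\psi}$ to define $W(t,\psi)=(S^\psi)^{-1}b(t,\psi)$, and apply Cauchy--Lipschitz to the resulting nonlinear ODE. The paper's argument is stated more tersely (it simply says the proof ``proceeds in a similar way as before'' and records~\eqref{eq:diff_constraint_McLachlan}--\eqref{eq:def_w_NL}), but your added remarks on integrating back to recover the constraints and on the role of $I_d\in\tspan_\R(\cO_1,\ldots,\cO_M)$ are consistent with the paper's discussion and do not depart from its approach.
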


It is very satisfactory that this theorem only relies on the invertibility of the matrix $S^\psi$ which, as we have said, just means that the constraints are independent from each other. Hence the trajectory exists as long as it does not hit the boundary of the manifold $\cM$, where $S^\psi$ ceases to be invertible. This is because the projection on the tangent space is always well defined.

Recall that the short-time existence of a solution for the variational principle, as given by  \Cref{thm:TDVP_commuting}, relies on the invertibility of the matrix $K^{\psi_0}(t=0)$. As we mentioned after Theorem~\ref{thm:TDVP_commuting}, this implies the invertibility of $S^{\psi_0}$, so that the solution to the geometric principle also exists by Theorem~\ref{thm:McLachlan}. In this sense, the geometric principle is more robust than the variational principle. 

As before, if we take the observables from the reference Schrödinger equation, $o_m(t)=\pscal{\psi^{\rm S}(t),\cO_m\psi^{\rm S}(t)}$, and if $S^{\psi^{\rm S}(t)}$ stays invertible, then uniqueness implies that $w_m(t)=0$ and hence $\psi(t)=\psi^{\rm S}(t)$. The reference Schrödinger equation will not be modified if the constraints are already satisfied.

The proof of Theorem~\ref{thm:McLachlan} proceeds in a similar way as before, noticing that, this time,
\begin{align}
o_m'(t)&=\pscal{\left(-iH(t)+\sum_{n=1}^M w_n(t)\cO_n\right)\psi(t),\cO_m\psi(t)}\nn\\
&\qquad +\pscal{\psi(t),\cO_m\left(-iH(t)+\sum_{n=1}^M w_n(t)\cO_n\right)\psi(t)}\nn\\
&=\pscal{\psi(t),i[H(t),\cO_m]\psi(t)}+2\sum_{n=1}^M S_{mn}^{\psi(t)} w_n(t).\label{eq:diff_constraint_McLachlan}
\end{align}
Hence we obtain in vector form $w(t)=W\big(t,\psi(t)\big)$ with
\begin{equation}
W(t,\psi):=(S^{\psi})^{-1} \, b(t,\psi)
\label{eq:def_w_NL}
\end{equation}
with the same vector $b(t,\psi)$ as in~\eqref{eq:def_v_NL}.

\subsection{Interpretation as sources and sinks}
The variational and geometric principles can lead to very different solutions. For instance, for commuting observables the variational principle only works thanks to a complicated interplay between the Hamiltonian $H(t)$ and the observables $\cO_m$, which is expressed within the matrix $K^{\psi}(t)$ in~\eqref{eq:matrix_M_TDVP}. The latter involves double commutators of the form  $[\cO_n,[H(t),\cO_m]]$ and those should not vanish. In particular, if the $\cO_m$'s commute with $H(t)$, the variational principle is just unable to reproduce time-dependent values $o_m(t)$ because the expectation values $\pscal{\psi(t),\cO_m\psi(t)}$ will always be constant in time, whatever $v_m(t)$ we put in the modified Schrödinger equation~\eqref{eq:TDVP}. On the contrary, the geometric principle is perfectly able to make $\pscal{\psi(t),\cO_m\psi(t)}$ be equal to whatever we like, even when the $\cO_m$'s commute with $H(t)$.

To illustrate this fact, let us for instance consider the extreme case where $H(t)\equiv0$ and the observables are $\cO_m=|e_m\rangle\langle e_m|$ (projection onto the $m$th vector $e_m=(0,...,1,...,0)$ of the canonical basis of $\C^d$) for $m=1,...,d$. The solution to the equation~\eqref{eq:TDVP} of the variational principle with an arbitrary external potential $v(t)$ is
$$\psi^\text{VP}(t)=\begin{pmatrix}
e^{-i\int_0^t v_1(s)\,\rd s}\psi_1^\text{VP}(0)\\
\vdots\\
e^{-i\int_0^t v_{d}(s)\,\rd s}\psi^\text{VP}_{d}(0)
\end{pmatrix}.$$
Since this is only adding complex phases, we will never be able to modify the density $|\psi^\text{VP}_m(t)|^2$ this way. On the other hand, the geometric principle gives
$$\psi^\text{GP}(t)=\begin{pmatrix}
e^{\int_0^t w_1(s)\,\rd s}\psi^\text{GP}_1(0)\\
\vdots\\
e^{\int_0^t w_{d}(s)\,\rd s}\psi^\text{GP}_{d}(0)
\end{pmatrix}=\begin{pmatrix}
\sqrt{\frac{o_1(t)}{o_1(0)}}\,\psi^\text{GP}_1(0)\\
\vdots\\
\sqrt{\frac{o_d(t)}{o_d(0)}}\,\psi^\text{GP}_d(0)\\
\end{pmatrix}$$
if we choose $w_m(t)=o_m'(t)/(2o_m(t))$. The numbers $w_m(t)$ can be interpreted as sources when $w_m(t)>0$ and as sinks when $w_m(t)<0$. In this simple example the variational principle is unable to follow the given density because it changes the phase without touching the modulus. The geometric principle works without problem because it changes the modulus and not the phase. 

We note that the potential $w$ is akin to a complex absorbing potential (see, e.g.,~\cite{RisMey-JPB-93,SanCed-02}, and~\cite{Ern-JCP-06,ZhoErn-JCP-12} in the context of DFT) which is normally used to calculate resonances or simulate open quantum systems. But we stress that in our case, the potential $w$ does not entail an open quantum system, since the norm of the state is preserved by the condition~\eqref{eq:norm_preservation}. Moreover, contrary to a complex absorbing potential, the potential $w$ is not externally fixed, but is self-consistently determined along the trajectory to impose the desired constraints. Similarly, the nonlocal version of the correction $G$ in the geometric Schrödinger equation~\eqref{eq:McLachlan_rank2} is akin to the nonlocal term appearing in the master equation of open quantum systems (see, e.g.,~\cite{GebCar-PRL-04,YueRodAsp-PCCP-09,AgoVen-PRB-13}), but again in our case the correction is not externally fixed, but is self-consistently determined.

\section{Oblique principle}
\label{sec:oblique}

So far we have seen two different principles forcing the solution to the Schrödinger equation to fulfill some given constraints in the form $\pscal{\psi(t),\cO_m\psi(t)}=o_m(t)$. The \textbf{variational principle} from Section~\ref{sec:TDVP} requires that
$$\Re\pscal{h(t),i\partial_t \psi(t)-H(t)\psi(t)}=0,\qquad\forall h(t)\in\cT_{\psi(t)},$$
which is equivalent to requiring the existence of $v_1(t),...,v_M(t)\in\R$ such that
\begin{equation}
i\partial_t\psi(t)=\left(H(t)+\sum_{m=1}^Mv_m(t)\cO_m\right)\psi(t).
\label{eq:TDVPbis}
\end{equation}
The \textbf{geometric principle} from Section~\ref{sec:McLachlan} requires that (notice again the factor $i$ in front of $h$)
\begin{equation*}
\Re\pscal{ih(t),i\partial_t\psi(t)-H(t)\psi(t)}=0
\qquad\forall h(t)\in\cT_{\psi(t)},
\label{eq:Rih}
\end{equation*}
which is equivalent to requiring the existence of $w_1(t),...,w_M(t)\in\R$ such that
\begin{equation}
i\partial_t\psi(t)=\left(H(t)+i\sum_{m=1}^Mw_m(t)\cO_m\right)\psi(t).
\label{eq:McLachlanbis}
\end{equation}

\begin{figure}[t]
\begin{tikzpicture}
\draw[-{Latex[length=2mm]}] (-2.3,0) -- (2.3,0);
\draw[-{Latex[length=2mm]}] (0,-2.3) -- (0,2.3);
\draw[thick] (0,0) circle [radius = 2];
\draw (1,0) arc [radius=1, start angle=0, end angle = 30];
\draw (0,0) -- (1.73205080757,1);
\node[right] at (1,0.3) {$\theta$}; 
\node[below right] at (2,0) {VP};
\node[below left] at (-2,0) {VP};
\node[above left] at (0,2) {GP};
\node[below left] at (0,-2) {GP};
\node[above right] at (2,0) {$0$};
\node[above right] at (0,2) {$\frac{\pi}{2}$};
\end{tikzpicture}
\caption{The oblique principle continuously interpolates between the variational and geometric principles, using a parameter $\theta$ similar to an angle. The model shares the properties of the geometric principle for all $\theta\neq0$ modulo $\pi$. In the limit $\theta\to0$ modulo $\pi$ one recovers the variational principle but the limit is very singular. 
\label{fig:oblique_theta}}
\end{figure}
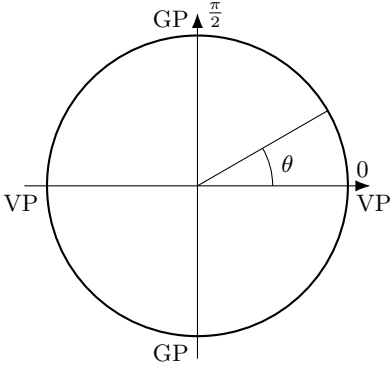

It is possible to continuously interpolate between the two previous principles, a bit like how the Robin boundary condition is an interpolation between the Dirichlet and Neumann boundary conditions in elliptic boundary value problems. We thus fix an angle $-\pi\leq \theta\leq\pi$ and require the condition
\begin{equation}
\Re\pscal{e^{i\theta}h(t),i\partial_t \psi(t)-H(t)\psi(t)}=0,\qquad \forall h(t)\in\cT_{\psi(t)}
\nonumber
\label{eq:oblique_theta_pre}
\end{equation}
so that $\theta=0$ corresponds to the variational principle and $\theta=\pi/2$ to the geometric principle (modulo $\pi$). 
This condition can be rewritten as
\begin{equation}
\boxed{i\partial_t \psi(t)=\left(H(t)+e^{i\theta}\sum_{m=1}^Mu_m^\theta(t)\cO_m\right)\psi(t)}
\label{eq:oblique_theta}
\end{equation}
for some $u_1^\theta(t),...,u_M^\theta(t)\in\R$, where we recall that here $\theta$ is a fixed angle. We call this the \textbf{oblique principle}. We have $u^{0}=-u^{\pi}=v$ (the solution to the variational principle~\eqref{eq:TDVPbis}) and $u^{\pi/2}=-u^{-\pi/2}=w$ (the solution to the geometric principle~\eqref{eq:McLachlanbis}), 
see Figure~\ref{fig:oblique_theta}. Note that in the oblique principle we put both a real potential $v^\theta(t) = \cos (\theta) u^\theta(t)$ and an imaginary potential $iw^\theta(t) = i\sin(\theta) u^\theta(t)$, but we assume they are proportional and not independent.

Because we want to compare the two principles in the framework of TDDFT, we assume in the whole section that the observables $\cO_m$'s commute.  In this case, the variational principle is only well-posed under additional constraints on $\psi_0$ (see Eq.~\eqref{eq:derivative}). For $\theta\neq0$ (mod $\pi$) there is no such issue and the oblique principle is always well-posed. Indeed, we obtain as before
\begin{multline}
o_m'(t)
=\pscal{\psi(t),i[H(t),\cO_m]\psi(t)}\\
+2\sum_{n=1}^M\big(\cos(\theta)\,A^{\psi(t)}_{mn}+\sin(\theta)\,S^{\psi(t)}_{mn}\big)u^\theta_n(t).
\nonumber
\end{multline}
For commuting observables we have $A^\psi\equiv0$ and therefore we get
$$o_m'(t)=\pscal{\psi(t),i[H(t),\cO_m]\psi(t)}+2\sin(\theta)\sum_{n=1}^MS^{\psi(t)}_{mn}u^\theta_n(t).$$
which we can rewrite in a vector form as
$$u^\theta(t)=\frac{W\big(t,\psi(t)\big)}{\sin(\theta)}$$
with the function $W(t,\psi)$ defined in~\eqref{eq:def_w_NL}.
We thus obtain the highly nonlinear equation
\begin{multline}
i\partial_t\psi(t)\\
=\Biggl(H(t)+\bigg(i+\frac1{\tan(\theta)}\bigg)\sum_{m=1}^M W_m \big(t,\psi(t)\big)\cO_m\Biggl)\psi(t).
\label{eq:oblique2}
\end{multline}
The latter can again be written in Hermitian form as
\begin{equation}
\boxed{i\partial_t\psi(t)=\Big(H(t)+G^\theta\big(t,\psi(t)\big)\Big)\psi(t)}
\label{eq:oblique_rank2}
\end{equation}
with the rank-two Hermitian operator
\begin{align*}
G^\theta\big(t,\psi(t)\big)&:=\big|F^\theta\big(t,\psi(t)\big)\psi(t)\big\rangle\big\langle\psi(t)\big|
\nonumber\\
&\phantom{xxx}+\big|\psi(t)\big\rangle\big\langle F^\theta\big(t,\psi(t)\big)\psi(t)\big|
\end{align*}
and
$$F^\theta(t,\psi):=\bigg(i+\frac1{\tan(\theta)}\bigg)\sum_{m=1}^M W_m(t,\psi)\cO_m.$$
We thus obtain the following theorem.
\begin{theorem}[oblique principle]\label{thm:oblique}
Let $\cO_1,...,\cO_M$ be a family of Hermitian $d\times d$ commuting matrices with $I_d\in\tspan_\R(\cO_1,...,\cO_M)$. Assume that on some time interval $[0,T]$, we have
\begin{itemize}
\item a continuous map $t\mapsto H(t)$ of Hermitian matrices;
\item $M$ continuously-differentiable functions $t\mapsto o_1(t),...,o_M(t)$;
\item a normalized initial state $\psi_0\in\C^d$ satisfying the constraints $\pscal{\psi_0,\cO_m\psi_0}=o_m(0)$ for $m=1,...,M$.
\end{itemize}
We further assume that the $M\times M$ symmetric matrix $S^{\psi_0}$ is invertible. Let $-\pi<\theta<\pi$ with $\theta\neq0$. Then there exists a maximal time $0<T'\leq T$ and \textbf{uniquely defined continuous functions} $t \mapsto u^\theta_1(t),...,u^\theta_M(t)$ on $[0,T')$ such that the solution $\psi(t)$ to the equation~\eqref{eq:oblique_theta} with $\psi(0)=\psi_0$ satisfies the constraints $\pscal{\psi(t),\cO_m\psi(t)}=o_m(t)$ for all $m=1,...,M$ with $S^{\psi(t)}$ staying invertible for all $0\leq t<T'$.
\end{theorem}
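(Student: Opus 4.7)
My plan is to adapt the proof of Theorem~\ref{thm:McLachlan} verbatim, the only essential change being a rescaling by $\sin\theta$. First, I would substitute the oblique equation~\eqref{eq:oblique_theta} into $\frac{d}{dt}\pscal{\psi(t),\cO_m\psi(t)}=2\Re\pscal{\partial_t\psi(t),\cO_m\psi(t)}$; this computation is already carried out in the paragraph before the theorem statement and yields
\begin{equation*}
o_m'(t)=\pscal{\psi(t),i[H(t),\cO_m]\psi(t)}+2\sum_{n=1}^M\bigl(\cos\theta\, A^{\psi(t)}_{mn}+\sin\theta\, S^{\psi(t)}_{mn}\bigr)u^\theta_n(t).
\end{equation*}
The commutativity hypothesis forces $A^\psi\equiv 0$, so this collapses to the linear system $2\sin\theta\,S^{\psi(t)}\,u^\theta(t)=b(t,\psi(t))$, with $b$ as in~\eqref{eq:def_v_NL}.

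Next, because $\theta\in(-\pi,\pi)\setminus\{0\}$ gives $\sin\theta\neq 0$, on the open set $U:=\{\psi\in\C^d:\det S^\psi\neq 0\}$ (which contains $\psi_0$ by assumption) the system admits the unique solution
\begin{equation*}
u^\theta(t)=\frac{1}{\sin\theta}\,(S^{\psi(t)})^{-1}\,b(t,\psi(t)).
\end{equation*}
Substituting this back into~\eqref{eq:oblique_theta} produces the closed nonlinear Schrödinger equation~\eqref{eq:oblique2}. Its right-hand side is continuous in $t$ (by continuity of $H$ and of $o'$) and locally Lipschitz in $\psi$ on $U$, since $\psi\mapsto S^\psi$ is polynomial and matrix inversion is smooth on invertibles. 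Applying the Cauchy--Lipschitz theorem yields a unique continuously-differentiable solution $t\mapsto\psi(t)$ on a maximal interval $[0,T')\subset[0,T]$ on which $S^{\psi(t)}$ remains invertible, and the associated $u^\theta(t)$ is continuous by composition.

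Finally, the constraints hold by construction: the equation defining $u^\theta$ is exactly $\frac{d}{dt}\pscal{\psi(t),\cO_m\psi(t)}=o_m'(t)$, and integrating from $0$ together with the compatibility $\pscal{\psi_0,\cO_m\psi_0}=o_m(0)$ gives $\pscal{\psi(t),\cO_m\psi(t)}=o_m(t)$ for every $m$ and every $t\in[0,T')$. Uniqueness of the functions $u^\theta_m(t)$ follows from the uniqueness part of Cauchy--Lipschitz, since any continuous choice making the trajectory satisfy the constraints must solve the same invertible linear system. The only non-routine point is the invertibility of $\sin\theta\,S^\psi$; this is precisely why the argument degenerates as $\theta\to 0\pmod{\pi}$, forcing the additional second-order compatibility~\eqref{eq:derivative} and the use of the matrix $K^\psi(t)$ in Theorem~\ref{thm:TDVP_commuting}.
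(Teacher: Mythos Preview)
Your proposal is correct and follows essentially the same approach as the paper: differentiate the constraints, use commutativity to eliminate $A^\psi$, invert $\sin\theta\,S^{\psi(t)}$ to express $u^\theta$ in terms of $\psi(t)$, plug back into~\eqref{eq:oblique_theta} to obtain the closed nonlinear equation~\eqref{eq:oblique2}, and conclude by Cauchy--Lipschitz. The paper's derivation is precisely the one you describe (it appears in the paragraph immediately preceding the theorem), and your verification that the constraints are propagated and that uniqueness follows is the same argument used for Theorems~\ref{thm:TDVP_symplectic} and~\ref{thm:McLachlan}.
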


It is not difficult to see that the potential $u^\theta(t)$ is a smooth function of $\theta$ whenever $\theta$ does not approach 0 or $\pi$. The limit $\theta\to0$ (mod $\pi$) is very singular, however. This is due to the factor $1/\tan(\theta)$ in the corresponding equation~\eqref{eq:oblique2}. To illustrate the possible behavior of the system, we will give some details about the limit $\theta\to0$ in the case of one qubit in \Cref{sec:qubit_oblique} and Appendix~\ref{app:qubit_oblique}.

\section{Illustration of the different principles on a single qubit}
\label{sec:qubit}

As an illustration of the different principles, let us consider one qubit, that is, the simplest non-trivial quantum system with state space $\C^2$. This is equivalent to the model of a single particle on a two-site lattice, already treated in~\cite{Baer-08,LiUll-08,FarTok-12}. We take the time-independent Hamiltonian
\begin{equation}
H=-\sigma_1=\begin{pmatrix}
0&-1\\
-1&0
    \end{pmatrix}
\nonumber
\end{equation}
and write states as
\begin{equation}
\psi=\begin{pmatrix}
\psi_1\\ \psi_2
\end{pmatrix}.
\nonumber
\end{equation}
We take the observables
\begin{equation}
\cO_1=\begin{pmatrix}
1&0\\
0&0
    \end{pmatrix}, \qquad \cO_2=\begin{pmatrix}
0&0\\
0&1
    \end{pmatrix}.
\label{eq:qbit_O}
\end{equation}
When we fix the expectation values of these two observables, the matrix $S^\psi$ characterizing the regular part $\cM$ of the set of states satisfying the constraints is given by
$$S^{\psi}=\begin{pmatrix}
|\psi_1|^2&0\\
0&|\psi_2|^2
\end{pmatrix}.$$
Our theory requires it to be invertible, which means that $\rho_1=|\psi_1|^2$ and $\rho_2=|\psi_2|^2=1-\rho_1$ be strictly positive. In the usual Bloch sphere representation of the qubit state $\psi$, this corresponds to removing the North and South poles, see  Figure~\ref{fig:Bloch-sphere}. Let us now fix the two expectation values $\pscal{\psi,\cO_m\psi}=|\psi_m|^2=:\rho_m$ with $m=1,2$ and describe the corresponding set $\cC$ defined in Eq.~\eqref{eq:manifold_constraints}. If $0<\rho_1=1-\rho_2<1$ then we obtain a circle of latitude fully included in the regular part, hence $\cC=\cM$ in this case. In contrast, if $\rho_1=0$ or $1$ then $\cC$ consists of only one the poles and $\cM$ is empty. In this very simple example, the set $\cC$ is either fully regular or fully singular. Of course, when the constraints depend on time, the trajectory $\psi(t)$ hits the poles when $\rho_1(t)$ reaches $0$ or $1$. We will thus always assume that we work over an interval of times where
$$0<\rho_1(t)<1.$$
The geometric principle will be well-posed over all such times.

\begin{figure}
    \centering
    \includegraphics[width=0.6\linewidth]{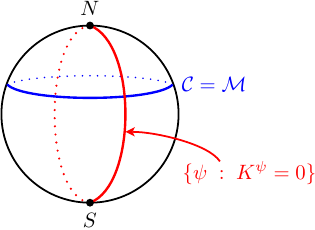}
    \caption{Bloch sphere representation of the states $\psi$ of the qubit. The matrix $S^\psi$ is invertible everywhere except at the South and North poles, corresponding to $\rho_1=0$ or $1$. The $\psi$'s of fixed  density $0<\rho_1=1-\rho_2<1$ correspond to circles of latitude, on which $S^\psi$ is always invertible, hence $\cC=\cM$. On the other hand, the number $K^\psi=\Re(\overline{\psi_2}\psi_1)$ in~\eqref{eq:qubit_K} vanishes on the latitude circle of relative angle $\alpha_2-\alpha_1=\pi/2$ modulo $\pi$, corresponding to $\psi=e^{i\alpha}(\sqrt{\rho_1},\pm i\sqrt{\rho_2})$. The solution to the variational principle can never cross this circle, so that the Bloch sphere is split into two disconnected parts. The solutions to the geometric and oblique principles can perfectly cross the circle and only the two poles have to be avoided.}
    \label{fig:Bloch-sphere}
\end{figure}

\subsection{Variational principle}\label{sec:qubit_TDVP}

For the variational principle, already studied in~Ref.~\cite{FarTok-12}, we only consider the observable $\cO_1$ since $\cO_2=I_2-\cO_1$. The matrix $K^\psi$ appearing in the van Leeuwen equation~\eqref{eq:vanLeeuwen_finite_dim} is just a number in this situation, given by
\begin{align}
K^\psi&=\frac12 \pscal{\psi,[\cO_1,[H,\cO_1]]\psi}\nn\\
&=\Re(\overline{\psi_2}\psi_1)=\sqrt{\rho_1\rho_2}\cos(\alpha_1-\alpha_2).\label{eq:qubit_K}
\end{align}
because $[\cO_1,[H,\cO_1]]=-H$. Here, we have written the qubit state $\psi$ in the form
\begin{equation}
\psi=\begin{pmatrix}
e^{i\alpha_1}\sqrt{\rho_1}\\
e^{i\alpha_2}\sqrt{\rho_2}
\end{pmatrix}.
\label{eq:qbit_form_psi}
\end{equation}
In the Bloch sphere representation, a state $\psi$ of the form~\eqref{eq:qbit_form_psi} corresponds to the point on the latitude circle $\cM$ with longitude $\alpha_2-\alpha_1$. Hence $K^\psi=0$ occurs on the vertical circle where $\alpha_2-\alpha_1=\pi/2$ modulo $\pi$, see Fig.~\ref{fig:Bloch-sphere}.  

Let us now discuss time-dependent $v$-representability within the variational principle. We give ourselves a function $0\leq \rho_1(t)\leq1$ and work under the constraint
$$\pscal{\psi(t),\cO_1\psi(t)}=|\psi_1(t)|^2=\rho_1(t).$$
Then we will automatically get
$$|\psi_2(t)|^2=1-|\psi_1(t)|^2=1-\rho_1(t)=:\rho_2(t).$$
As we said above, we assume $0<\rho_1(t)<1$. Next we ask what kind of functions $\rho_1(t)$ can be attained with the dynamics given by the variational principle
\begin{equation}
i\partial_t\psi(t)=\big(H+v_1(t)\cO_1\big)\psi(t)
\label{eq:qubit_TDVP}
\end{equation}
for an arbitrary $v_1(t)$. Differentiating the constraint as in~\eqref{eq:condition_supp_TDVP}, we find
\begin{equation}
\rho_1'(t)=2\Im\big(\overline{\psi_2(t)}\psi_1(t)\big).
\label{eq:additional_constraint_TDVP}
\end{equation}
Using the inequality $|\Im(\overline{\psi_2(t)}\psi_1(t))|\leq \sqrt{\rho_1(t)\rho_2(t)}$ we conclude that $\rho_1(t)$ must satisfy the constraint
\begin{equation}
\frac{|\rho_1'(t)|}{\sqrt{\rho_1(t)(1-\rho_1(t))}}\leq2.
\label{eq:representability_condition_TDVP}
\end{equation}
The interpretation is that the density cannot vary too fast~\cite{Baer-08,LiUll-08,Ver-PRL-08,RugPenLee-15}. Intuitively this is because our qubit described with the Hamiltonian $H$ can only change its state at finite speed. We will prove below that the condition~\eqref{eq:representability_condition_TDVP} is almost necessary and sufficient for the representability of the density. More precisely, Theorem~\ref{thm:TDVP_commuting} will tell us that any density satisfying~\eqref{eq:representability_condition_TDVP} with a strict inequality ($<$) instead of a large inequality ($\leq$) is representable with a unique potential $v_1(t)$.

To see this, we look for solutions of~\eqref{eq:qubit_TDVP} in the form~\eqref{eq:qbit_form_psi} with time-dependent angles. Plugging in the equation~\eqref{eq:qubit_TDVP} involving the unknown potential $v_1(t)$, we obtain the system of differential equations
\begin{equation}
\left\{
\begin{aligned}
\alpha_1'(t)
& = -v_1(t)+e^{i(\alpha_2(t)-\alpha_1(t))}\sqrt{\frac{\rho_2(t)}{\rho_1(t)}}+i\frac{\rho_1'(t)}{2\rho_1(t)}, 
\\
\alpha_2'(t)
& = e^{i(\alpha_1(t)-\alpha_2(t))}\sqrt{\frac{\rho_1(t)}{\rho_2(t)}}+i\frac{\rho_2'(t)}{2\rho_2(t)}.    
\end{aligned}
\right.
\label{eq:angles_TDVP}
\end{equation}
The imaginary part provides
\begin{equation}
\sin(\alpha_2(t)-\alpha_1(t))=-\frac{\rho_1'(t)}{2\sqrt{\rho_1(t)\rho_2(t)}}=\frac{\rho_2'(t)}{2\sqrt{\rho_1(t)\rho_2(t)}}
\label{eq:qbit-add_constraint}
\end{equation}
so that the relative angle $\beta:=\alpha_2-\alpha_1$ is in fact fixed by the function $\rho_1$. In fact, equation~\eqref{eq:qbit-add_constraint} can be seen to correspond to the condition~\eqref{eq:condition_supp_TDVP}, since
\begin{equation}
\rho_1'(t) = 2\Im\big(\overline{\psi_2(t)}\psi_1(t)\big)=2\sqrt{\rho_1(t)\rho_2(t)}\sin(\alpha_1(t)-\alpha_2(t)).\label{eq:qubit_rho_1_beta}
\end{equation}

To be able to apply Theorem~\ref{thm:TDVP_commuting}, we work on an interval of times over which $|\rho_1'(t)|<2\sqrt{\rho_1(t)\rho_2(t)}$, so that $\cos(\alpha_2(t)-\alpha_1(t))\neq0$ by~\eqref{eq:qbit-add_constraint} and thus $K^{\psi(t)}\neq0$ in~\eqref{eq:qubit_K}. This is exactly~\eqref{eq:representability_condition_TDVP} with a strict inequality. Hence for any initial state $\psi(0)$ satisfying $|\psi_1(0)|^2=\rho_1(0)$ and any $\rho_1(t)$ satisfying the above conditions, we get by Theorem~\ref{thm:TDVP_commuting} a unique solution $v_1(t)$ and $\psi(t)$ over the whole considered interval of times.

To find more explicit formulas, we look at the real parts in~\eqref{eq:angles_TDVP} which provide
\begin{equation}
\left\{
\begin{aligned}
\alpha_1'(t) 
& 
=-v_1(t)+\cos(\beta(t))\sqrt{\frac{\rho_2(t)}{\rho_1(t)}}, 
\\
\alpha_2'(t) 
&= \cos(\beta(t))\sqrt{\frac{\rho_1(t)}{\rho_2(t)}}.
\end{aligned}
\right.
\label{eq:angles_TDVP_bis}
\end{equation}
After substracting the two equations we find the value of the potential
\begin{equation}
v_1(t)=\beta'(t)+\cos(\beta(t))\frac{\rho_2(t)-\rho_1(t)}{\sqrt{\rho_1(t)\rho_2(t)}}
\label{eq:qubit_v_1}
\end{equation}
which allows us to express $\alpha_1$ and $\alpha_2$ as functions of $\beta$ only and thus provides the solution $\psi(t)$. Note that  the potential $v_1(t)$ can also be expressed in terms of $\rho_1(t)$, $\rho_1'(t)$, and $\rho''_1(t)$ using~\eqref{eq:qubit_rho_1_beta}~\cite{FarTok-12}.

To clarify what this all means, let us look at the case of a time-independent constraint, i.e. $\rho_1(t)\equiv\rho_1$. In this case we find from~\eqref{eq:qbit-add_constraint} that $\beta(t)\equiv0$ or $\beta(t)\equiv\pi$ modulo $2\pi$ and $v_1(t)\equiv\cos(\beta) (\rho_2-\rho_1)/\sqrt{\rho_1\rho_2}$ for all times. Therefore, using~\eqref{eq:angles_TDVP_bis}, the solution is
$$\psi(t)=e^{i\alpha_0-iEt}\begin{pmatrix}
\sqrt{\rho_1}\\ \pm\sqrt{\rho_2}
\end{pmatrix}=e^{-iEt}\psi(0)$$
for some $\alpha_0=\alpha_1(0)$ and $E:=-\cos(\beta)\sqrt{\rho_1/\rho_2}$. This means that $\psi(0)$ is an eigenstate of $H+v_1\cO_1$ with eigenvalue $E$ and then $\psi(t)$ is a trivial Schrödinger stationary solution.
Note that the modified Hamiltonian
$$H+v_1\cO_1=\begin{pmatrix}
v_1&-1\\
-1&0
\end{pmatrix}$$
has, for any fixed $v_1$, exactly two simple eigenvalues $E^\pm[v_1]=(v_1\pm\sqrt{v_1^2+4})/2$. Up to a phase factor $e^{i\alpha_0}$, the two eigenstates can thus be written in the form
$$\begin{pmatrix}
\sqrt{\rho_1[v_1]}\\ \pm\sqrt{1-\rho_1[v_1]}
\end{pmatrix}$$
for some $\rho_1[v_1]$. By the Perron--Frobenius theorem, the ground state is the one with positive coefficients and the excited state is the one changing sign. The function  $\rho_1[v_1]$ is  found to be
$$\rho_1[v_1]:=\frac{(\sqrt{v_1^2+4}-v_1)^2}{4+(\sqrt{v_1^2+4}-v_1)^2}.$$
This is the stationary potential-to-density map that we must invert in order to express the potential in terms of $\rho_1$. As we have found before, the inverse is
\begin{equation}
v^{{\rm ad},\sigma}_1[\rho_1]=\sigma\frac{1-2\rho_1}{\sqrt{\rho_1(1-\rho_1)}}=\sigma\frac{\rho_2-\rho_1}{\sqrt{\rho_1\rho_2}}
\label{eq:qubit_TDVP_v_rho}
\end{equation}
with $\sigma=1$ for the ground state and $\sigma=-1$ for the excited state. We interpret this function as the adiabatic potential, hence the notation. Our conclusion is that, in the variational principle with a time-independent constraint, we have to find the unique $v_1$ so that our $\psi(0)$ is either the ground or excited state of the Hamiltonian $H+v_1\cO_1$. The additional constraint~\eqref{eq:additional_constraint_TDVP} at $t=0$ is here to ensure that this is possible. Then the unique solution to our problem is trivial.

Finally, we remark that we can write the full time-dependent solution~\eqref{eq:qubit_v_1} in terms of the adiabatic potential as
\begin{multline}
v_1(t)=v_1^{{\rm ad},\sigma}[\rho_1(t)]\\
+\beta'(t)+\big(\sigma \cos(\beta(t))-1\big)v_1^{{\rm ad},\sigma}[\rho_1(t)]
\label{eq:qubit_v_1_ad+}
\end{multline}
The terms on the second line form the correction to the adiabatic part. They can be expressed in terms of $\rho(t)$ only using~\eqref{eq:qbit-add_constraint}.

\subsection{Geometric principle}
\label{sec:qubit_geometric}
For the geometric principle, we have to fix both $\pscal{\psi,\cO_1\psi}=\rho_1$ and $\pscal{\psi,\cO_2\psi}=\rho_2=1-\rho_1$ to ensure the normalization of $\psi$. Recall that $S^\psi$ stays invertible under the sole condition that $0<\rho_1(t)<1$ for all times. No other condition is needed. This is an important difference compared with the variational principle, which had the additional constraint~\eqref{eq:representability_condition_TDVP} on the velocity $\rho'_1(t)$. With the geometric principle, all time-dependent densities are representable, even those changing very fast. This was already observed in~\cite[Appendix]{LiUll-08}. 

Looking again for the solution in the form
\begin{equation}
\psi(t)=\begin{pmatrix}
e^{i\alpha_1(t)}\sqrt{\rho_1(t)}\\
e^{i\alpha_2(t)}\sqrt{\rho_2(t)}
\end{pmatrix},
\label{eq:qbit_form_psi_bis}
\end{equation}
we obtain a system of ordinary differential equations similar to~\eqref{eq:angles_TDVP}
\begin{equation}
\left\{
\begin{aligned}
\alpha_1'(t) & = -i w_1(t)+e^{i(\alpha_2(t)-\alpha_1(t))}\sqrt{\frac{\rho_2(t)}{\rho_1(t)}}+i\frac{\rho_1'(t)}{2\rho_1(t)},
\\
\alpha_2'(t) & = -i w_2(t)+e^{i(\alpha_1(t)-\alpha_2(t))}\sqrt{\frac{\rho_1(t)}{\rho_2(t)}}+i\frac{\rho_2'(t)}{2\rho_2(t)}.
\end{aligned}
\right.
\label{eq:angles_GP}
\end{equation}
Taking the real parts leads to
$$
\left\{
\begin{aligned}
\alpha'_1(t)& =\sqrt{\frac{\rho_2(t)}{\rho_1(t)}}\cos(\alpha_2(t)-\alpha_1(t)),\\
\alpha'_2(t)& =\sqrt{\frac{\rho_1(t)}{\rho_2(t)}}\cos(\alpha_2(t)-\alpha_1(t)).
\end{aligned}
\right.
$$
The solutions can be expressed in terms of the relative angle $\beta=\alpha_2-\alpha_1$, which itself solves the equation
\begin{equation}
\beta'(t)=\frac{\rho_1(t)-\rho_2(t)}{\sqrt{\rho_1(t)\rho_2(t)}}\cos(\beta(t)).
\label{eq:GV_relative_angle}
\end{equation}
On the other hand, by taking the imaginary parts of~\eqref{eq:angles_GP}, the potential is found to be
\begin{equation}
\left\{
\begin{aligned}
w_1(t)& =\frac{\rho'_1(t)}{2\rho_1(t)}+\sqrt{\frac{\rho_2(t)}{\rho_1(t)}}\sin(\beta(t)), \\
w_2(t)& =\frac{\rho'_2(t)}{2\rho_2(t)}-\sqrt{\frac{\rho_1(t)}{\rho_2(t)}}\sin(\beta(t)).
\end{aligned}
\right.
\label{eq:qubit_GV_potential_beta}
\end{equation}
These formulas can also be found in~\cite[Appendix]{LiUll-08}. When $\rho_1(t)=1/2$ and $\alpha_2(0)=\alpha_1(0)$ or $\alpha_2(0)=\alpha_1(0)+\pi$, we find the stationary solutions corresponding to the two eigenstates of $H$. If $\rho_1(t)\equiv\rho_1$ is time-independent with $\rho_1\neq1/2$ and $\beta(0)$ is not equal to $0$ or $\pi$ modulo $2\pi$, then the relative angle $\beta(t)$ will depend on time through~\eqref{eq:GV_relative_angle} and the solution $\psi(t)$ will not be an eigenstate.

Solving explicitly the differential equation~\eqref{eq:GV_relative_angle}, we can express $\sin(\beta(t))$ as a function of the given density
\begin{equation*}
\sin(\beta(t))=\tanh\left(A_0+\int_0^t\frac{\rho_1(s)-\rho_2(s)}{\sqrt{\rho_1(s)\rho_2(s)}}\,\rd s\right)
\end{equation*}
with $A_0:=\tanh^{-1}(\sin(\beta_0))$. 
Inserting this expression in~\eqref{eq:qubit_GV_potential_beta} gives the potential $w(t)$ as an explicit functional of $\rho$.

\subsection{Oblique principle}\label{sec:qubit_oblique}

We assume $\theta\neq0$ and fix again the expectation values of $\cO_1$ and $\cO_2$ in~\eqref{eq:qbit_O}. Writing the solution in the same form as~\eqref{eq:qbit_form_psi_bis}, after a tedious but straightforward calculation, we obtain that the relative angle $\beta=\alpha_2-\alpha_1$  must solve the equation
\begin{multline}
\tan(\theta)\beta'(t)=\sin(\beta(t))\frac{\rho_1(t)+\rho_2(t)}{\sqrt{\rho_1(t)(\rho_2(t))}}\\
-\frac{\rho_2'(t)}{2\rho_2(t)}+\frac{\rho_1'(t)}{2\rho_1(t)}
-\tan(\theta)\cos(\beta(t))\frac{\rho_2(t)-\rho_1(t)}{\sqrt{\rho_1(t)(\rho_2(t))}}.
\label{eq:qbit_oblique}
\end{multline}
The angles are then given by

$$
\left\{
\begin{aligned}
\alpha'_1(t) & =  \cos(\beta(t))\sqrt{\frac{\rho_2(t)}{\rho_1(t)}} 
\\ & \quad - \frac{1}{\tan(\theta)}\bigg(\frac{\rho_1'(t)}{2\rho_1(t)} +\sin(\beta(t))\sqrt{\frac{\rho_2(t)}{\rho_1(t)}}\bigg), 
\\
\alpha'_2(t)& = \cos(\beta(t))\sqrt{\frac{\rho_1(t)}{\rho_2(t)}} 
\\ & \quad - \frac{1}{\tan(\theta)}\bigg(\frac{\rho_2'(t)}{2\rho_2(t)} -\sin(\beta(t))\sqrt{\frac{\rho_2(t)}{\rho_1(t)}}\bigg),    
\end{aligned}
\right.
$$
whereas the potentials are given by
$$
\left\{
\begin{aligned}
u^\theta_1(t) & = \frac1{\sin(\theta)}\left(\frac{\rho_1'(t)}{2\rho_1(t)}+\sqrt{\frac{\rho_2(t)}{\rho_1(t)}}\sin(\beta(t))\right),\\
u^\theta_2(t)&=\frac1{\sin(\theta)}\left(\frac{\rho_2'(t)}{2\rho_2(t)}-\sqrt{\frac{\rho_1(t)}{\rho_2(t)}}\sin(\beta(t))\right).    
\end{aligned}
\right.
$$

To illustrate what is going on in the limit $\theta\to0$, let us look at the time-independent case $\rho_1(t)\equiv\rho_1$. In Figure~\ref{fig:qubit_oblique} we display the solution $\beta(t)$ for different values of $\theta$ near $0$ and the initial condition $\beta(0)=1.3$. Let us emphasize that the latter does not satisfy the condition~\eqref{eq:qbit-add_constraint} of the variational principle that requires $\beta(0)=0$ modulo $\pi$. When $\theta$ tends to $0$, the function $\beta(t)$ is compressed and looks more and more like a step function with either the value $\pi$ (for $\theta>0$) or $0$ (for $\theta<0$). On the other hand, the corresponding potential $u^\theta$ has a very large peak at the origin and otherwise converges to a constant potential, as we expect. In Appendix~\ref{app:qubit_oblique}, we explain what is going on in details. The system is moving extremely fast to one of the two eigenfunctions of $H+v_1[\rho_1]\cO_1$ that are the only solutions to the variational principle, and then stays there for infinite time. In the limit $\theta\to0$ the ``fast'' part of the trajectory gives rise to Dirac deltas in the potentials and the complex phases, whose role is to modify the initial condition into one that is compatible with the additional condition~\eqref{eq:additional_constraint_TDVP} of the variational principle. If we take $\beta(0)=0$, we observe a similar behavior for $\theta<0$ but not for $\theta>0$.

As a conclusion, the oblique principle converges to the variational principle in the limit $\theta\to0$, but possibly with a different initial condition. This is reflected in the presence of Dirac delta's in the potentials and phases. The new initial condition depends on whether we approach $\theta=0$ from negative or positive values, hence the limit is discontinuous. In this respect, the variational principle is an extreme case of a whole class of better behaved geometric models.

\begin{figure}[t]
\includegraphics[width=\columnwidth]{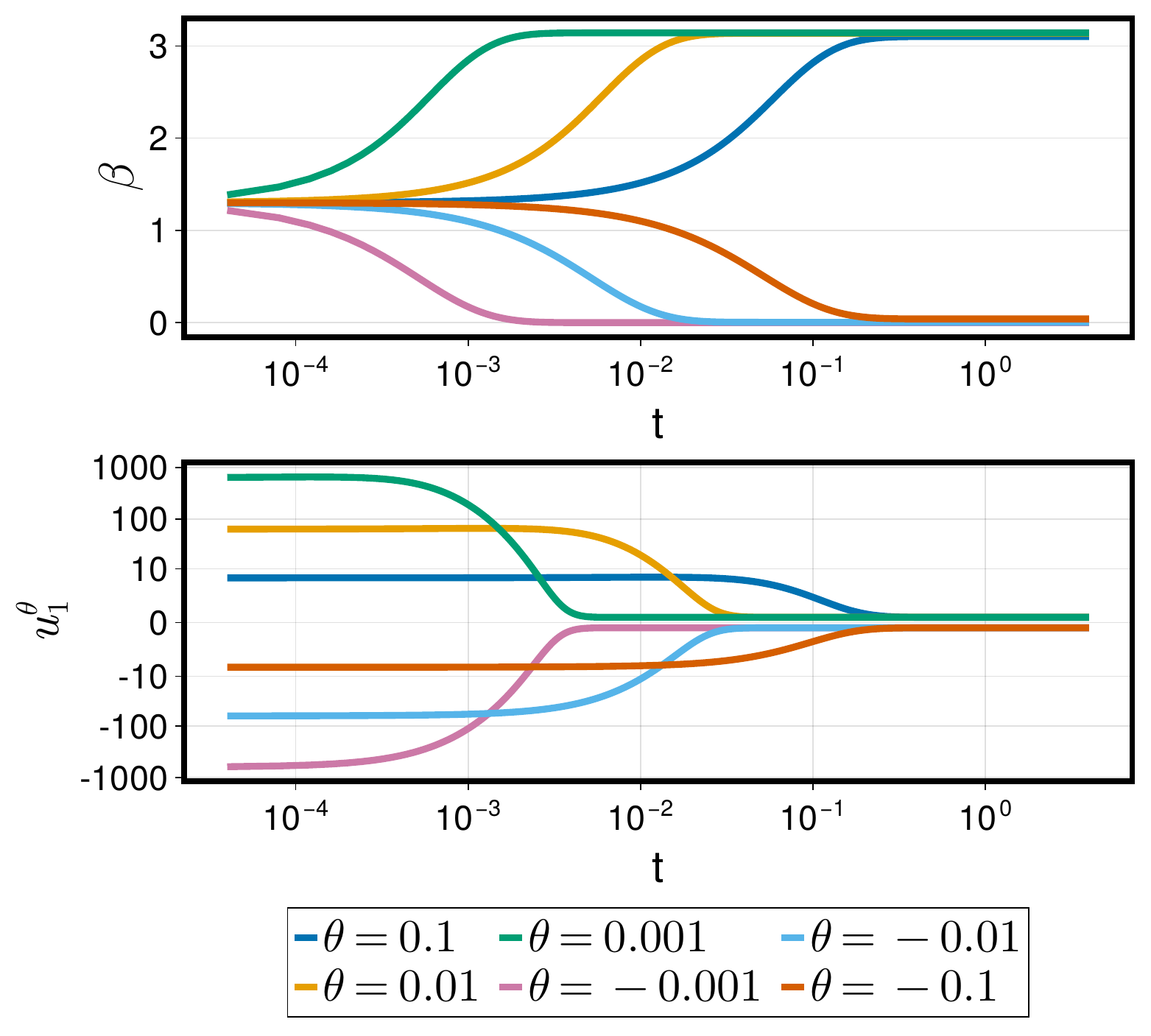}
\caption{Top panel: time evolution in logarithmic scale of the solution $\beta(t)$ obtained from the oblique principle for a single qubit, shown for several values of the parameter $\theta$ with $\beta(0)=1.3$ and $\rho_1=0.7$. Bottom panel: corresponding potentials $u_1^\theta(t)$. For visualization purposes, we plot the transformed quantity $g(u_1^\theta)$ with $g(x) = \operatorname{sign}(x)\log_{10}(1 + |x|)$ in order to highlight the convergence toward Dirac deltas.\label{fig:qubit_oblique}}
\end{figure}

\section{Time-dependent density-functional theory for fermions on a finite lattice}\label{sec:Hubbard}
In this section we apply the previous theory to the case of $N$ spin--$1/2$ fermions hopping on $M$ sites, within the framework of TDDFT. Everything would work in a similar fashion for bosons.

\subsection{Description of the system}
We denote by $a_{m\sigma}^\dagger$ and $a_{m\sigma}$ the creation and annihilation operators of a particle of spin $\sigma\in\{\uparrow,\downarrow\}$ at site number $m\in\{1,...,M\}$. We use the second-quantization formalism for convenience, although we work with a fixed number of particles. We take as observables $\cO_m \equiv \cN_m$ where
\begin{equation}
\cN_m:=a_{m\uparrow}^\dagger a_{m\uparrow}+a_{m\downarrow}^\dagger a_{m\downarrow}
\label{eq:cN}
\end{equation}
is the density operator on site $m$, i.e. the operator that counts the number of particles at site $m$. The density of a $N$-electron state $\Psi$ is the vector $\rho_\Psi$ given by
$$(\rho_\Psi)_m:=\pscal{\Psi,\cN_m\Psi}$$
so that fixing these expectation values amounts to fixing the density, as is appropriate for DFT. Note that the $\cN_m$'s are \textbf{commuting operators}, hence we can apply the theory developed in Section~\ref{sec:TDVP_commuting} for the variational principle. 

The one-particle state space is 
$$
\gH:=\C^{2M}\simeq (\C\times\{\uparrow,\downarrow\})^M
$$
and one-particle states will be seen as vectors $\phi_{m\sigma}$ indexed by pairs of indices $(m,\sigma)\in \{1,...,M\}\times\{\uparrow,\downarrow\}$. We consider the following Hamiltonian written in the usual second-quantized form as
\begin{multline}
\bH_U(t):=\sum_{\sigma,\sigma'\in\{\uparrow,\downarrow\}}\sum_{m,m'=1}^M \Big(h_{m'\sigma', m\sigma}(t)\,a^\dagger_{m\sigma}a_{m'\sigma'}\\
+U_{m'\sigma',m\sigma}\,a^\dagger_{m\sigma}a_{m\sigma}a^\dagger_{m'\sigma'}a_{m'\sigma'}\Big),
\label{eq:Hubbard_H}
\end{multline}
where the one-particle contribution is described by a $2M\times 2M$ time-dependent Hermitian matrix $h_{m\sigma,m'\sigma'}(t)$ and the two-particle contribution is described by a real and symmetric function $(m,\sigma;m',\sigma')\mapsto U_{m\sigma,m'\sigma'}$. For example, for the one-dimensional Hubbard model, $h_{m\sigma,m'\sigma'}(t)$ contains a hopping part involving neighboring sites and possibly an onsite (local) time-dependent external potential,
\begin{equation}
h_{m\sigma,m'\sigma'}(t) = -\tau \delta_{m,m'\pm 1} \delta_{\sigma,\sigma'} + v_{\text{ext},m}(t) \delta_{m,m'} \delta_{\sigma,\sigma'},
\label{eq:Hubbard_h}
\end{equation}
where $\tau$ is the hopping parameter, and the two-particle interaction corresponds to an onsite interaction between different spins
\begin{equation}
U_{m\sigma, m'\sigma'}=\frac{{\cal U}}{2} \delta_{m,m'} (1-\delta_{\sigma,\sigma'})
\label{eq:Hubbard_U}
\end{equation}
where ${\cal U}$ is the on-site interaction parameter. For the moment we keep $h(t)$ and $U$ rather general and defer to \Cref{sec:Hubbard_dimer} the discussion of explicit models.

The reference time-dependent Schrödinger equation (TDSE) is
\begin{equation}
i\partial_t\Psi^{\rm S}(t)=\bH_U(t)\,\Psi^{\rm S}(t),
\label{eq:Hubbard_Schrodinger}
\end{equation}
with an initial state $\Psi^{\rm S}(0) =\Psi_0$.
If the number of particles $N$ is not too large, we can easily compute this solution numerically. This becomes extremely difficult when $N$ increases because $\bH_U(t)$ is a huge matrix of size $\binom{2M}{N} = \frac{(2M)!}{(2M-N)!N!}$ (not considering possible space and spin symmetries). The goal of TDDFT is to replace~\eqref{eq:Hubbard_Schrodinger} by a lower dimensional equation reproducing the exact Schrödinger density $\rho(t) := \rho_{\Psi^{\rm S}(t)}$, but at the expense of introducing rather complicated nonlinear terms.

\subsection{\texorpdfstring{$v$}{v}-representability and the invertibility of  the matrices \texorpdfstring{$S^\Psi$}{Spsi} and \texorpdfstring{$K^\Psi$}{Kpsi}}
\label{sec:v-rep.invert-S-K}

Recall that the geometric and variational principles respectively rely on the two matrices $S^{\Psi(t)}$ and $K^{\Psi(t)}(t)$. Namely, we get a well-defined dynamics as long as these matrices are invertible. 

In this section we consider the case when the initial condition $\Psi_0$ is such that the matrices $S^{\Psi_0}$ and $K^{\Psi_0}(t=0)$ are invertible, and give an interpretation of the invertibility in relation with the Hohenberg--Kohn theorem and the $v$-representability problem. The invertibility of the two matrices will then propagate to short times, leading to the local-in-time existence and uniqueness of solutions by Theorems~\ref{thm:TDVP_commuting} and~\ref{thm:McLachlan}.  As we only consider the initial time $t=0$, we drop the time-variable in our notation and denote the Hamiltonian simply by $\bH_U$.

The (time-independent) \textbf{$v$-representability problem} asks if for a given density $\rho=(\rho_m)_{m=1}^M$ one can find a potential $v=(v_m)_{m=1}^M$ and a state $\Psi$ solving the eigenvalue equation
$$\left(\bH_U+\sum_{m=1}^Mv_m\,\cN_m\right)\Psi=E_0\,\Psi$$
such that $\rho_\Psi=\rho$. In principle $E_0$ could be any eigenvalue of $\bH_U+\cV$ with $\cV:=\sum_{m=1}^Mv_m\,\cN_m$, but we restrict our attention to ground states. Not all densities are $v$-representable in this sense~\cite{EpsRos-76,Englisch-83,Kohn-83,Englisch-84,ChaChaRus-85,UllKoh-02,PenLeu-21,PenLeu-24,BakCsiLaePen-25}. More generally, we say that a density is \textbf{ensemble $v$-representable} if we can find a mixed state $\Gamma$ supported in the ground eigenspace of $\bH_U+\cV$, i.e. $(\bH_U+\cV-E_0) \Gamma=0$, such that $\rho_\Gamma=\rho$. If the ground state is non-degenerate (that is, the associated eigenspace has dimension one), then the two are equivalent, of course. It was shown in~\cite{ChaChaRus-85} that any density satisfying $0<\rho_m<2$ is ensemble $v$-representable but for pure states the situation is less clear. We refer to~\cite{PenLeu-21} for simple examples of densities that are not $v$-representable by a pure state.

If $\rho$ is representable by a pure ground state $\Psi$ of some potential $v$, the next question is whether this $v$ is unique modulo an additive constant, which is called the \textbf{unique $v$-representability problem}~\cite{PenLeu-21}. This problem is addressed by the Hohenberg--Kohn theorem~\cite{HohKoh-64}. The key step is the following. Assume that $\Psi$ is the ground state of two potentials $v^{(1)}$ and $v^{(2)}$ with ground-state energies $E_0^{(1)}$ and $E_0^{(2)}$:
\begin{multline*}
\left(\bH_U+\sum_{m=1}^M v^{(1)}_m\cN_m-E_0^{(1)}\right)\Psi\\=\left(\bH_U+\sum_{m=1}^M v^{(2)}_m\cN_m-E_0^{(2)}\right)\Psi=0.
\end{multline*}
Then we would like to conclude that $v^{(1)}=v^{(2)}$ modulo an additive constant. Subtracting the two equations, the unique $v$-representability property obviously follows if $\Psi$ satisfies that
\begin{equation}
\text{if $\dps \sum_{m=1}^Mv_m\,\cN_m\Psi=0$ for some $v_m$'s, then $\dps v_m\equiv 0.$}
\label{eq:Hohenberg--Kohn}
\end{equation}
We remark that this is exactly requiring that the vectors $\cN_1\Psi,...,\cN_M\Psi$ are $\R$-linearly independent, and thus that the matrix
$$(S^{\Psi})_{mn}=\Re\pscal{\cN_m\Psi,\cN_n\Psi}.$$
is invertible, as is needed in our theory. At this point, we emphasize that the property~\eqref{eq:Hohenberg--Kohn} makes sense for \textbf{any state $\Psi$}. It needs not be a ground state of anything. Only when it is a ground state, this property implies in the Hohenberg--Kohn theorem that the potential $v$ is uniquely determined (modulo an additive constant) from the density $\rho$.
 
To summarize, the geometric theory developed in Section~\ref{sec:McLachlan} relies on the invertibility of $S^{\Psi}$ which can be reformulated as in~\eqref{eq:Hohenberg--Kohn}. If it happens that $\Psi$ is the ground state of some $v$ then this implies that the latter is uniquely defined. The importance of the matrix $S^\Psi$ in DFT was already implicit in~\cite{XuMaoGaoLiu-22}, where a similar matrix called the ``generalized density correlation matrix'' was introduced. 
As was already mentioned before, the matrix $S^{\Psi}$ also appears in the recent work~\cite{PenLee-PRA-25} dealing with ground-state DFT with constrained search in imaginary time.

When $\Psi$ is in addition a \textbf{non-degenerate ground state} of $\bH_U+\cV$, the invertibility of $S^{\Psi}$ is in fact related to that of the potential-to-density map, in the neighborhood of $v$. Let us quickly explain this claim. We apply a small variation $v\to v+\delta v$ and compute the resulting variation in the density to leading order, known as the linear response function $\chi[v]$ (i.e., the derivative of the potential-to-density map). A simple calculation provides the formula
\begin{multline}
\sum_{m=1}^M\pscal{\delta v, \chi[v] \delta v}_{\R^M}\\
=-2\pscal{\sum_{m=1}^M(\delta v)_m\cN_m\Psi,(\bH_U-E_0)^{-1}_\perp\sum_{m=1}^M(\delta v)_m\cN_m\Psi}.
\nonumber
\label{eq:diff_pot_to_den}
\end{multline}
We used the convention that $\sum_m (\delta v)_m\rho_m=0$, which means that the vector $\sum_{m=1}^M(\delta v)_m\cN_m\Psi$ belong to the orthogonal to $\Psi$, so that we can invert the matrix $\bH_U-E_0$ (where $E_0$ is the ground-state energy) on that space thanks to the assumed non-degeneracy. We denoted the inverse by $(\bH_U-E_0)^{-1}_\perp$. The spectrum of $(\bH_U-E_0)^{-1}_\perp$ is included in the interval $[(E_{\mathrm{max}} - E_0)^{-1}, g^{-1}]$, where $E_{\mathrm{max}}$ is the largest eigenvalue of $\bH_U$ and $g = E_1 - E_0 > 0$ is the gap above the ground-state energy. Thus,
\[
\frac{\pscal{\delta v, S^\Psi \delta v}_{\R^M}}{E_{\mathrm{max}}-E_0}
    \leq -\pscal{\delta v, \chi[v] \delta v}_{\R^M}
    \leq \frac{\pscal{\delta v, S^\Psi  \delta v}_{\R^M}}{g}.
\]
This proves that the invertibility of $S^\Psi$ is \textbf{equivalent to that of the derivative $\chi[v]$ of the potential-to-density map}. By the implicit function theorem, the latter implies the invertibility of the full potential-to-density map in a neighborhood of $v$ (modulo an additive constant). This gives yet another interpretation of the invertibility of $S^\Psi$, for non-degenerate ground states, which is in the spirit of~\cite[Thm.~13]{PenLeu-21}.

Proving the invertibility of $S^\Psi$ is not always an easy task. We need to make sure that $\Psi(m_1\sigma_1,...,m_N\sigma_N)$ is non-zero for sufficiently many $m_1,...,m_N\in\{1,...,M\}$. For any such indices we obtain $v_{m_1}+\cdots +v_{m_N}=0$ and if we have at least $M$ independent such conditions we obtain $v\equiv0$ as desired. It is clear that the set of all $\Psi$'s satisfying~\eqref{eq:Hohenberg--Kohn} is open and dense (because we can perturb any $\Psi$ so as to make all the coefficients non zero). But if we restrict our attention to ground states, the situation is more complicated. For continuous systems, the corresponding condition~\eqref{eq:Hohenberg--Kohn} follows from the unique continuation property, which is rather delicate to establish~\cite{Garrigue-20,LewLieSei-23_DFT}. In the discrete case considered here, the property~\eqref{eq:Hohenberg--Kohn} is not always true~\cite{SchGod-95,RosVerAlm-18,PenLeu-21}. If $N=1$ then the relation~\eqref{eq:Hohenberg--Kohn} is clearly equivalent to $(\rho_\Psi)_m>0$ for all sites. For $N\geq2$ the strict positivity of $\rho_\Psi$ is definitely necessary but not sufficient. For instance, if we place $N=4$ particles on $M=2$ sites then we have only one possible $\Psi$, describing two electrons of opposite spins per site. But then the relation~\eqref{eq:Hohenberg--Kohn} gives $2v_1+2v_2=0$ and thus only $v_1=-v_2$. The examples provided in~\cite{PenLeu-21} suggest that~\eqref{eq:Hohenberg--Kohn} is a quite generic property, however, hence natural to assume for ground states. More about the invertibility of $S^\Psi$ can be read in Appendix~\ref{sec:independence}.

\medskip
\paragraph*{Relation to the matrix \texorpdfstring{$K^\Psi$}{Kpsi}.}
Next we discuss the other matrix important for our abstract theory, which is
$$K^{\Psi}_{mn}=\frac12\pscal{\Psi,[\cN_n,[\bH_U,\cN_m]]\Psi}.$$
Here we can replace $\bH_U$ by $\bH_0$ since $\cN_m$ commutes with the interaction and then express this matrix solely using the one-particle density matrix 
$$
(\gamma_\Psi)_{m\sigma ,m'\sigma'}:=\big\langle\Psi,a_{m\sigma}^\dagger a_{m'\sigma'}\Psi\big\rangle
$$
as
\begin{equation}
K^{\Psi}_{mn}=\frac12\tr\big([\delta_n,[h,\delta_m]]\gamma_\Psi\big)
\label{eq:Hubbard_K_gamma}
\end{equation}
where $h$ is the one-particle Hamiltonian matrix and $\delta_m$ is the one-particle projection matrix onto the site $m$ (corresponding to the operator $\cN_m$ in second quantization), i.e. $(\delta_m\phi)_{n\sigma}=\phi_{n\sigma}\delta_{n,m}$.

As defined in \eqref{eq:Hubbard_K_gamma}, the matrix $K^{\Psi}$ is never invertible. Indeed, for the vector $ \mathbbm{1} = (1,1,...,1)$ we have 
\begin{align}
    (K^{\Psi} \mathbbm{1})_m 
    & = \sum_n \frac{1}{2} \tr\big([\delta_n,[h,\delta_m]]\gamma_\Psi\big)
    \nonumber
    \\ & =   \frac{1}{2} \tr\big([I_{2M},[h,\delta_m]]\gamma_\Psi\big) = 0, 
    \label{eqn.K1=0}
\end{align}
where $I_{2M}$ is the identity matrix.
This is the statement that the constant potential $v\equiv 1$ is in the kernel of $K^{\Psi}$. This is the same trivial degeneracy discussed above, and we should really consider whether $K^{\Psi}$ is invertible on the space of potentials orthogonal to the constant potentials. For this to be true it is important that $h$ contains off-diagonal terms because the diagonal ones give vanishing contributions to \eqref{eq:Hubbard_K_gamma}. 
More precisely, a necessary condition is that the particles are able to hop anywhere, meaning that the graph defined by the non-zero (off-diagonal) matrix elements of $h$ is connected. 
Indeed, if this were not the case, there would be some proper subset $J\subset \{1,\ldots,M\}$ which would be invariant under the action of $h$ (i.e. $h$ commutes with the projection ${\rm diag}(\1_J)$ on vectors supported on the sites in $J$). Then, by a similar argument as in \eqref{eqn.K1=0} above, we would have $K^{\Psi}\mathbbm{1}_J = 0$, contradicting the fact that $K^{\Psi}$ is invertible.  

We have seen in Section~\ref{sec:TDVP_commuting} that when $K^{\Psi}$ is invertible, $S^\Psi$ must also be invertible. It turns out that the converse is also true for non-degenerate ground states. This important fact is stated in the following theorem.

\begin{theorem}[Invertibility of the matrix $K^\psi$]\label{thm:GS_K}
Assume that $\Psi$ is a non-degenerate ground state of $\bH_U+\sum_{m=1}^Mv_m\cN_m$ and that $S^{\Psi}$ is invertible. Then the matrix $K^{\Psi}$ is invertible on the orthogonal of the constant potential $v\equiv1$.
\end{theorem}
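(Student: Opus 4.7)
The plan is to show that $K^\Psi$ is positive semidefinite with kernel exactly $\R\mathbbm{1}$ by rewriting its quadratic form in a manifestly nonnegative way, and then to use the invertibility of $S^\Psi$ together with the ground-state property to pin down the kernel.

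First I would absorb the external potential into the Hamiltonian. Since the $\cN_m$'s all commute with $\cV := \sum_m v_m \cN_m$, any double commutator $[\cN_n,[\bH_U,\cN_m]]$ is unchanged if we replace $\bH_U$ by $H := \bH_U + \cV$. Thus for any $w = (w_1,\ldots,w_M)\in\R^M$, writing $W := \sum_m w_m \cN_m$ and using bilinearity to pull the sum inside the commutators, we get
\begin{equation*}
\pscal{w, K^\Psi w}_{\R^M} = \frac{1}{2}\pscal{\Psi,[W,[H,W]]\Psi}.
\end{equation*}
Expanding $[W,[H,W]] = 2WHW - W^2 H - HW^2$ and using $H\Psi=E_0\Psi$ together with the Hermiticity of $W$, the right-hand side simplifies to
\begin{equation*}
\pscal{w, K^\Psi w}_{\R^M} = \pscal{W\Psi,\,(H-E_0)\,W\Psi}.
\end{equation*}

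From this identity the result will follow quickly. The non-degenerate ground-state assumption gives $H - E_0 \ge 0$ with $\ker(H-E_0) = \C\Psi$, so the quadratic form is nonnegative and vanishes iff $W\Psi \in \C\Psi$, i.e. $W\Psi = c\Psi$ for some $c\in\C$. Since $W$ is Hermitian and $\Psi$ is normalized, $c = \pscal{\Psi,W\Psi} = \sum_m w_m\rho_m \in \R$.

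The last step is to identify the kernel. Because $\Psi$ is an $N$-particle state, the total number operator satisfies $\sum_m \cN_m\Psi = N\Psi$, so $W\Psi = c\Psi$ can be rewritten as $\sum_m (w_m - c/N)\cN_m\Psi = 0$. The invertibility of $S^\Psi$ is equivalent to the $\R$-linear independence of $\cN_1\Psi,\ldots,\cN_M\Psi$ (property~\eqref{eq:Hohenberg--Kohn}), so this forces $w_m = c/N$ for all $m$, i.e., $w$ is proportional to $\mathbbm{1}$. If in addition $w \perp \mathbbm{1}$, we conclude $w = 0$. Hence $K^\Psi$ is strictly positive on $\mathbbm{1}^\perp$ and therefore invertible there.

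The only step that really requires care is the algebraic reduction $\pscal{w,K^\Psi w} = \pscal{W\Psi,(H-E_0)W\Psi}$: once that identity is in place, both the invertibility of $S^\Psi$ and the non-degeneracy of the ground state are used in a clean, almost immediate way. The main conceptual point is that, under these hypotheses, $K^\Psi$ is not only invertible on $\mathbbm{1}^\perp$ but also positive there, which is stronger than what was needed and provides a spectral interpretation of the obstruction to invertibility in terms of the gap $E_1 - E_0$, mirroring the $\chi[v]$ bound discussed just before the theorem.
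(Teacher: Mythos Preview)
Your proof is correct and follows essentially the same route as the paper's: rewrite the quadratic form of $K^\Psi$ as $\pscal{W\Psi,(H-E_0)W\Psi}$ via the double-commutator expansion and the eigenvalue equation, use non-degeneracy to get nonnegativity with equality iff $W\Psi\in\C\Psi$, and then invoke the $\R$-linear independence of the $\cN_m\Psi$ (invertibility of $S^\Psi$) together with $\sum_m\cN_m\Psi=N\Psi$ to force $w$ constant. The only cosmetic differences are that you explicitly absorb the external potential into $H$ at the outset (the paper leaves this implicit) and you argue directly from $\ker(H-E_0)=\C\Psi$ rather than first inserting the gap $g$ as a lower bound; both lead to the same conclusion.
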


The proof of this result is given in Appendix~\ref{app:proof_matrix_K}.

The conclusion of this section is that if we start the dynamics with a state $\Psi(0)$ which is a non-degenerate ground state of $\bH_U(0)$ satisfying the unique $v$-representability condition~\eqref{eq:Hohenberg--Kohn}, then we will be able to use the theorems from Sections~\ref{sec:TDVP} and~\ref{sec:McLachlan} for both the variational and geometric principles.

\subsection{Variational principle}
The variational principle is the standard approach for TDDFT~\cite{GroDobPet-96,MarUllNogRubBurGro-06,Ullrichs-11,MarMaiNogGroRub-12}. It requires finding a potential $v(t)$ so that the solution to the associated equation
\begin{equation}
i\partial_t\Psi^{\rm V}(t)=\left(\bH_U(t)+\sum_{m=1}^Mv_m(t)\cN_m\right)\,\Psi^{\rm V}(t)
\nonumber
\end{equation}
has the desired density, i.e. $\rho_{\Psi^{\rm V}(t)} = \rho(t)$. Written in this form, the potential $v$ is only defined up to an additive time-dependent constant that generates a harmless but arbitrary time-dependent global phase in $\Psi^{\rm V}$. We can remove this gauge freedom by assuming for instance $v_M\equiv0$ (this corresponds to erasing $\cN_M$ from the list of fixed observables), or by imposing a sum rule such as $\sum_{m=1}^M v_m(t)\rho_m(t)=0$.

Of course, not every $\rho(t)$ will be representable by a time-dependent potential $v(t)$~\cite{Baer-08,LiUll-08}. The condition~\eqref{eq:condition_supp_TDVP} on the first-order derivative of the density reads in our case
\begin{equation}
\rho_m'(t)=\pscal{\Psi^{\rm V}(t),i[\bH_0(t),\cN_m]\Psi^{\rm V}(t)}
\label{eq:condition_supp_TDVP_Hubbard}
\end{equation}
where the two-particle interaction and the potential term $\sum_{m=1}^Mv_m(t)\cN_m$ do not appear since they commutes with the observables $\cN_m$. Because $\bH_0(t)$ and $\cN_m$ are one-body operators, the commutator is again a one-body operator, we can rewrite~\eqref{eq:condition_supp_TDVP_Hubbard} in the form
\begin{equation}
\rho_m'(t)=i\tr\big([h(t),\delta_m]\gamma_{\Psi^{\rm V}(t)}\big).
\label{eq:condition_supp_TDVP_Hubbard_bis}
\end{equation}
In the right-hand side of~\eqref{eq:condition_supp_TDVP_Hubbard_bis} we can remove the site-diagonal parts $h_{m,\sigma,m\sigma'}(t)$ of the matrix $h(t)$ because those commute with $\delta_m$. Only the off-diagonal coefficients of $h(t)$ matter. They describe the possibility that particles can hop between sites. The relation~\eqref{eq:condition_supp_TDVP_Hubbard_bis} looks like the continuity equation for continuous system, with the right-hand side involving the discrete analogue of the negative of the divergence of the current density. This equation sets some bounds on how fast $\rho(t)$ can vary in time. For instance, assuming that the spectrum of $h(t)$ is included in the interval $[-C,C]$ over the considered interval of times (i.e., $h(t)$ is bounded by $C$ in the matrix supremum norm), one can see that
$$|\rho'_m(t)|\leq 2C\sqrt{N\rho_m(t)}.$$

It TDDFT the dynamics is often started with a ground state $\Psi_0$ of $\bH_U(0)$. If the latter is non-degenerate and satisfies the unique $v$-representability condition~\eqref{eq:Hohenberg--Kohn}, we have explained in the previous subsection that the matrix $K^{\Psi_0}$ is invertible in the orthogonal to constant potentials. We can thus apply Theorem~\ref{thm:TDVP_commuting} and we automatically get a unique solution for some short time, hence we have a uniquely $v$-representable time-dependent density. 

Along the dynamics, the exact potential $v(t)$ giving the reference density $\rho(t)$ can be obtained from the second-order derivative $\rho''(t)$ of the reference density using the van Leeuwen equation~\eqref{eq:vanLeeuwen_finite_dim}
\begin{align}    
& 2\sum_{n=1}^M  K_{mn}^{\Psi^{\rm V}\!(t)}(t) \, v_n(t)
\nonumber
\\ & \qquad =
-\rho_m''(t)
-\pscal{\Psi^{\rm V}(t),[\bH_U(t),[\bH_0(t),\cN_m]]\Psi^{\rm V}(t)}
\nn 
\\ & \qquad\qquad +\pscal{\Psi^{\rm V}(t),i[\bH_0'(t),\cN_m]\Psi^{\rm V}(t)},
\label{eq:vanLeeuwen_Hubbard}
\end{align}
with
$$K^{\Psi}_{mn}(t)=\frac12\pscal{\Psi,[\cN_n,[\bH_0(t),\cN_m]]\Psi}.$$
Note that in the most common case where the one-body Hamiltonian $\bH_0(t)$ is the sum of the kinetic-energy operator $T$ and a time-dependent local external potential $V_\text{ext}(t)$, the commutator $[\bH_0(t),\cN_m]$ simplifies to $[T,\cN_m]$.

\subsection{Geometric principle}
The geometric principle provides the equation
\begin{equation}
i\partial_t\Psi^{\rm G}(t)=\left(\bH_U(t)+i\sum_{m=1}^Mw_m(t)\cN_m\right)\,\Psi^{\rm G}(t)
\nonumber
\end{equation}
with a purely imaginary local potential which is such that we obtain the desired density, i.e. $\rho_{\Psi^{\rm G}(t)}=\rho(t)$. We recall that, in fact, the correction term $i\sum_{m=1}^Mw_m(t)\cN_m$ can also be rewritten as a rank-two Hermitian perturbation as in~\eqref{eq:McLachlan_rank2}. According to Theorem~\ref{thm:McLachlan}, the only assumption needed to obtain a solution $w_m(t)$ for a short time is the invertibility of $S^{\Psi_0}$, i.e. the $v$-representability condition~\eqref{eq:Hohenberg--Kohn}. No further assumption is needed.

With the geometric principle, we expect to be able to reproduce densities $\rho(t)$ that are not accessible with the variational principle, in particular when $\rho(t)$ varies too fast. 

Along the dynamics, the exact potential $w(t)$ giving the reference density $\rho(t)$ can be obtained from the first-order derivative $\rho'(t)$ of the reference density using the modified continuity equation~\eqref{eq:diff_constraint_McLachlan}
\begin{align}
2\sum_{n=1}^M S_{mn}^{\Psi^{\rm G}\!(t)} w_n(t) = \rho_m'(t)-\pscal{\Psi^{\rm G}(t),i[\bH_0(t),\cN_m]\Psi^{\rm G}(t)},
\label{eq:diff_constraint_McLachlan_Hubbard}
\end{align}
which is much simpler than~\eqref{eq:vanLeeuwen_Hubbard}.

\subsection{Time-dependent geometric Kohn--Sham theory}
In time-dependent Kohn--Sham theory we want to reproduce the exact interacting time-dependent density using an auxilliary system of fictitious non-interacting particles. We describe here the form of the corresponding non-interacting equations.

We start by solving the exact time-dependent Schrödinger equation~\eqref{eq:Hubbard_Schrodinger} for some initial many-body state $\Psi_0$ and obtain the exact Schrödinger density $\rho(t)=\rho_{\Psi^{\rm S}(t)}$. We then ask whether the latter can be reproduced with a non-interacting system based on $\bH_0(t)$ (with the interaction $U$ removed). The variational principle is the standard technique employed in the literature and, to our knowledge, the geometric principle was never considered. We will thus obtain a  completely new scheme that we call the \textbf{geometric Kohn--Sham equation}.

Let us consider an initial state $\Phi_0$ such that $\rho_{\Phi_0}=\rho(0)$. We choose for $\Phi_0$ a Slater determinant
$$\Phi_0(m_1\sigma_1,...,m_N\sigma_N)=(N!)^{-1/2}\det(\phi_j(0)_{m_k\sigma_k})$$
where the orbitals $\phi_1(0),...,\phi_N(0)\in\C^{2M}$ are assumed to form an orthonormal set in $\C^{2M}$. Often, $\Psi_0$ is chosen to be a ground state of $\bH_U(0)$ and $\Phi_0$ the solution to the ground-state Kohn--Sham problem.
But for the moment $\Phi_0$ can be rather arbitrary. The only condition needed to be able to go on is the invertibility of $S^{\Phi_0}$ for the geometric principle or the invertibility of the matrix $K^{\Phi_0}(0)$ for the variational principle. 

We will now consider different variants of the time-dependent Kohn--Sham scheme. For each of them, the time-dependent state is a Slater determinant of the form
$$\Phi^X(t, m_1\sigma_1,...,m_N\sigma_N)=(N!)^{-1/2}\det(\phi_j^X(t)_{m_k\sigma_k}),$$
where $\phi_1^X(t),...,\phi_N^X(t)$ are time-dependent orbitals and $X$ designates the variant considered.

\medskip

\noindent \textit{Standard Kohn--Sham scheme (TDKS).}
For the standard Kohn--Sham scheme based on the  variational principle, the orbitals solve the \textbf{time-dependent Kohn--Sham (TDKS) equations}
\begin{equation}
\boxed{i\partial_t\phi_j^{\rm KS}(t)=\big(h(t)+v_\text{Hxc}(t)\big)\phi^{\rm KS}_j(t)}
\label{eq:KS_TDVP}
\end{equation}
with the so-called Hartree-exchange-correlation (Hxc) potential, represented here by a diagonal matrix $(v_\text{Hxc}(t))_{m\sigma,m'\sigma'} = v_{\text{Hxc},m}(t) \delta_{m,m'} \delta_{\sigma,\sigma'}$, 
to be found so that
$$\sum_{j=1}^N\sum_{\sigma\in\{\uparrow,\downarrow\}}|\phi^{\rm KS}_j(t)_{m\sigma}|^2=\rho_m(t).$$
The potential $v_\text{Hxc}(t)$ will be a very complicated nonlinear function of $\rho(t)$, $\Phi_0$ and $h(t)$ (hence indirectly of $\Psi_0$, $\bH_U(t)$, $\Phi_0$ and $h(t)$). By Theorem~\ref{thm:TDVP_commuting}, we know that there is a unique potential $v_\text{Hxc}(t)$ whenever the matrix $K^{\Phi^{\rm KS}(t)}(t)$ stays invertible, on the orthogonal to the constant potentials.
In this case, the exact potential $v_\text{Hxc}(t)$ can be determined from~\eqref{eq:vanLeeuwen_Hubbard} with the interaction $U$ removed
\begin{align}
& 2\sum_{n=1}^M  K_{mn}^{\Phi^{\rm KS}\!(t)}(t) \, v_{\text{Hxc},n}(t) 
\nonumber 
\\ & \qquad = -\rho_m''(t)
-\pscal{\Phi^{\rm KS}(t),[\bH_0(t),[\bH_0(t),\cN_m]]\Phi^{\rm KS}(t)}
\nonumber
\\ & \qquad\qquad
+\pscal{\Phi^{\rm KS}(t),i[\bH_0'(t),\cN_m]\Phi^{\rm KS}(t)},
\label{eq:vanLeeuwen_HubbardKS}
\end{align}
where $\Phi^{\rm KS}(t)$ is the Kohn--Sham Slater determinant.

\medskip

\noindent \textit{Geometric Kohn--Sham scheme (TDGKS).}
For the geometric principle, things become a little more complicated. We get orbitals $\tilde\phi_j^{\rm GKS}(t)$ evolving according to the equation
\begin{equation}
i\partial_t\tilde \phi_j^{\rm GKS}(t)=\big(h(t)+iw(t)\big)\tilde \phi^{\rm GKS}_j(t)
\label{eq:KS_McLachlan_not_good}
\end{equation}
with the geometric potential, represented here by a diagonal matrix $(w(t))_{m\sigma,m'\sigma'} = w_{m}(t) \delta_{m,m'} \delta_{\sigma,\sigma'}$, but these orbitals will in general not be orthonormal for $t>0$. Indeed, although the potential $w(t)$ is chosen so that the corresponding Slater determinant $\Phi^{\rm GKS}(t)$ stays normalized, hence its dynamics is Hermitian, the individual orbitals of~\eqref{eq:KS_McLachlan_not_good} are evolving with a non-Hermitian dynamics. It is thus better to change our gauge and use a different set of orbitals staying orthonormal, leading to the same Slater determinant. This can be achieved by adding Lagrange multipliers in the form
\begin{equation}
i\partial_t\phi_j^{\rm GKS}(t)=\big(h(t)+iw(t)\big)\phi^{\rm GKS}_j(t)+\sum_{k=1}^N\Lambda_{jk}(t)\phi^{\rm GKS}_k(t).
\nonumber
\end{equation}
Writing the desired condition that $\partial_t\pscal{\phi_k^{\rm GKS},\phi_j^{\rm GKS}}=0$, we find that $\Lambda_{jk}$ must be the anti-Hermitian matrix
$$\Lambda_{jk}:=-i\pscal{\phi_k^{\rm GKS},w\phi_j^{\rm GKS}}.$$
The \textbf{time-dependent geometric Kohn--Sham (TDGKS) equation} therefore reads
\begin{multline}
i\partial_t\phi_j^{\rm GKS}(t)=\big(h(t)+iw(t)\big)\phi^{\rm GKS}_j(t)\\
-i\sum_{k=1}^N\pscal{\phi_k^{\rm GKS}(t),w(t)\phi_j^{\rm GKS}(t)}\phi^{\rm GKS}_k(t).
\label{eq:KS_McLachlan}
\end{multline}
In this way of writing, the potential $w(t)$ is defined only up to an additive constant, similarly as in the variational principle.

Let us now rewrite the last term of~\eqref{eq:KS_McLachlan} using the one-particle density matrix $\gamma^{\rm GKS}(t)=\sum_{k=1}^N|\phi^{\rm GKS}_k(t)\rangle\langle\phi^{\rm GKS}_k(t)|$ as
\begin{multline*}
\sum_{k=1}^N\pscal{\phi_k^{\rm GKS}(t),w(t)\phi_j^{\rm GKS}(t)}\phi^{\rm GKS}_k(t)
\\ =\gamma^{\rm GKS}(t)w(t)\phi_j^{\rm GKS}(t).
\end{multline*}
We can therefore rewrite the geometric Kohn--Sham equation of~\eqref{eq:KS_McLachlan} in the somewhat condensed Hermitian form
\begin{equation}
\boxed{
i\partial_t\phi_j^{\rm GKS}(t)=\Big(h(t)\\+i\big[w(t),\gamma^{\rm GKS}(t)\big]\Big)\phi^{\rm GKS}_j(t)
}
\label{eq:KS_McLachlan_commutator}
\end{equation}
or equivalently, in the von Neumann form
\begin{equation}
i\partial_t\gamma^{\rm GKS}(t)=\Big[h(t)+i\big[w(t),\gamma^{\rm GKS}(t)\big],\gamma^{\rm GKS}(t)\Big].
\nonumber
\end{equation}
The geometric term coming from the density constraint therefore takes the form of a commutator at the level of the orbitals, as was already the case in~\eqref{eq:McLachlan_rank2}. This is a nonlocal Hermitian operator of rank $\leq2N$. The geometric term can be interpreted as a kind of exchange term because its matrix elements are
\begin{multline*}
\left( i\big[w(t),\gamma^{\rm GKS}(t)\big]\right)_{m\sigma,m'\sigma'} 
\\
= i\big(w_m(t)-w_{m'}(t)\big)\gamma^{\rm GKS}_{m\sigma,m'\sigma'}(t).
\end{multline*}
Kohn--Sham models with such a nonlocal term of geometric origin have never been considered, to our knowledge. The existence and uniqueness of the potential $w(t)$ is provided by Theorem~\ref{thm:McLachlan}, under the sole assumption that $S^{\Phi^{\rm GKS}(t)}$ stays invertible (see Appendix~\ref{sec:independence} about the invertibility of $S^{\Phi}$).
In this case, the exact geometric potential $w(t)$ can be obtained from the modified continuity equation~\eqref{eq:diff_constraint_McLachlan_Hubbard}
\begin{multline}
2\sum_{n=1}^M S_{mn}^{\Phi^{\rm GKS}\!(t)} w_n(t) 
\\
 = \rho_m'(t)-\pscal{\Phi^{\rm GKS}(t),i[\bH_0(t),\cN_m]\Phi^{\rm GKS}(t)}.
\label{eq:modconteqGKS}
\end{multline}

\medskip

\paragraph*{Kohn--Sham scheme with geometric correction (TDKS+G).}
The standard Kohn--Sham equation~\eqref{eq:KS_TDVP} and the new geometric Kohn--Sham equation~\eqref{eq:KS_McLachlan_commutator} are both able to reproduce the exact density $\rho(t)$, at least for a short time and under suitable assumptions on the initial state $\Phi_0$. In practice, we however have no access to the exact density $\rho(t)$ and need to find useful approximations of these exact potentials $v_\text{Hxc}$ or $w$. A possible approximation of $v_\text{Hxc}$ for the Hubbard model is the adiabatic local-density approximation (ALDA)~\cite{SchGunNoa-95,LimSilOliCap-PRL-03}. The latter is usually quite accurate in a nearly adiabatic setting but may completely fail in situations that are really out-of-equilibrium. It is natural to ask whether it is possible to correct the error of a given approximate Kohn--Sham model by means of a geometric term as in~\eqref{eq:KS_McLachlan_commutator}. The answer is positive.

We give ourselves any (adiabatic) approximation $v^{\rm app}_\text{Hxc}[\rho(t)]$ of the exact potential $v_\text{Hxc}$ and consider the \textbf{time-dependent Kohn--Sham equation with geometric correction (TDKS+G)}
\begin{equation}
\boxed{
\begin{aligned}
i\partial_t\phi_j^{\rm KS+G}(t)=&\Big(h(t)+v^{\rm app}_\text{Hxc}[\rho_{\gamma^{\rm KS+G}}(t)]\\
&\quad+i\big[w^{\rm corr}(t),\gamma^{\rm KS+G}(t)\big]\Big)\phi^{\rm KS+G}_j(t).
\end{aligned}
}
\label{eq:KS+G}
\end{equation}
The existence of a unique corrective geometric potential $w^{\rm corr}(t)$ can be established by following the exact same lines as in the proof of Theorem~\ref{thm:McLachlan}, under the sole assumption that $S^{\Phi^{\rm KS+G}(t)}$ stays invertible and $\rho\mapsto v^{\rm app}_\text{Hxc}[\rho]$ is continuous. 
In this case, the exact corrective geometric potential $w^{\rm corr}(t)$ can still be obtained from the modified continuity equation~\eqref{eq:diff_constraint_McLachlan_Hubbard}
\begin{multline}   
2\sum_{n=1}^M S_{mn}^{\Phi^{\rm KS+G}\!(t)} w_n^{\rm corr}(t) 
\\ = 
 \rho_m'(t)-\pscal{\Phi^{\rm KS+G}(t),i[\bH_0(t),\cN_m]\Phi^{\rm KS+G}(t)},
\label{eq:modconteqKS+G}
\end{multline}
since the local potential $v^{\rm app}_\text{Hxc}$ does not contribute to the commutator.

Of course, we can also formulate variants of the Kohn--Sham scheme using the oblique principle from Section~\ref{sec:oblique}. We do not write them explicitly for shortness. In \Cref{sec:Hubbard_dimer}, we investigate the form of the exact potentials $v_\text{Hxc}(t)$, $w(t)$ and $w^{\rm corr}(t)$ in several practical situations of interest.

\subsection{Time-dependent current-density functional theory}
\label{sec.current.DFT}
We saw in \Cref{sec:TDVP_commuting} above, that for commuting observables, as is the case for TDDFT, the symplectic formulation of the variational principle (\Cref{thm:TDVP_symplectic}) was of no use. 
A natural related framework, where it is of use, however, is the case of time-dependent current-density functional theory (TDCDFT) as we explain in this section.

We consider a one-dimensional chain and,
in addition to the density observables $\cN_m$,
also the current-density operators
\begin{equation}
    \cJ_m = i \tau \sum_{\sigma \in \{ \uparrow,\downarrow\} } 
    \left(a_{m\sigma}^\dagger a_{(m+1)\sigma} - a_{(m+1)\sigma}^\dagger a_{m\sigma}\right), 
\nonumber
\end{equation}
where $\tau$ is the hopping parameter.
Note that we label the current operator on the edge $m,m+1$ simply by the index $m$.
As discussed above, we can remove $\cN_M$ from the list of observables by choosing the gauge of the added potential to have $v_M = 0$. 
Similarly, we can remove $\cJ_M$ since the total current is fixed by the time-dependent density. (Of course, $\cJ_M=0$, unless we consider periodic boundary conditions, where we identify $M+1$ with $1$.) Thus, the list of observables to constrain is $\{\cN_m, \cJ_m\}_{m=1}^{M-1}$. 
For the variational principle, the question is thus to find real numbers $v_m(t)$ and $\alpha_m(t)$ such that the state $\Psi^{\mathrm{V}}(t)$ solving
\begin{equation}\label{eq.current.DFT.VP}
i\partial_t\Psi^{\mathrm{V}}(t)
\!=\!\!\left[\bH_U(t)+\sum_{m=1}^{M-1} \Big(v_m(t)\cN_m + \alpha_m(t) \cJ_m\Big)\right] \!
\Psi^{\mathrm{V}}(t)
\end{equation}
with initial condition $\Psi^{\mathrm{V}}(0) = \Psi_0$, 
has both density $\rho(t)$ and current density $j(t)$ as prescribed. 
To this end, we apply \Cref{thm:TDVP_symplectic}. 
The matrix $A^{\Psi_0}$ (defined in \eqref{eq:A_symplectic}) has the block form (since the $\cN_m$'s commute)
\begin{equation}
A^{\Psi_0} = \begin{pmatrix}
    0 & B^{\Psi_0} \\ -(B^{\Psi_0})^T & C^{\Psi_0}
\end{pmatrix},
\nonumber
\end{equation}
where 
\begin{align*}
    B^{\Psi_0}_{mn} &= \pscal{\Psi_0, \frac{i[\cJ_n, \cN_m]}{2}\Psi_0}, \\
    C^{\Psi_0}_{mn} &= \pscal{\Psi_0, \frac{i[\cJ_n, \cJ_m]}{2}\Psi_0}. 
\end{align*}
It follows that $A^{\Psi_0}$ is invertible if and only if $B^{\Psi_0}$ is. To see if $B^{\Psi_0}$ is invertible, we 
calculate 
\begin{multline*}
    i[\cJ_n,\cN_m]
    = \tau \sum_{\sigma \in \{ \uparrow,\downarrow\}} 
    \Bigl[
    \delta_{n,m} (a_{(n+1)\sigma}^\dagger a_{n\sigma} + a_{n\sigma}^\dagger a_{(n+1)\sigma})
\\ 
    - \delta_{n+1,m} (a_{(n+1)\sigma}^\dagger a_{n\sigma} + a_{n\sigma}^\dagger a_{(n+1)\sigma})
    \Bigr]. 
\end{multline*} 
Hence, the matrix $B^{\Psi_0}$ is lower triangular, and its determinant is just the product of its diagonal elements
\begin{equation}
    \det(B^{\Psi_0}) = \prod_{m=1}^{M-1} \left[\tau \sum_{\sigma \in \{\uparrow, \downarrow\} } 
    2\Re \pscal{\Psi_0, a_{(m+1)\sigma}^\dagger a_{m\sigma} \Psi_0}\right]. 
\nonumber
\end{equation}
Generically, this determinant is non-zero. If so, then \Cref{thm:TDVP_symplectic}
ensures that there exists unique $v_m(t)$'s and $\alpha_m(t)$'s such that the solution to \eqref{eq.current.DFT.VP} has the desired density and current density, at least for short times.

For a higher-dimensional lattice one can perform a similar analysis, with the difference that the current operators must now carry indices of the edges rather than the sites. The matrix $A^{\Psi_0}$ has the same block form, but the matrix $B^{\Psi_0}$ is no longer a square matrix and determining when the matrix $A^{\Psi_0}$ is invertible is thus slightly more complicated.

Finally, we remark that TDCDFT can also be treated with the geometric principle. In general, the matrix $S^{\Psi_0}$ is invertible. (Here one should not remove the observables $\cN_M, \cJ_M$.)
If this is the case, then \Cref{thm:McLachlan} dictates, that there exists unique real $w_m(t)$'s and $\beta_m(t)$'s such that the state
$\Psi^{\mathrm{G}}(t)$ solving
\begin{equation}
    i\partial_t \Psi^{\mathrm{G}}(t) \!=\!\! \left[H_U(t) + i \sum_{m=1}^{M} \Big(w_m(t) \cN_m + \beta_m(t) \cJ_m\Big)
    \right] \!
    \Psi^{\mathrm{G}}(t)
\nonumber
\end{equation}
with initial condition $\Psi^{\mathrm{G}}(0) = \Psi_0$, has the desired density and current density, at least for short times.

\section{Application to the Hubbard dimer}\label{sec:Hubbard_dimer}

The Hubbard dimer has often been used for testing many-body theories, in particular in the context of DFT and TDDFT (see, e.g., Refs.~\cite{AryGun-PRB-02,ReqPan-PRB-08,ReqPan-PRA-10,FarTok-12,FukFarTokAppKurRub-PRA-13,FukMai-PRA-14,FukMai-PCCP-14,CarFerSmiBur-JPCM-15,CarFerMaiBur-EPJB-18}). It corresponds to $N=2$ electrons hopping on $M=2$ sites with Hamiltonian of the form given in Eqs.~\eqref{eq:Hubbard_H},~\eqref{eq:Hubbard_h} and~\eqref{eq:Hubbard_U}. 

\subsection{Description of the model}
We will work in the spin-singlet subspace of the Hilbert space which is spanned by the three states $\ket{\Psi_1} = a_{1\uparrow}^\dagger a_{1\downarrow}^\dagger \ket{\text{vac}}$, $\ket{\Psi_2} = 2^{-1/2}(a_{1\uparrow}^\dagger a_{2\downarrow}^\dagger - a_{1\downarrow}^\dagger a_{2\uparrow}^\dagger)  \ket{\text{vac}}$ and $\ket{\Psi_3} = a_{2\uparrow}^\dagger a_{2\downarrow}^\dagger \ket{\text{vac}}$, where $\ket{\text{vac}}$ is the vacuum state of second quantization. In this basis, the Hamiltonian is the $3\times3$ matrix
\begin{align*}
\bH_U(t) = \phantom{xxxxxxxxxxxxxxxxxxxxxxxxxxxxxxxxxx}
\nonumber\\
\begin{pmatrix}
 {\cal U} + 2v_{\text{ext},1}(t) & - \sqrt{2}\,\tau & 0 \\
 - \sqrt{2}\,\tau & v_{\text{ext},1}(t)+v_{\text{ext},2}(t) & - \sqrt{2}\,\tau \\
 0 & - \sqrt{2}\,\tau & {\cal U} + 2v_{\text{ext},2}(t)
\end{pmatrix},
\end{align*}
where $\tau$ is the hopping parameter, ${\cal U}$ is the on-site interaction parameter, and $v_{\text{ext},m}(t)$ is the external potential on site $m$. Since any potential is defined up to an arbitrary additive time-dependent constant, all the results depend in fact only the potential difference $\Delta v_{\text{ext}}(t) = v_{\text{ext},1}(t) - v_{\text{ext},2}(t)$. The latter is chosen as the sum of a static part and a time-dependent perturbation
\begin{align*}
\Delta v_{\text{ext}}(t) = \Delta v_{\text{ext}}^0 + \Delta v^{\text{p}}_{\text{ext}}(t),
\end{align*}
with
\begin{align*}
\Delta v^{\text{p}}_{\text{ext}}(t) =  \cE_0 \sin(\omega t),
\end{align*}
which corresponds to the interaction between the dipole moment (chosen as $d_1 = -1/2$ and $d_2 = 1/2$ on the two sites) and a monochromatic electric field with amplitude $\cE_0$ and driving frequency $\omega$. 

Starting from the two-electron ground state $\Psi_0$ at $t=0$, we numerically solve the reference time-dependent Schrödinger equation~\eqref{eq:Hubbard_Schrodinger}, and thus obtain the reference density $\rho(t)$ and its first- and second-order derivatives $\rho'(t)$ and  $\rho''(t)$.

\subsection{Time-dependent Kohn--Sham schemes}
We describe now the different Kohn--Sham schemes that we explore on the Hubbard dimer.

\paragraph*{Standard Kohn--Sham scheme.}
We solve the TDKS equation~\eqref{eq:KS_TDVP} with one doubly-occupied spatial orbital $\varphi^\text{KS}(t)=(\varphi_1^\text{KS}(t),\varphi_2^\text{KS}(t)) \in \mathbb{C}^2$
\begin{align}
i \partial_t \varphi^\text{KS}(t) = \Big( h(t) + v_\text{Hxc}(t) \Big) \varphi^\text{KS}(t),
\label{eq:KSdimer}
\end{align}
starting from the exact ground-state Kohn--Sham orbital $\varphi_0 = \sqrt{\rho(0)/2}$. In~\eqref{eq:KSdimer}, $h(t)$ is now the spin-independent one-particle Hamiltonian
\begin{align*}
h(t) =
\begin{pmatrix}
v_{\text{ext},1}(t) & - \tau\\
- \tau & v_{\text{ext},2}(t)
\end{pmatrix},
\end{align*}
and $v_\text{Hxc}(t)$ is the diagonal matrix containing the Hxc potential
\begin{align*}
 v_\text{Hxc}(t) =
\begin{pmatrix}
v_{\text{Hxc},1}(t) & 0\\
0 &v_{\text{Hxc},2}(t)
\end{pmatrix}.
\end{align*}
We calculate the exact potential $v_\text{Hxc}(t)$ giving the exact density $\rho(t)$ using the van Leeuwen equation~\eqref{eq:vanLeeuwen_HubbardKS}. It can be checked that \eqref{eq:vanLeeuwen_HubbardKS} leads to an explicit expression of the potential difference $\Delta v_{\text{Hxc}}(t) = v_{\text{Hxc},1}(t) - v_{\text{Hxc},2}(t)$ in terms of the density difference $\Delta \rho(t) = \rho_1(t) - \rho_2(t)$ and its first- and second-order derivatives $\Delta \rho'(t)$ and $\Delta \rho''(t)$, provided that $|\Delta \rho'(t)| < 2 \tau \sqrt{4 - (\Delta \rho(t))^2}$,~\cite{FarTok-12,FukFarTokAppKurRub-PRA-13,FukMai-PRA-14} 
\begin{align}
\Delta v_\text{Hxc}(t) =& -\sigma_0 \frac{\Delta \rho''(t) + 4\tau^2 \Delta \rho(t)}{\sqrt{4 \tau^2 (4 - (\Delta \rho(t))^2) - (\Delta \rho'(t))^2}} 
\nonumber\\
&- \Delta v_\text{ext}(t),
\label{eq:DeltavHxc}
\end{align}
where $\sigma_0 = \sign(\pi/2 - |\beta(0)|)$ and $\beta(t) = \arg(\varphi_2^\text{KS}(t)\overline{\varphi_1^\text{KS}(t)})$. The presence of the quantity $\sigma_0$ in~\eqref{eq:DeltavHxc} is a manifestation of the initial-state dependence in TDDFT, as explained in~\cite{FarTok-12,FukFarTokAppKurRub-PRA-13}. In all our simulations, we have $\sigma_0=1$, as we start from the Kohn--Sham ground state for which $\beta(0)=0$. Along our TDKS dynamics we have $|\beta(t)| < \pi/2$, implying that the denominator in~\eqref{eq:DeltavHxc} never reaches zero and the potential $\Delta v_\text{Hxc}(t)$ remains well-defined.

\paragraph*{Exact adiabatic Kohn--Sham scheme.}
To investigate the impact of the adiabatic approximation, we also solve the time-dependent exact adiabatic Kohn--Sham (TDeaKS) equation~\cite{ThiGroKum-PRL-08,FukFarTokAppKurRub-PRA-13,FukMai-PRA-14}
\begin{align*}
i \partial_t \varphi^\text{eaKS}(t) = \Big( h(t) + v_\text{Hxc}^\text{ea,sc}(t) \Big) \varphi^\text{eaKS}(t),
\end{align*}
with the self-consistent exact adiabatic Hxc potential
\begin{align*}
v_\text{Hxc}^\text{ea,sc}(t) = v_\text{Hxc}^\text{gs}[\rho^\text{eaKS}(t)],
\end{align*}
calculated at the self-consistent density $\rho^\text{eaKS}_m(t)=2|\varphi_m^\text{eaKS}(t)|^2$. Here $v_\text{Hxc}^\text{gs}[\rho]$ is the exact ground-state Hxc potential functional
\begin{align*}
v_{\text{Hxc},m}^\text{gs}[\rho] = \frac{\partial E_\text{Hxc}[\rho]}{\partial \rho_m},
\end{align*}
and $E_\text{Hxc}[\rho]$ is the exact ground-state Hxc energy functional, obtained by numerically inverting the ground-state Kohn--Sham problem for any density $\rho$, as in~\cite{CarFerSmiBur-JPCM-15}. We also define the exact adiabatic Hxc potential as the exact ground-state Hxc potential functional evaluated at the exact density $\rho(t)$
\begin{align}
v_\text{Hxc}^\text{ea}(t) = v_\text{Hxc}^\text{gs}[\rho(t)].
\label{eq:vHxcea}
\end{align}
The exact non-adiabatic correlation potential in TDKS is then defined as the difference between the exact Hxc potential and the exact adiabatic Hxc potential 
\begin{align*}
v_{\text{c}}^\text{na}(t) =  v_{\text{Hxc}}(t) - v_\text{Hxc}^\text{ea}(t).
\end{align*}
Again, the relevant gauge-invariant quantities are the potential differences $\Delta v_{\text{Hxc}}^\text{ea,sc}(t) = v_{\text{Hxc},1}^\text{ea,sc}(t) -  v_{\text{Hxc},2}^\text{ea,sc}(t)$, $\Delta v_{\text{Hxc}}^\text{ea}(t) = v_{\text{Hxc},1}^\text{ea}(t) -  v_{\text{Hxc},2}^\text{ea}(t)$ and $\Delta v_{\text{c}}^\text{na}(t) = v_{\text{c},1}^\text{na}(t) - v_{\text{c},2}^\text{na}(t)$.

\begin{figure*}[t]
\includegraphics[scale=0.27,angle=-90]{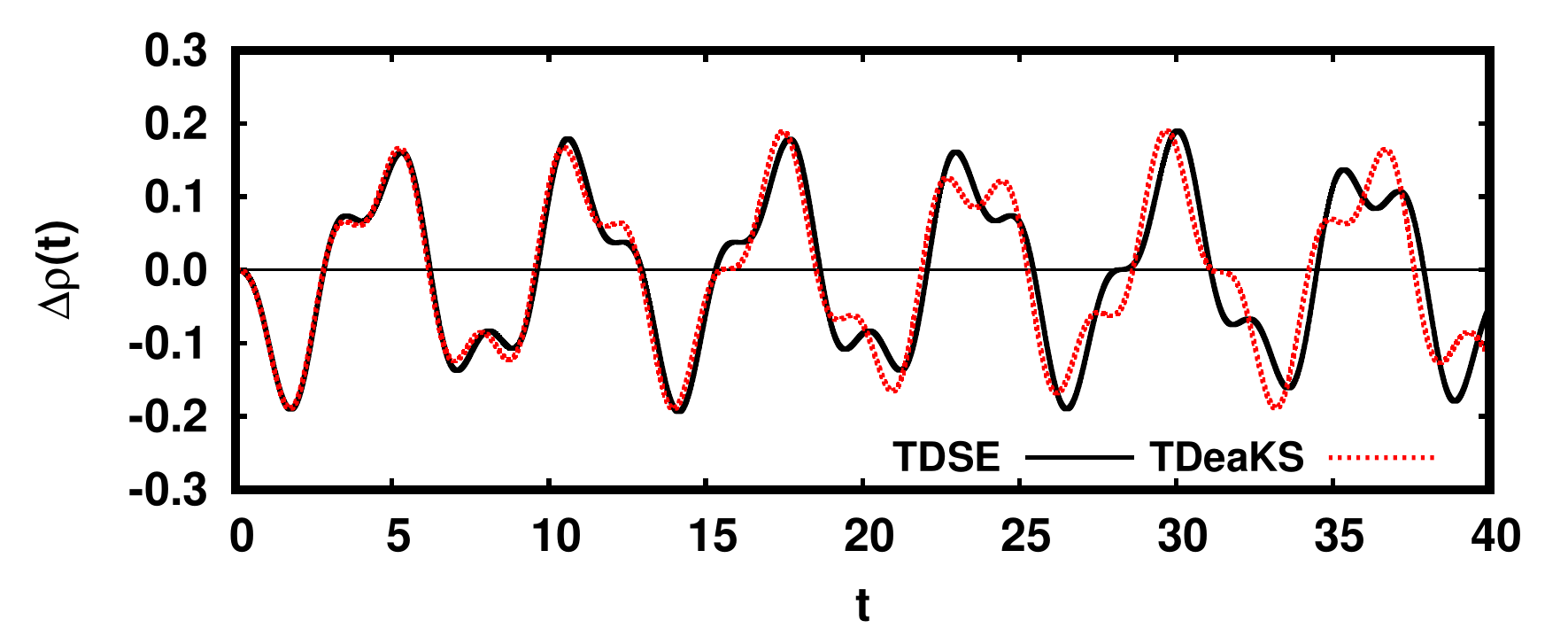}
\includegraphics[scale=0.27,angle=-90]{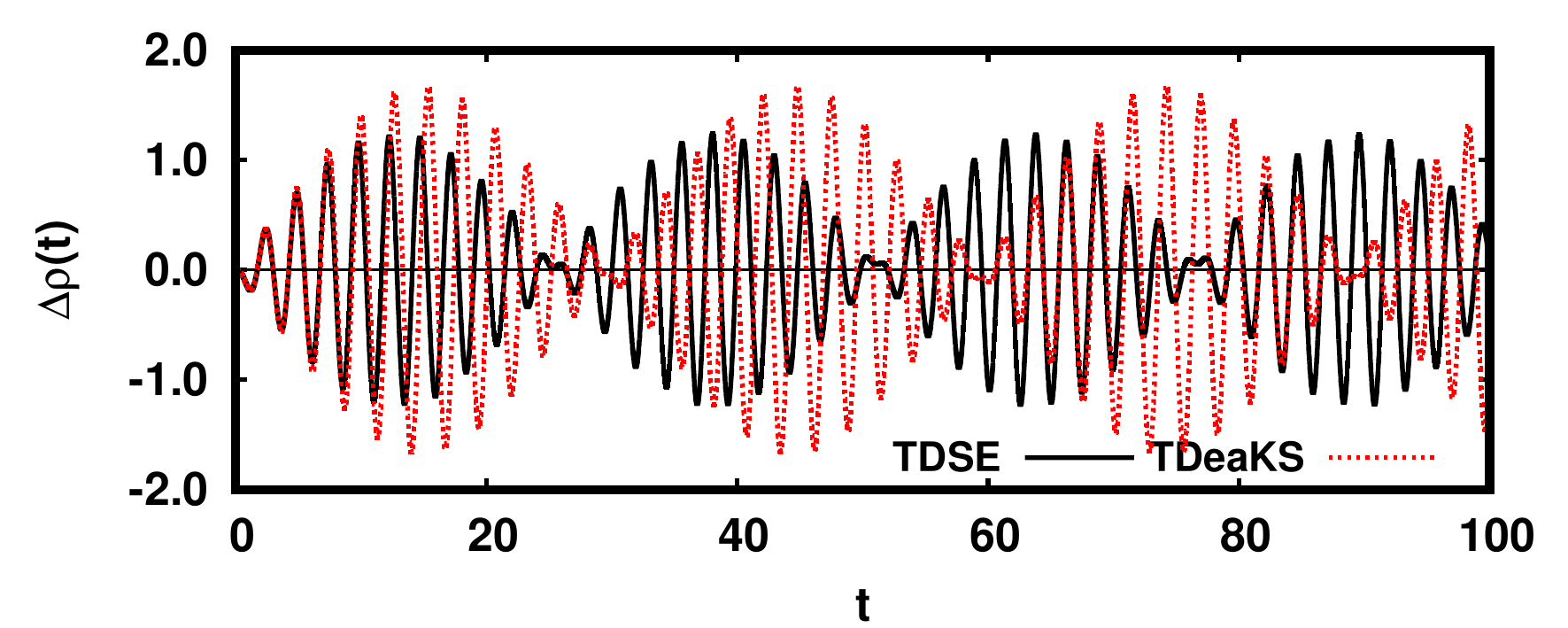}\\
\includegraphics[scale=0.27,angle=-90]{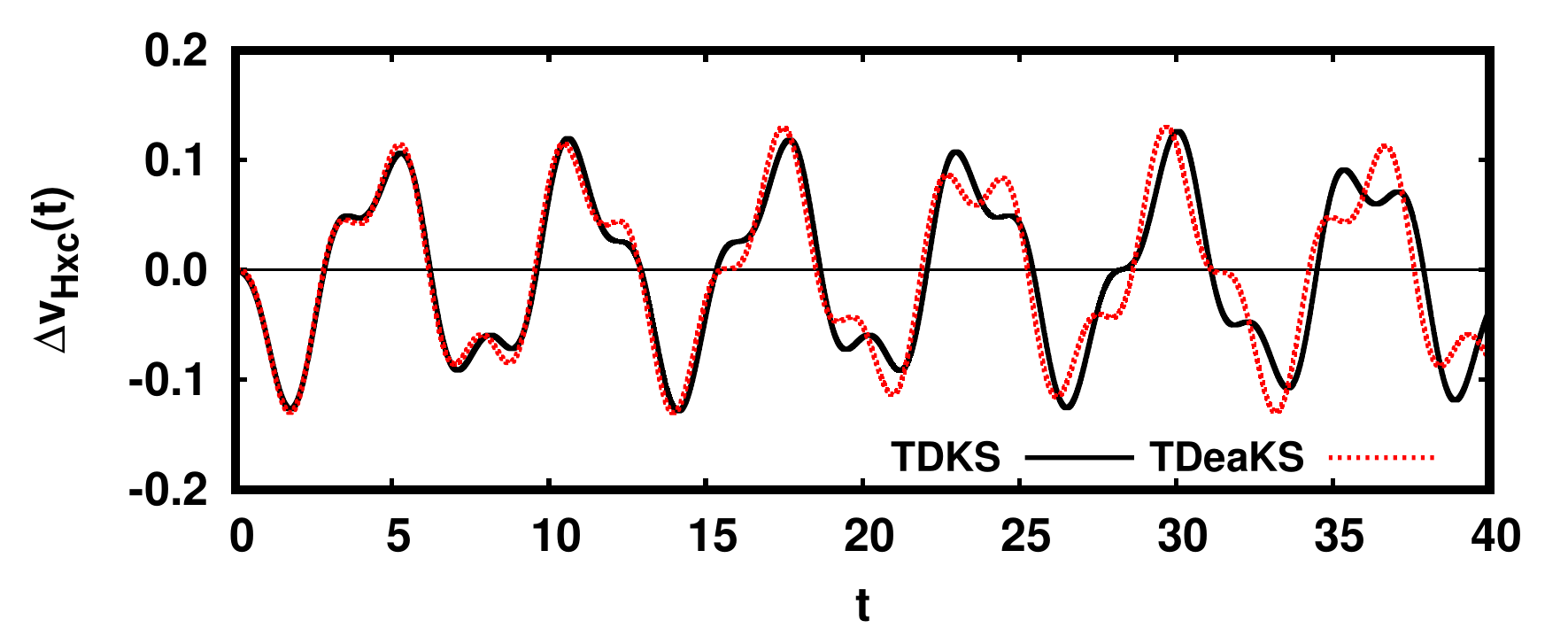}
\includegraphics[scale=0.27,angle=-90]{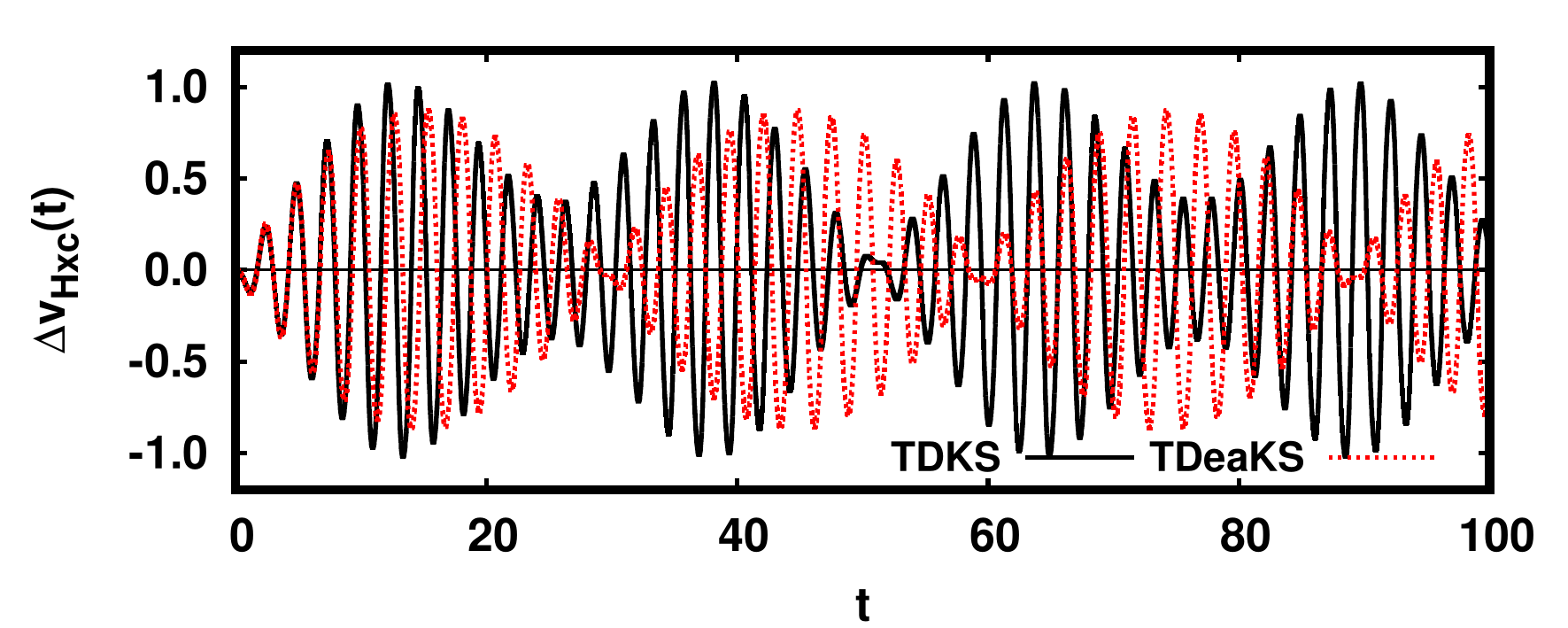}\\
\includegraphics[scale=0.27,angle=-90]{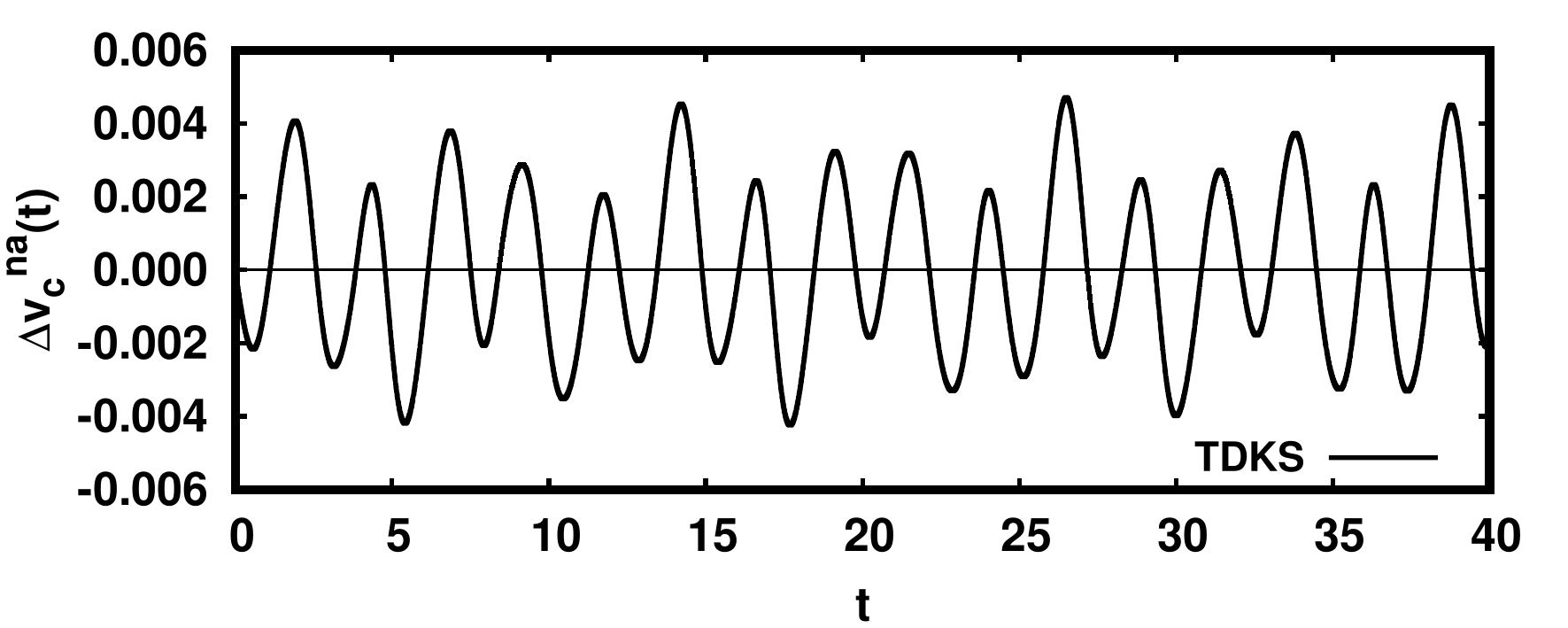}
\includegraphics[scale=0.27,angle=-90]{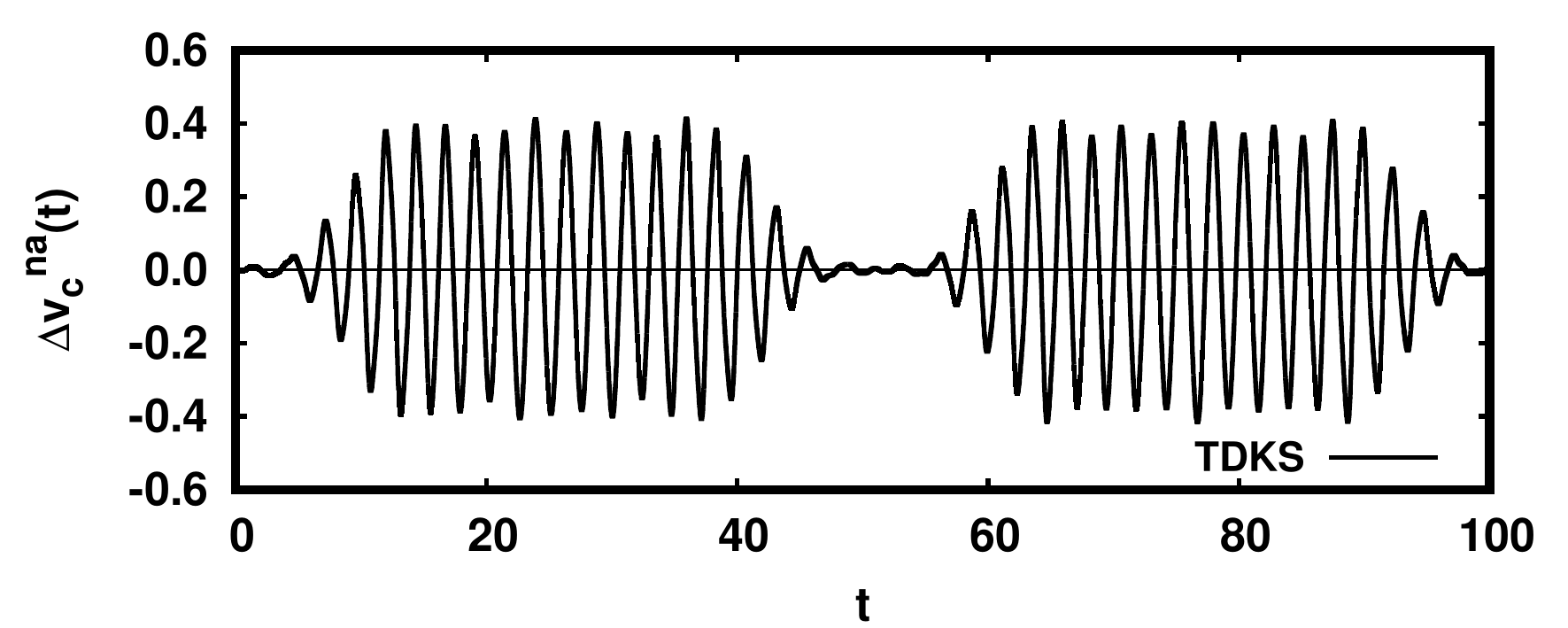}\\
\includegraphics[scale=0.27,angle=-90]{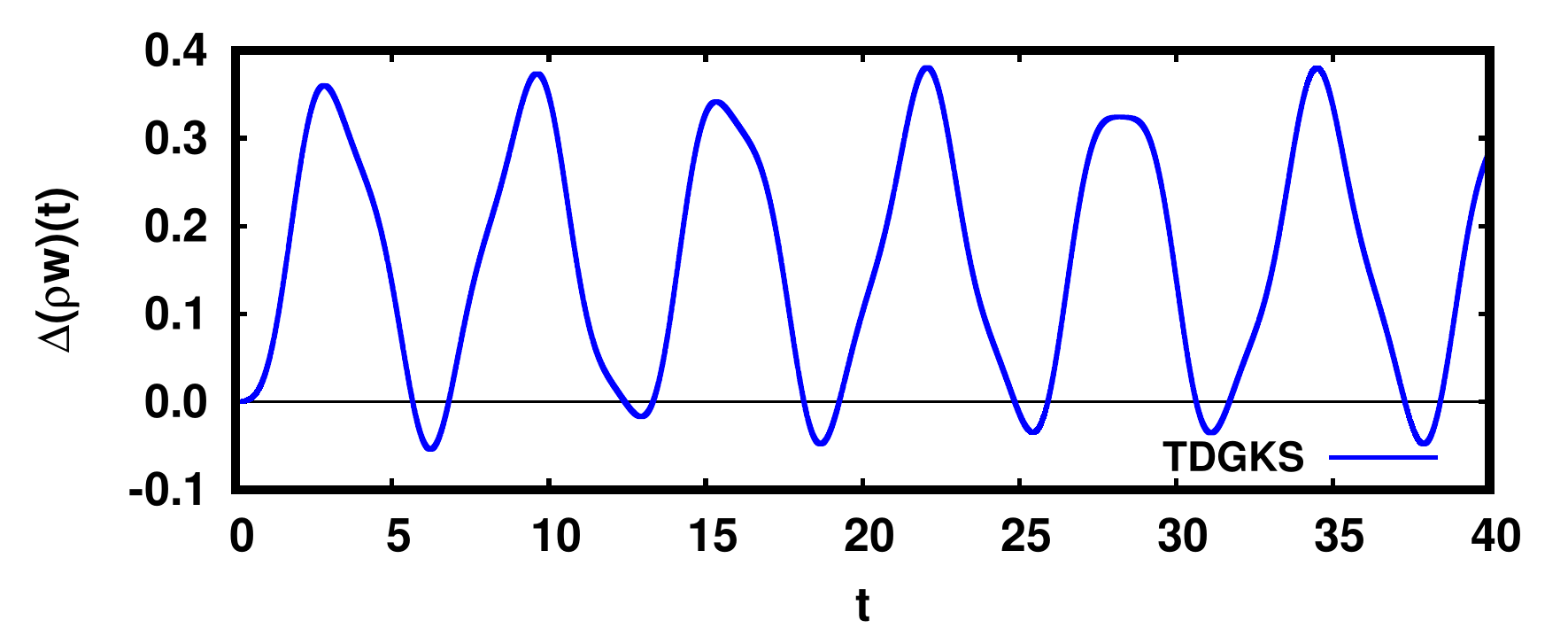}
\includegraphics[scale=0.27,angle=-90]{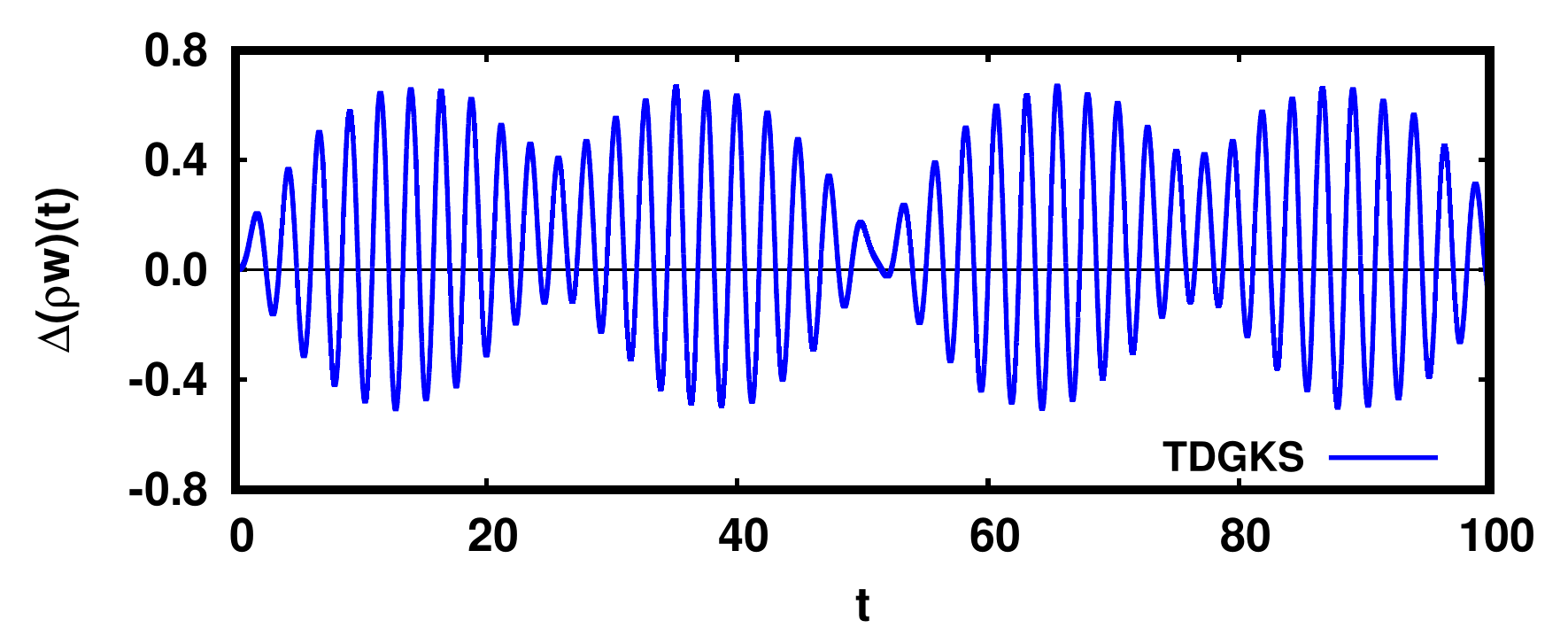}\\
\includegraphics[scale=0.27,angle=-90]{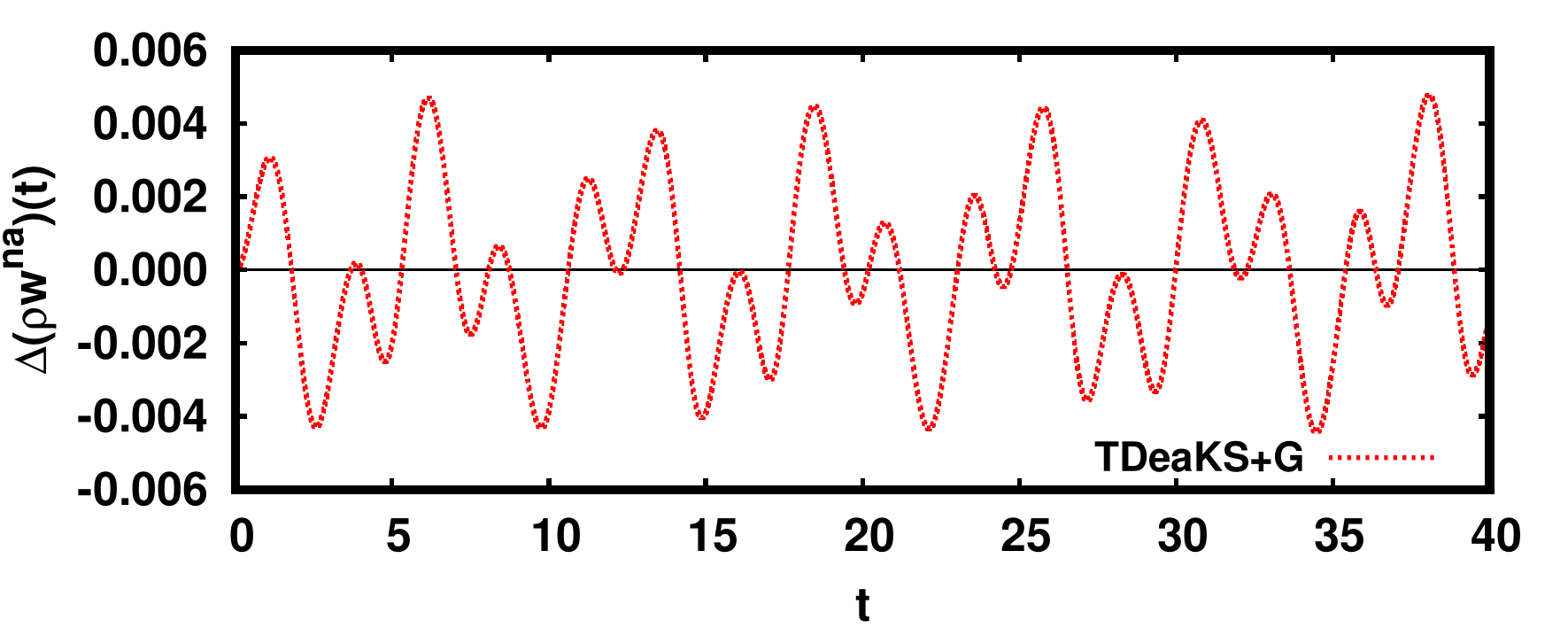}
\includegraphics[scale=0.27,angle=-90]{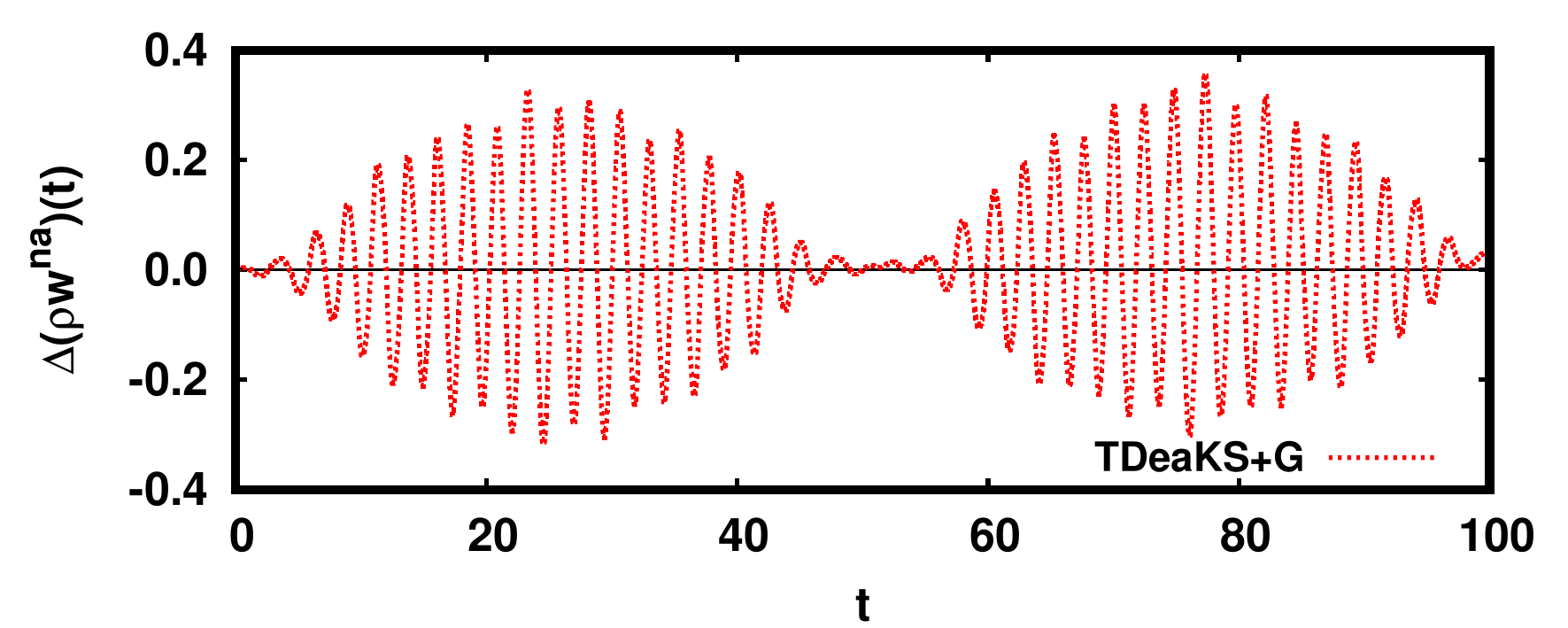}
\caption{Symmetric Hubbard dimer ($\tau=1$, ${\cal U}=1$, $\Delta v_{\text{ext}}^0 = 0$) starting from the initial delocalized ground state and driven by a time-dependent electric-dipole perturbation ($\cE_0=0.2$). The left panel corresponds to an off-resonant (nearly adiabatic) driving frequency ($\omega = 1$) and the right panel corresponds to a resonant (strongly non-adiabatic) driving frequency ($\omega = 2.56$). First row: TDSE and TDeaKS densities. Second row: TDKS and TDeaKS Hxc potentials. Third row: TDKS non-adiabatic correlation potential. Fourth row: TDGKS geometric potential. Fifth row: TDeaKS+G non-adiabatic geometric potential.}
\label{fig:HubbardDimer}
\end{figure*}

\begin{figure}[t]
\includegraphics[scale=0.22,angle=-90]{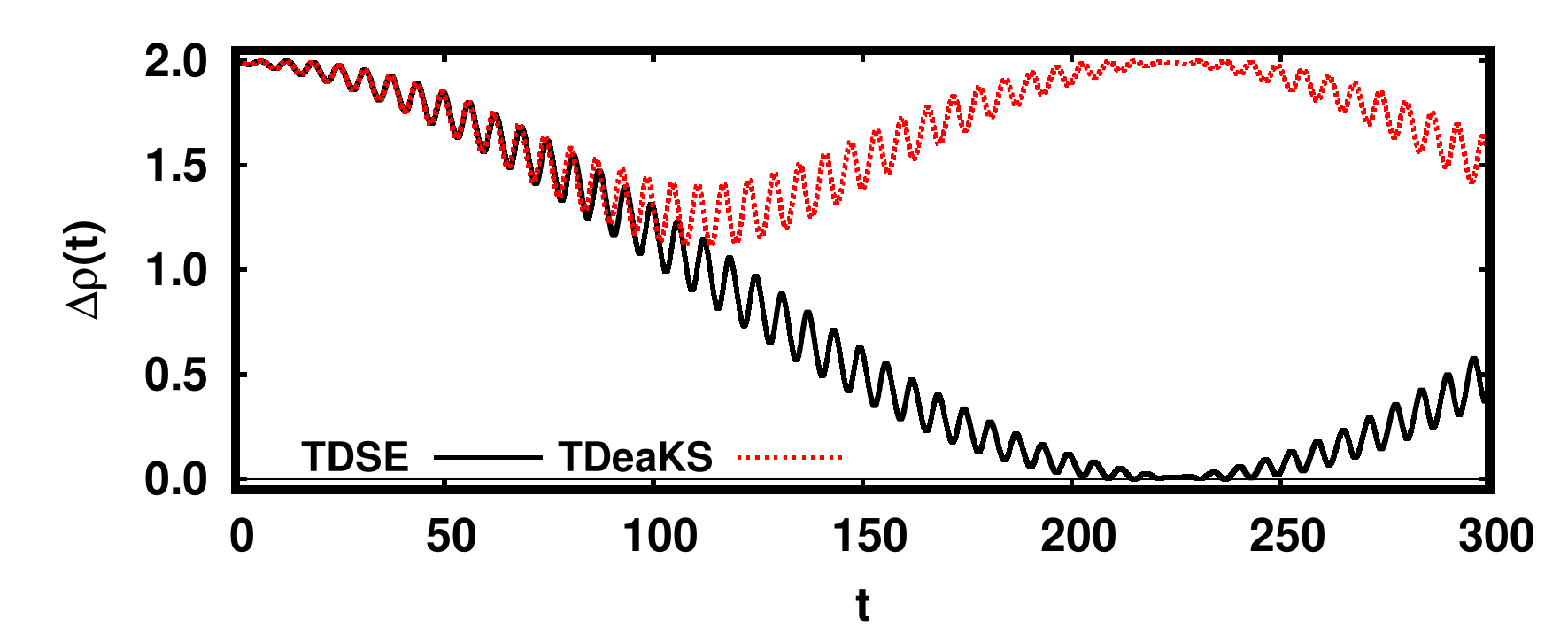}\\
\includegraphics[scale=0.22,angle=-90]{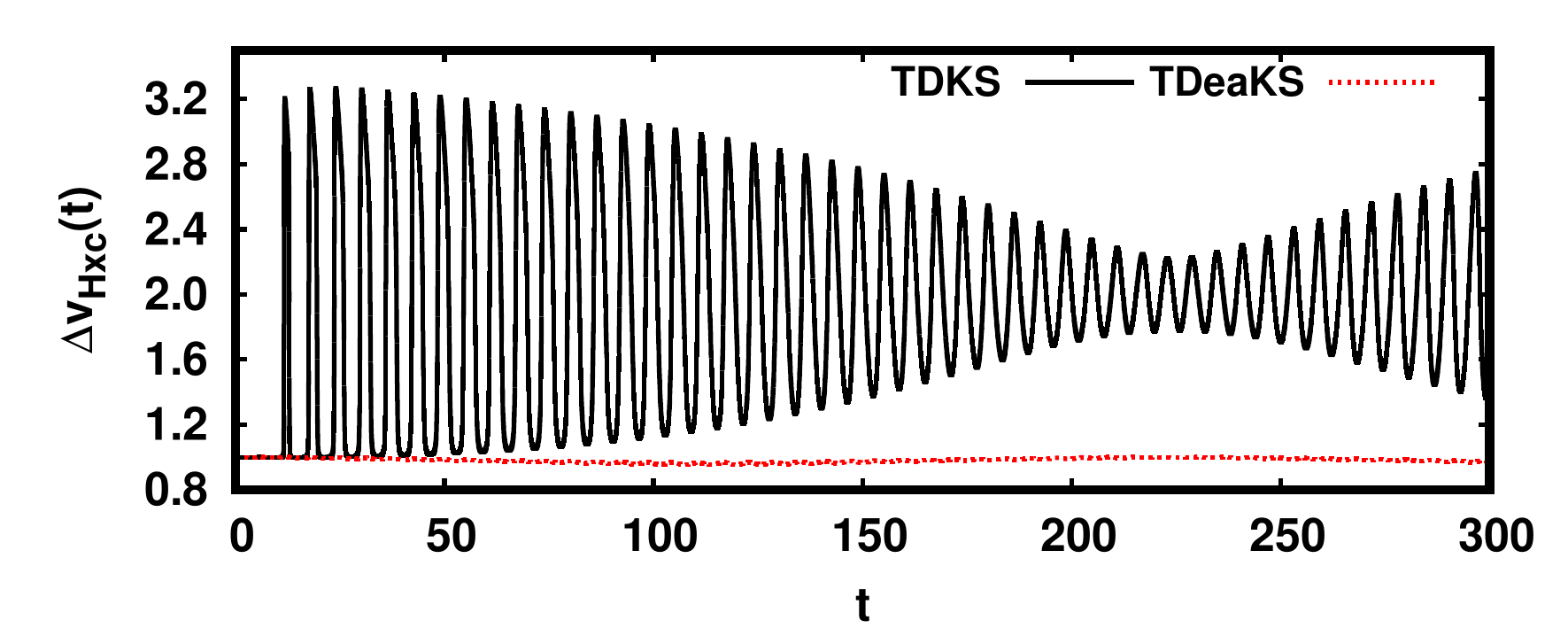}\\
\includegraphics[scale=0.22,angle=-90]{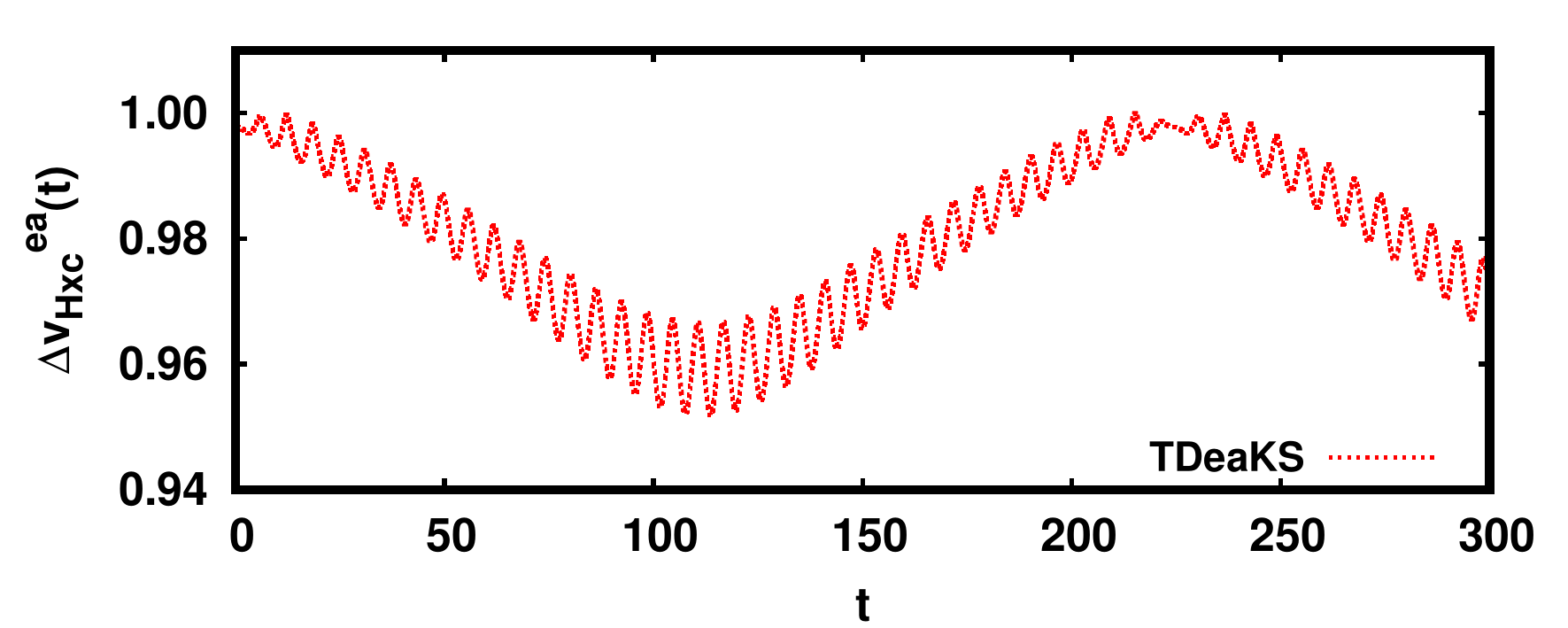}\\
\includegraphics[scale=0.22,angle=-90]{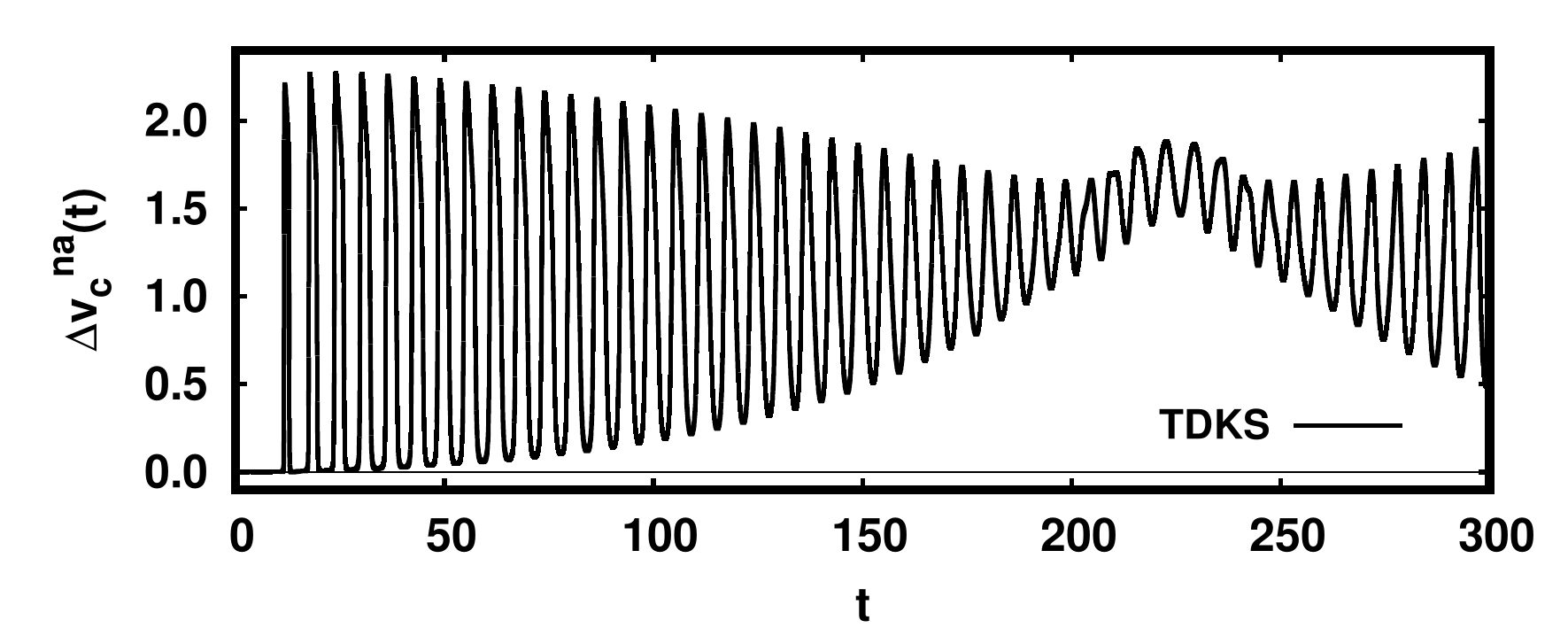}\\
\includegraphics[scale=0.22,angle=-90]{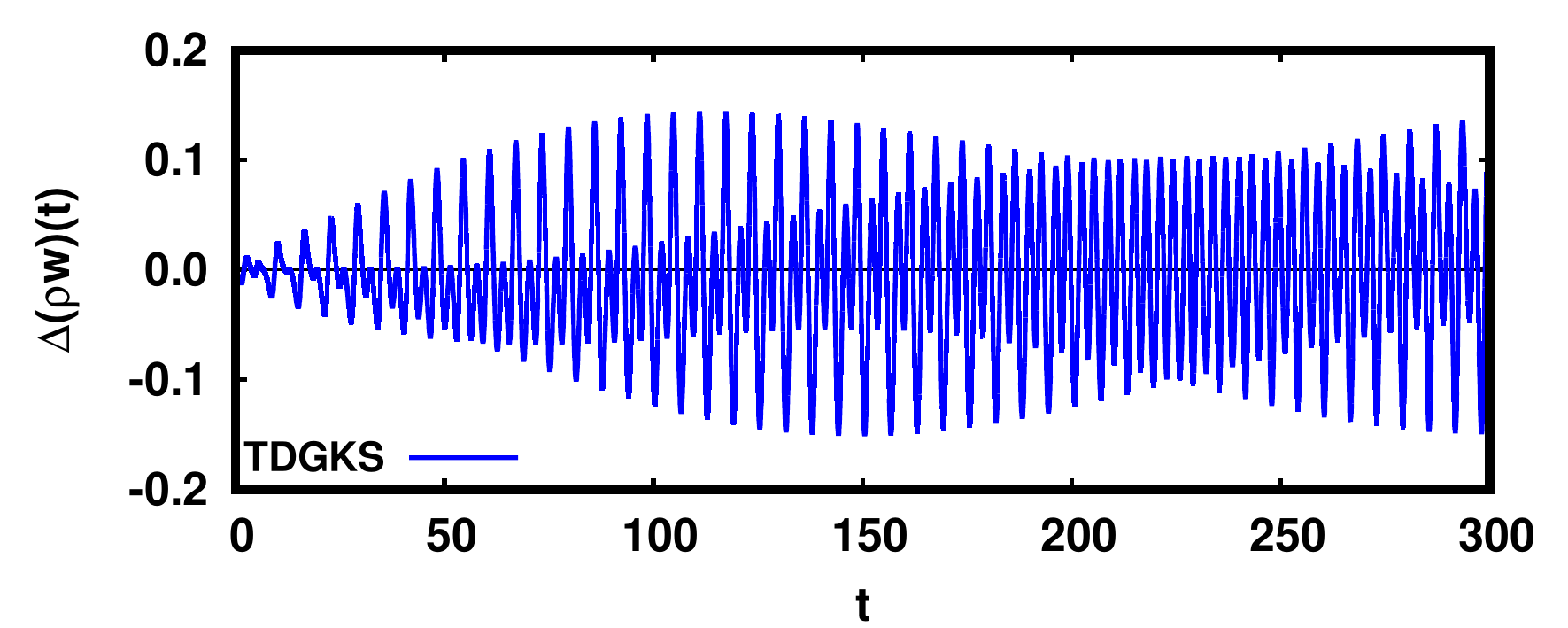}\\
\includegraphics[scale=0.22,angle=-90]{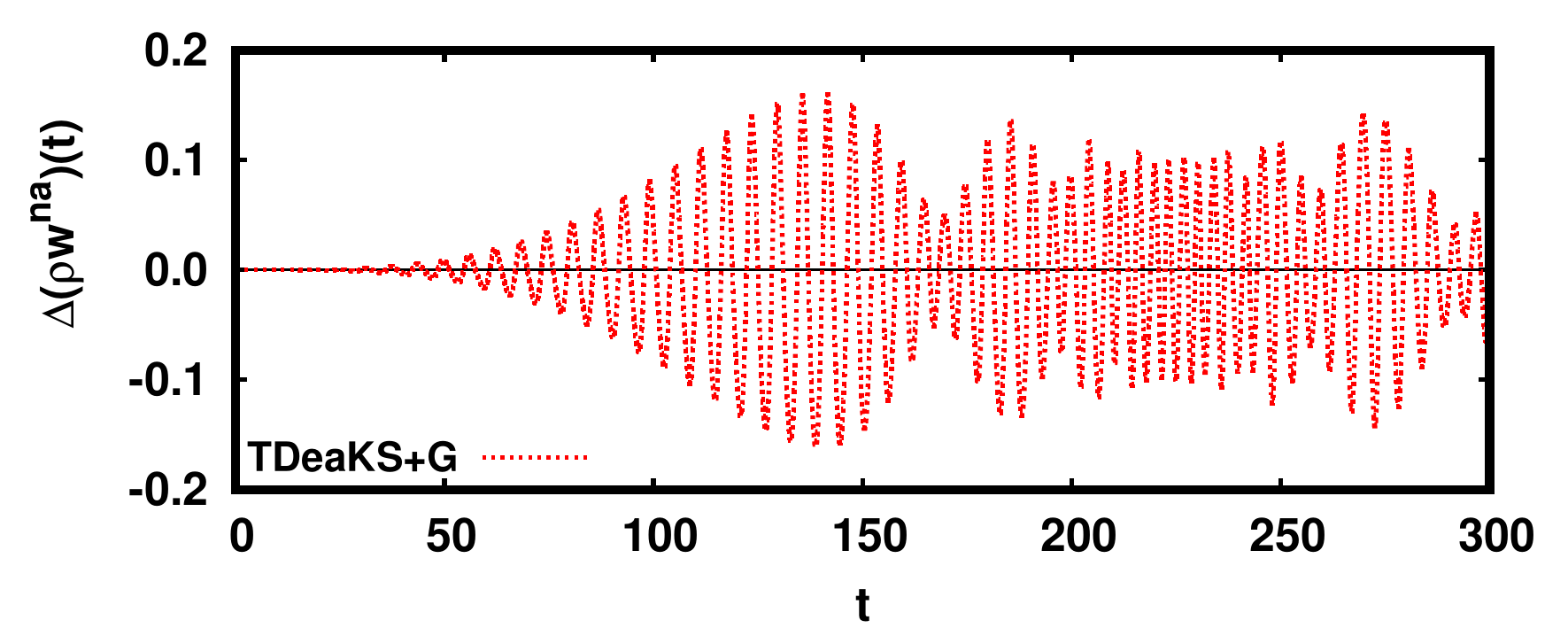}
\caption{Asymmetric Hubbard dimer ($\tau=0.05$, ${\cal U}=1$, $\Delta v_{\text{ext}}^0 = -2$) starting from the initial localized ground state and driven by a time-dependent electric-dipole perturbation ($\cE_0=0.2$) with resonant (non-adiabatic) driving frequency ($\omega = 1.0083$). First row: TDSE and TDeaKS densities. Second row: TDKS and TDeaKS Hxc potentials. Third row: zoom on TDeaKS. Fourth row: TDKS non-adiabatic correlation potential. Fifth row: TDGKS geometric potential. Sixth row: TDeaKS+G non-adiabatic geometric potential.}
\label{fig:HubbardDimer_CT}
\end{figure}

\paragraph*{Geometric Kohn--Sham scheme.}
We go on by solving the TDGKS equation, still starting from the exact ground-state Kohn--Sham orbital $\varphi_0 = \sqrt{\rho(0)/2}$,
\begin{align*}
i \partial_t \varphi^\text{GKS}(t) = \Big( h(t) + i w(t) \Big) \varphi^\text{GKS}(t),
\end{align*}
where $w(t)$ is the diagonal matrix containing the geometric potential
\begin{align*}
w(t) =
\begin{pmatrix}
w_{1}(t) & 0\\
0 & w_{2}(t)
\end{pmatrix},
\end{align*}
with the condition $\rho_1(t) w_1(t) + \rho_2(t) w_2(t) = 0$. We calculate the exact geometric potential $w(t)$ giving the exact density $\rho(t)$ using the modified continuity equation~\eqref{eq:modconteqGKS}. In this case, it is convenient to introduce the density-weighted potential difference $\Delta (\rho w)(t) = \rho_1(t) w_1(t) - \rho_2(t) w_2(t)$, from which $w_1(t)$ and $w_2(t)$ can be obtained as $w_1(t) = \Delta (\rho w)(t)/(2 \rho_1(t))$ and $w_2(t) = -\Delta (\rho w)(t)/(2 \rho_2(t))$. It can be checked that~\eqref{eq:modconteqGKS} leads to the following expression for $\Delta (\rho w)(t)$
\begin{align}
\Delta (\rho w)(t) = \frac{\Delta \rho'(t)}{2} + \tau \sqrt{4 - (\Delta \rho(t))^2} \sin \big(\beta(t) \big),
\label{eq:Deltarhow}
\end{align}
where now $\beta(t) = \arg(\varphi_2^\text{GKS}(t) \overline{\varphi_1^\text{GKS}(t)})$. Performing a similar calculation as in Section~\ref{sec:qubit_geometric}, the angle $\beta(t)$ satisfies the nonlinear differential equation
\begin{align}
\beta'(t) = \Delta v_\text{ext}(t) + 2 \tau \frac{\Delta \rho(t)}{\sqrt{4 - (\Delta \rho(t))^2}} \cos \big( \beta(t) \big),
\label{eq:dbetadt_dimer}
\end{align}
whose solution gives $\beta(t)$ as a nonlocal time functional of $\Delta v_\text{ext}$ and $\Delta \rho$, and of course depending on the initial condition $\beta(0)$. Note that, contrary to TDKS, there is no restriction in TDGKS on the values that the angle $\beta(t)$ can take.

\paragraph*{Exact adiabatic Kohn--Sham scheme with geometric correction.}
Finally, we solve the time-dependent exact adiabatic Kohn--Sham equation with geometric correction (TDeaKS+G)
\begin{align}
i \partial_t \varphi^\text{eaKS+G}(t) &= \Big( h(t) + v_\text{Hxc}^\text{ea}(t)
+  i w^\text{na}(t) \Big) \varphi^\text{eaKS+G}(t),
\label{eq:KS+Gdimer}
\end{align}
with $\rho_1(t) w_1^\text{na}(t) + \rho_2(t) w_2^\text{na}(t) = 0$. The exact non-adiabatic geometric potential $w^\text{na}(t)$ giving the exact density $\rho(t)$ is again obtained from the modified continuity equation~\eqref{eq:modconteqKS+G}. The density-weighted potential difference $\Delta (\rho w^\text{na})(t) = \rho_1(t) w_1^\text{na}(t) - \rho_2(t) w_2^\text{na}(t)$ still satisfies~\eqref{eq:Deltarhow} and~\eqref{eq:dbetadt_dimer} after making the replacement $\Delta v_\text{ext}(t) \to \Delta v_\text{ext}(t) + \Delta v_\text{Hxc}^\text{ea}(t)$.

In all cases, we calculate the dynamics with a home-made program using straightforward exponential propagation with a time step of $10^{-4}$.

\subsection{Numerical results}

We discuss now the numerical results for symmetric and asymmetric Hubbard dimers.

\paragraph*{Symmetric Hubbard dimer.}
In Figure~\ref{fig:HubbardDimer}, we report results for the symmetric Hubbard dimer (with parameters $\tau=1$, ${\cal U}=1$, $\Delta v_{\text{ext}}^0 = 0$) starting from the initial delocalized ground state ($\Delta \rho(0) = 0$) and driven by a time-dependent electric-dipole perturbation (with $\cE_0=0.2$). The left panel corresponds to an off-resonant (nearly adiabatic) driving frequency ($\omega = 1$) and the right panel corresponds to a resonant (non-adiabatic) driving frequency ($\omega = 2.56$, which is very close to the first excitation energy $E_1 - E_0$).

Let us start by discussing the results for the off-resonant driving frequency. The exact density $\Delta \rho(t)$ from the TDSE is slowly oscillating around zero at the driving perturbation period $2\pi/\omega \approx 6.3$. The TDeaKS density $\Delta \rho^\text{eaKS}(t)$, i.e. obtained with the exact adiabatic approximation, nearly follows the exact $\Delta \rho(t)$ at short times and then weakly differs from it at longer times. A similar behavior is observed for the exact Hxc potential $\Delta v_{\text{Hxc}}(t)$ and its self-consistent exact adiabatic approximation $\Delta v_{\text{Hxc}}^\text{ea,sc}(t)$. In fact, the time profiles of the potentials $\Delta v_{\text{Hxc}}(t)$ and $\Delta v_{\text{Hxc}}^\text{ea,sc}(t)$ closely ressemble the time profiles of their respective densities $\Delta \rho(t)$ and $\Delta \rho^\text{eaKS}(t)$, which can be understood from the fact that the time-dependent state stays close to the ground state for which the Hartree-exchange contribution is dominant, i.e. $\Delta v_{\text{Hxc}}(t) \approx \Delta v_{\text{Hx}}(t)={\cal U} \Delta \rho(t)/2$ and $\Delta v_{\text{Hxc}}^\text{ea,sc}(t) \approx \Delta v_{\text{Hx}}^\text{ea,sc}(t)={\cal U} \Delta \rho^\text{eaKS}(t)/2$. The exact non-adiabatic correlation potential $\Delta v_\text{c}^\text{na}(t)$, i.e. the potential needed to correct for the exact adiabatic approximation in standard TDKS, oscillates around zero and is two orders of magnitude smaller than $\Delta v_{\text{Hxc}}(t)$, confirming that non-adiabatic effects are very small for this off-resonant driving frequency. 
For the TDGKS calculations, the exact geometric potential $\Delta (\rho w)(t)$ has a somewhat peculiar time profile, which essentially comes from the $\sin (\beta(t))$ contribution in~\eqref{eq:Deltarhow}. The exact non-adiabatic geometry potential $\Delta (\rho w^\text{na})(t)$ from the TDeaKS+G scheme oscillates around zero and is of the same order of magnitude as $\Delta v_\text{c}^\text{na}(t)$, which could have been expected. 

Let us now discuss the results for the resonant driving frequency, leading to Rabi oscillations between the ground state and the first excited state with Rabi period $T_\text{R} = 2\pi/(|d_\text{ge}| \cE_0) \approx 51$, where $|d_\text{ge}|\approx 0.615$ is the transition dipole moment between the ground and excited state. The cases of TDKS and TDeaKS were discussed in~\cite{FukFarTokAppKurRub-PRA-13}, but note that we have made the choice of performing the TDeaKS calculations with the same driving frequency $\omega=2.56$ and not with the linear-response TDeaKS frequency as in~\cite{FukFarTokAppKurRub-PRA-13}. The exact density $\Delta \rho(t)$ has fast oscillations at the driving perturbation period $2\pi/\omega \approx 2.5$ with an envelope exhibiting slower oscillations of period $T_\text{R}/2 \approx 25.5$, which is expected since the ground- and excited-state densities are the same. The TDeaKS density $\Delta \rho^\text{eaKS}(t)$ follows the exact density at short times but quickly develops an increasing time delay in the envelope oscillations together with an overestimated amplitude.  The exact Hxc potential $\Delta v_{\text{Hxc}}(t)$ shows envelope oscillations at the Rabi period $T_\text{R}$, and not $T_\text{R}/2$ like the exact density. By contrast, the self-consistent exact adiabatic Hxc potential $\Delta v_{\text{Hxc}}^\text{ea,sc}(t)$ follows the time profile of the corresponding density $\Delta \rho^\text{eaKS}(t)$, with again $\Delta v_{\text{Hxc}}^\text{ea,sc}(t) \approx \Delta v_{\text{Hx}}^\text{ea,sc}(t) = \cU \Delta \rho^\text{eaKS}(t)/2$, showing that the adiabatic correlation contribution remains small. As expected, the exact non-adiabatic correlation potential $\Delta v_\text{c}^\text{na}(t)$ shows envelope oscillations at the Rabi period $T_\text{R}$ and represents an important contribution to $\Delta v_{\text{Hxc}}(t)$, especially at half Rabi cycles when it is dominant. Moving now to the TDGKS and TDeaKS+G calculations, we find that the exact geometric potential $\Delta (\rho w)(t)$ and the exact non-adiabatic geometry potential $\Delta (\rho w^\text{na})(t)$ also exhibit envelope oscillations at the Rabi period $T_\text{R}$. As we could have expected, $\Delta (\rho w^\text{na})(t)$ is quite significant for this case of resonant driving frequency and it is maximal at half Rabi cycles.

\paragraph*{Asymmetric Hubbard dimer.}
In Figure~\ref{fig:HubbardDimer_CT}, we report results for the asymmetric Hubbard dimer (with parameters $\tau=0.05$, ${\cal U}=1$, $\Delta v_{\text{ext}}^0 = -2$) starting from the initial localized ground state ($\Delta \rho(0) = 1.995$) and driven by a time-dependent electric-dipole perturbation (with $\cE_0=0.2$) with resonant (non-adiabatic) driving frequency ($\omega = 1.0083$, corresponding to the excitation energy $E_1 - E_0$ to the first excited state which has $\Delta \rho \approx 0$). This is a toy model of a long-range charge transfer in a molecule which was studied in~\cite{FukMai-PRA-14} for the cases of TDKS and TDeaKS. We see indeed that the exact TDSE density $\Delta \rho(t)$ evolves from its initial value of $1.995$ (corresponding to almost 2 electrons on site 1) to the value zero (corresponding to one electron on each site) at $t \approx 224$, corresponding to half a Rabi cycle. As observed in~\cite{FukMai-PRA-14}, the exact adiabatic approximation completely fails to describe this phenomenon: as seen from its density, TDeaKS is only able to transfer at most a fraction of electron from site 1 to site 2 at $t \approx 110$. The exact Hxc potential $\Delta v_{\text{Hxc}}(t)$ develops large-amplitude oscillations at the driving perturbation period $2\pi/\omega \approx 6.2$. As noticed in~\cite{FukMai-PRA-14}, the large peaks of $\Delta v_{\text{Hxc}}(t)$ correspond to the denominator $4 \tau^2 (4 - (\Delta \rho(t))^2) - (\Delta \rho'(t))^2$ in~\eqref{eq:DeltavHxc} being close to zero, meaning that we are near the edge of the non-interacting $v$-representability domain. Modulo these oscillations, $\Delta v_{\text{Hxc}}(t)$ overall makes a step from its ground-state value $\Delta v_{\text{Hxc}}^\text{gs} \approx 1$ to the value $\Delta v_{\text{Hxc}} \approx 2$ at half a Rabi cycle, the latter value being the value required to cancel out the static external potential $\Delta v_\text{ext}^0 = - 2$ and give the uniform density ($\Delta \rho = 0$) of the charge-transfer state. This step in $\Delta v_{\text{Hxc}}(t)$, essentially coming from the non-adiabatic correlation contribution $\Delta v_{\text{Hxc}}^\text{na}(t)$, is the way TDKS manages to transfer one electron from site 1 to site 2, in the absence of the two-electron repulsion. The self-consistent exact adiabatic Hxc potential $\Delta v_{\text{Hxc}}^\text{ea,sc}(t)$ is unable to reproduce this dynamical step. Moving now to the TDGKS and TDeaKS+G calculation, we see that the time profiles of the exact geometric potential $\Delta (\rho w)(t)$ and the exact non-adiabatic geometry potential $\Delta (\rho w^\text{na})(t)$ are quite different from those of the previously considered potentials. Neither $\Delta (\rho w)(t)$ nor $\Delta (\rho w^\text{na})(t)$ display any dynamical steps, but they present non-trivial oscillatory patterns, which can be traced back to the superposition of the two terms in the right-hand side of~\eqref{eq:Deltarhow}.

\subsection{An alternative definition of the adiabatic approximation}

We conclude this section with a remark on the possibility of using an alternative definition of the adiabatic approximation. In~\eqref{eq:vHxcea}, the exact adiabatic Hxc potential $v_{\text{Hxc}}^\text{ea}(t)$ is defined as the exact ground-state Hxc potential functional $v_{\text{Hxc}}^\text{gs}[\rho]$ evaluated at the exact time-dependent density $\rho(t)$
$$v_{\text{Hxc}}^\text{ea}(t) = v_{\text{Hxc}}^\text{gs}[\rho(t)]
= v_\text{s}^\text{gs}[\rho(t)] - v_\text{ext}^\text{gs}[\rho(t)],$$
where $v_\text{s}^\text{gs}[\rho]$ is the total ground-state Kohn--Sham potential (defined so that the non-interacting system has ground-state density $\rho$) and $v_\text{ext}^\text{gs}[\rho]$ is the total ground-state external potential, defined so that the interacting system has ground-state density $\rho$. An alternative definition of the exact adiabatic Hxc potential is
\begin{align}
\tilde{v}_{\text{Hxc}}^\text{ea}(t) = v_\text{s}^\text{gs}[\rho(t)] - v_\text{ext}(t),
\label{eq:vHxcea_alternative}
\end{align}
which is different from $v_{\text{Hxc}}^\text{ea}(t)$ since in general $v_\text{ext}^\text{gs}[\rho(t)] \not = v_\text{ext}(t)$. This leads to an alternative TDeaKS+G scheme
\begin{align*}
i \partial_t \tilde{\varphi}^\text{eaKS+G}(t) = \Big( h(t) + \tilde{v}_\text{Hxc}^\text{ea}(t)
+  i \tilde{w}^\text{na}(t) \Big) \tilde{\varphi}^\text{eaKS+G}(t),
\end{align*}
where the alternative exact non-adiabatic geometric potential $\tilde{w}^\text{na}(t)$ is still determined from~\eqref{eq:modconteqKS+G} so as to give the exact density $\rho(t)$. The corresponding density-weighted potential difference $\Delta (\rho \tilde{w}^\text{na})(t)$ still satisfies~\eqref{eq:Deltarhow} and~\eqref{eq:dbetadt_dimer} after making the replacement $\Delta v_\text{ext}(t) \to \Delta v_\text{ext}(t) + \Delta \tilde{v}_\text{Hxc}^\text{ea}(t)$. Contrary to $\Delta v_{\text{Hxc}}^\text{ea}(t)$, the alternative exact adiabatic Hxc potential difference $\Delta \tilde{v}_{\text{Hxc}}^\text{ea}(t)$ has a simple explicit expression
\begin{align*}
\Delta \tilde{v}_\text{Hxc}^\text{ea}(t) =& -\frac{2\tau \Delta \rho(t)}{\sqrt{4 - (\Delta \rho(t))^2}} - \Delta v_\text{ext}(t),
\label{eq:DeltavHxcea_alternative}
\end{align*}
which, after insertion in~\eqref{eq:dbetadt_dimer}, gives $\beta(t)=\beta(0)=0$ and thus
\begin{align*}
\Delta (\rho \tilde{w}_\text{na})(t) = \frac{\Delta \rho'(t)}{2}.
\end{align*}
This is a remarkable simple expression, which may suggest that the alternative definition of the exact adiabatic Hxc potential in~\eqref{eq:vHxcea_alternative} is perhaps a better definition, at least in the context of our geometric TDDFT approach. Note, however, that it would not make much sense to define an alternative self-consistent exact adiabatic Kohn--Sham scheme of the form $i \partial_t \tilde{\varphi}^\text{eaKS}(t) = ( h(t) + \tilde{v}_\text{Hxc}^\text{ea,sc}(t) ) \tilde{\varphi}^\text{eaKS}(t)$ with an alternative self-consistent exact adiabatic Hxc potential $\tilde{v}_{\text{Hxc,sc}}^\text{ea}(t) = v_\text{s}^\text{gs}[\tilde{\rho}^\text{eaKS}(t)] - v_\text{ext}(t)$ where $\tilde{\rho}^\text{eaKS}_m(t)=2|\tilde{\varphi}_m^\text{eaKS}(t)|^2$, since in this case $v_\text{ext}(t)$ would just cancel out and the density $\tilde{\rho}^\text{eaKS}(t)$ would remain equal to the initial ground-state density at all times.

\section{Conclusion}\label{sec:conclusion}
In this article, we have developed a geometric framework for Schrödinger dynamics with constraints and used this formalism to revisit the foundations of TDDFT. The resulting time-dependent Kohn--Sham-type equation involves a correction term that can either be interpreted as a local imaginary potential $w(t)$ or, better, a nonlocal exchange-type Hermitian operator (Eq.~\eqref{eq:KS_McLachlan_commutator} or Eq.~\eqref{eq:KS+G}). Numerical tests on the Hubbard dimer show that the geometric correction $w(t)$ has a very different structure from that of the standard exact time-dependent Kohn--Sham potential. This works paves the way for alternative approximations in TDDFT, which could potentially better describe systems in non-adiabatic regimes. In~\cite{MAQUI_TDDFT-26_ppt} we study the continuous case and introduce a geometric Kohn--Sham scheme based on a \emph{universal} functional (independent of the external potential).


\begin{acknowledgments}
This work has benefited from French State support managed by ANR under the France 2030 program through the MaQuI CNRS Risky and High-Impact Research programme (RI)$^2$ (grant agreement ANR-24-RRII-0001).
\end{acknowledgments}

\appendix

\section{The oblique principle for a single qubit} \label{app:qubit_oblique}

We study here the $\theta\to0$ limit of the oblique principle for a single qubit that we considered in Section~\ref{sec:qubit_oblique}. Our goal is to explain the rather singular behavior observed in Figure~\ref{fig:qubit_oblique}.

We look at the time-independent case $\rho_1(t)\equiv\rho_1$. Then~\eqref{eq:qbit_oblique} becomes an autonomous equation of the form
\begin{equation}
\tan(\theta)\,\beta'(t)=F_\theta\big(\beta(t)\big)
\label{eq:autonomous}
\end{equation}
with
$$F_\theta(x)=\frac{\rho_1+\rho_2}{\sqrt{\rho_1\rho_2}}\sin(x)
-\tan(\theta)\cos(x)\frac{\rho_2-\rho_1}{\sqrt{\rho_1\rho_2}}.$$
The function $F_\theta$ admits two consecutive zeroes $R_\theta$ and $R'_\theta$ that are close to $0$ and $\pi$, respectively, in the limit $\theta\to0$. In fact, taking $x=\pm\theta$ and $x=\pi\pm\theta$, we see that $|R_\theta|\leq\theta$ and $|R'_\theta-\pi|\leq\theta$. Hence if we start with a $\beta_0\in(0,\pi)$ we have $R_\theta<\beta_0<R'_\theta$ for $\theta$ small enough. In this case the solution $\beta_\theta(t)$ to the autonomous equation~\eqref{eq:autonomous} with $\beta_\theta(0)=\beta_0$ is increasing in time for $\theta>0$ and decreasing for $\theta<0$. It converges to either $R_\theta$ or $R_\theta'$ in the limits $t\to\pm\ii$.

When we take the limit $\theta\to0$, the coefficient $\tan(\theta)$ in front of the derivative compresses the function and increases the speed of convergence to its limits, so that it converges to a step function as we have observed in Figure~\ref{fig:qubit_oblique}. Let us for instance explain this phenomenon in the case $\theta>0$, where $\beta_\theta(t)$ is increasing in $t$. For $x$ in the interval $(\beta_0,\pi-\sqrt\theta)$ we have $F_\theta(x)\geq c\sqrt\theta x$ for some $c>0$. Hence $\beta_\theta'(t)\geq c\frac{\sqrt\theta}{\tan(\theta)}\beta_\theta(t)$ whenever $\beta_\theta(t)\leq \pi-\sqrt\theta$.
This shows that the time $T_\theta$ such that $\beta_\theta(T_\theta)=\pi-\sqrt\theta<R_\theta'$ must satisfy
$$ T_\theta\leq \frac{\tan(\theta)}{c\sqrt\theta}\log\left(\frac{\pi-\sqrt\theta}{\beta_0}\right).$$
In other words, $\beta_\theta(t)$ must have passed $\pi-\sqrt\theta$ at times of order $\sqrt{\theta}$. Now, if we fix a time $t>0$ and take $\theta\to0^+$, we find that $\beta_\theta(t)$ converges to $\pi$. We arrive at the claimed convergence
\begin{equation}
\lim_{\theta\to 0^+}\beta_\theta(t)=\begin{cases}
\beta_0&\text{for $t=0$,}\\
\pi&\text{for $t>0$.}
\end{cases}
\label{eq:limit_angle_oblique}
\end{equation}
When $\theta<0$ the limit is
$$\lim_{\theta\to 0^-}\beta_\theta(t)=\begin{cases}
\beta_0&\text{for $t=0$,}\\
0&\text{for $t>0$.}
\end{cases}
$$
This was all for $0<\beta_0<\pi$. If we start with $\beta_0=0$ or $\pi$ the analysis is similar but we have to determine whether the zero $R_\theta$ is positive or negative and this depends on the sign of $\rho_2-\rho_1$. Finally, if $\beta_0\in\{0,\pi\}$ and $\rho_1=1/2$ then we are at an eigenfunction of $H$ and nothing happens.

Our conclusion is that, in the limit $\theta\to0$, the solution $\psi_\theta(t)$ to the oblique principle goes extremely fast to one of the two eigenfunctions of $H+v_1\cO_1$, which are the two solutions to the variational principle. This happens for all initial conditions $\psi(0)$. The oblique principle picks a different state depending on the sign of $\theta$ (and possibly that of $\rho_2-\rho_1$). On the other hand, the potentials $u^\theta_m(t)$ can be seen to converge to Dirac delta's in the limit. These delta potentials are here to modify the initial condition and replace it by the desired eigenfunction. Indeed, we can express $u^\theta(t)$ in terms of the derivative $\beta_\theta'$ as follows
$$
\left\{
\begin{aligned}
u^\theta_1(t)
&=\frac{\rho_2}{(\rho_1+\rho_2)\cos(\theta)}\left(\beta_\theta'(t)+\cos(\beta_\theta(t))\frac{\rho_2-\rho_1}{\sqrt{\rho_2\rho_1}}\right)\\
u^\theta_2(t)
&=\frac{-\rho_1}{(\rho_1+\rho_2)\cos(\theta)}\left(\beta_\theta'(t)+\cos(\beta_\theta(t))\frac{\rho_2-\rho_1}{\sqrt{\rho_2\rho_1}}\right)
\end{aligned}
\right.
$$
and therefore we obtain
$$
\left\{
\begin{aligned}
\dps\lim_{\theta\to0^+}u^\theta_1(t) & = \frac{\rho_2}{\rho_1+\rho_2}(\pi-\beta_0)\delta_0(t)-\frac{\rho_2-\rho_1}{\sqrt{\rho_1\rho_2}}\frac{\rho_2}{\rho_1+\rho_2}\\
\dps\lim_{\theta\to0^+}u^\theta_2(t) &= \frac{\rho_1}{\rho_1+\rho_2}(\beta_0-\pi)\delta_0(t) +\frac{\rho_2-\rho_1}{\sqrt{\rho_1\rho_2}}\frac{\rho_1}{\rho_1+\rho_2}    
\end{aligned}
\right.
$$
by~\eqref{eq:limit_angle_oblique} for $0<\beta_0<\pi$. We get the expected Dirac delta, together with constant potentials. The latter are equal to the stationary solution $v_1[\rho_1]=\sqrt{\rho_2/\rho_1}-\sqrt{\rho_1/\rho_2}$ from~\eqref{eq:qubit_v_1}, up to a global constant $v_1[\rho_1]\rho_1/(\rho_1+\rho_2)$. This shift only introduces an additional phase in the state.

\section{Proof of Theorem~\ref{thm:GS_K} on the invertibility of \texorpdfstring{$K^\Psi$}{KPsi} for ground states}\label{app:proof_matrix_K}

In this appendix we provide the proof of Theorem~\ref{thm:GS_K} that states that when $\Psi$ is a non-degenerate ground state of $\bH_U+\cV$ and $S^{\Psi}$ is invertible, then $K^{\Psi}$ must be invertible on the orthogonal of the constant potentials.

Let us consider an arbitrary potential $v_m$ and again the operator $\cV:=\sum_{m=1}^Mv_m\cN_m$. We note that
\begin{align*}
\pscal{v,K^{\Psi}v}_{\R^M}&=\sum_{m,n=1}^Mv_mv_n K^{\Psi}_{mn}
= \frac12\pscal{\Psi,[\cV,[\bH_U,\cV]]\Psi}.
\end{align*}
The double commutator equals
$$[\cV,[\bH_U,\cV]]=2\cV \bH_U\cV-\cV^2\bH_U-\bH_U\cV^2$$
so that using $\bH_U\Psi=E_0\,\Psi$ with $E_0$ the ground-state energy, we find
\begin{align*}
\pscal{v,K^{\Psi}v}_{\R^M}&=\frac12\pscal{\Psi,[\cV,[\bH_U,\cV]]\Psi}\\
&=\pscal{\cV\Psi,(\bH_U-E_0)\cV\Psi}\\
&\geq g\left(\pscal{\cV\Psi,\cV\Psi}-|\pscal{\Psi,\cV\Psi}|^2\right)\\
&= g\norm{\cV\Psi-\pscal{\Psi,\cV\Psi}\Psi}^2.
\end{align*}
Here $g=E_1-E_0>0$ is the gap above the ground-state energy, that we have assumed to be strictly positive (this is the non-degeneracy of $\Psi$). The right-hand side of the last equation is the norm of the projection of $\cV\Psi$ on the orthogonal to $\Psi$ and it can also be interpreted as the variance of the observable $\cV$. The previous inequality implies that the Hermitian matrix $K^{\Psi}$ only has non-negative expectations, hence a non-negative spectrum. Furthermore, if $v$ is in its kernel then the left-hand side vanishes and so must do the right-hand side. But then we find
$$\cV\Psi-\pscal{\Psi,\cV\Psi}\Psi=\sum_{m=1}^M\!\!\left(v_m-\frac{\pscal{\Psi,\cV\Psi}}{N}\right)\cN_m\Psi=0$$
since $\sum_{m=1}^M\cN_m\Psi=N\Psi$. From~\eqref{eq:Hohenberg--Kohn} we conclude as we wanted that $v\equiv \pscal{\Psi,\cV\Psi}/N$ is constant. In other words, we have proved that the kernel of $K^{\Psi}$ only contains the constant potentials, corresponding to the trivial gauge consisting of adding global phases to the state. This kernel can be removed by erasing $\cN_M$ from the list of constraints, for instance, or by fixing $\sum_{m=1}^Mv_m(\rho_{\Psi})_m=\pscal{\Psi,\cV\Psi}=0$. This concludes the proof of Theorem~\ref{thm:GS_K}.

\section{On the invertibility of \texorpdfstring{$S^\Psi$}{Spsi}}\label{sec:independence}
Let us go back to the case of $N$ spin--$1/2$ fermions hopping on $M$ sites that we studied in Section~\ref{sec:Hubbard}. The whole geometric picture of our work relies on the invertibility of the matrix
$$(S^\Psi)_{mn}=\Re\pscal{\cN_m\Psi,\cN_n\Psi},$$
with $\cN_m$ the number operator at site $m$ defined in~\eqref{eq:cN}. In this appendix we relate the invertibility of $S^\Psi$ to an irreducibility property of the one-particle density matrix $\gamma_\Psi$ and we discuss what we can do when $S^\Psi$ is not invertible.

First, we recall that the two-particle density matrix is defined by
\begin{multline*}
\gamma_\Psi^{(2)}(m_1\sigma_1,m_2\sigma_2;m'_1\sigma'_1,m'_2\sigma'_2)\\:=\pscal{\Psi,a^\dagger_{m_1\sigma_1}a^\dagger_{m_2\sigma_2}a_{m_2'\sigma_2'}a_{m_2\sigma_2}\Psi}
\end{multline*}
whereas the two-particle density is its spin-averaged diagonal:
$$(\rho_\Psi^{(2)})_{m_1,m_2}:=\sum_{\sigma_1,\sigma_2\in\{\uparrow,\downarrow\}}\gamma_\Psi^{(2)}(m_1\sigma_1,m_2\sigma_2;m_1\sigma_1,m_2\sigma_2).$$
Using the definition of $\cN_m$, we can derive a formula for $S^\Psi$ in terms of the one-particle and two-particle densities only
\begin{equation}
(S^\Psi)_{mn}=(\rho_\Psi^{(2)})_{mn}+(\rho_\Psi)_m\delta_{m,n}.
\label{eq:S_in_terms_of_rho_2}
\end{equation}
We also recall that we can write the expectation of $S^\Psi$ for a vector $v\in\R^M$ as
$$\pscal{v,S^\Psi v}_{\R^M}=\norm{\cV\Psi}^2,$$
with $\cV:=\sum_{m=1}^Mv_m\cN_m$.
Hence $S^\Psi$ is invertible if and only if it satisfies the unique $v$-representability property~\eqref{eq:Hohenberg--Kohn}. As a first step we give a simple invertibility criterion in terms of $\gamma_\Psi$ only, that relies on the following concept.

\begin{definition}[Irreducibility of $\gamma$]\label{def:irreducible}
We say that a one-particle density matrix $\gamma$ \emph{acts irreducibly} (or simply \emph{is irreducible}) when no strict subset $J$ of $\{1,...,M\}$ is stabilized by $\gamma$. In other words, we have $[\1_J,\gamma]\neq0$ where $\1_J$ denotes the diagonal matrix so that $(\1_J)_{j\sigma,j\sigma'}=1$ if $j\in J$ and $\sigma=\sigma'$, and 0 otherwise.
\end{definition}

It is equivalent to require that for any strict subset $J\subset \{1,...,M\}$, we can find $m\in J$ and $m'\notin J$ together with $\sigma,\sigma'\in\{\uparrow,\downarrow\}$ such that $\gamma_{m\sigma,m'\sigma'}\neq0$. This means that $\gamma$ should have sufficiently many off-diagonal terms, so that any subset $J$ of the $M$ sites is linked to its complement.

We can also characterize the irreducibility of $\gamma$ in terms of the real symmetric matrix defined by
$$(\tilde{S}^\gamma)_{mn}:=-\frac12\tr\big([\gamma,\delta_m][\gamma,\delta_n]\big)=\sum_{\sigma,\sigma'\in\{\uparrow,\downarrow\}}|\gamma_{m\sigma,n\sigma'}|^2.$$
We recall that $\delta_m$ is the diagonal matrix so that  $(\delta_m)_{j\sigma,j\sigma'}=1$ if $j=m$ and $\sigma=\sigma'$, and 0 otherwise. This is just the orthogonal projection on the $m$th site. The link with the other matrix $S^\Psi$ will become clear later.
Denoting by $\|A\|_{\rm HS}=\sqrt{\tr(A^\dagger A)}$ the Hilbert--Schmidt norm of a matrix $A$, we have
\begin{align}
\pscal{v,\tilde S^\gamma v}_{\R^M}&=\sum_{m,n=1}^Mv_nv_m(\tilde{S}^\gamma)_{mn}\nn\\
&=\frac12\tr \left( (i[v,\gamma])^2 \right)=\frac12\big\|[v,\gamma]\big\|_{\rm HS}^2\label{eq:formula_tilde_S}
\end{align}
where, as usual, we see $v$ as a vector on the first line and as the corresponding diagonal matrix on the second line. This shows that $\tilde S^\gamma$ has a non-negative spectrum. It is clear that the constant potential always belongs to the kernel of $\tilde S^\gamma$:
$$(1,...,1)\in\ker(\tilde S^\gamma).$$
The following theorem says that the irreducibility of $\gamma$ is equivalent to the kernel of $\tilde S^\gamma$ having dimension one.

\begin{theorem}[Irreducibility criterion]
A  one-particle density matrix $\gamma$ is irreducible if and only if $\ker\tilde{S}^\gamma={\rm span}\{(1,...,1)\}$, that is, the kernel has dimension 1.
\end{theorem}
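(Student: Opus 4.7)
The plan is to exploit the identity \eqref{eq:formula_tilde_S}, which characterizes the kernel of $\tilde S^\gamma$ as exactly the set of real vectors $v=(v_1,\ldots,v_M)$ such that the associated diagonal matrix $\mathrm{diag}(v_1,\ldots,v_M)\otimes I_2$ on $\C^{2M}$ commutes with $\gamma$. Indeed, $\pscal{v,\tilde S^\gamma v}_{\R^M}=\tfrac12\|[v,\gamma]\|_{\rm HS}^2$ vanishes if and only if $[v,\gamma]=0$. With this reformulation, the theorem reduces to showing: $\gamma$ is irreducible iff the only diagonal (on sites, scalar on spin) matrices commuting with $\gamma$ are multiples of the identity.

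For the ``reducible $\Rightarrow$ nontrivial kernel'' direction, suppose $\gamma$ is reducible. Then by \Cref{def:irreducible} there exists a non-trivial strict subset $J\subsetneq\{1,\ldots,M\}$ with $[\1_J,\gamma]=0$. The associated real vector $v^{(J)}\in\R^M$ defined by $v_m^{(J)}=1$ if $m\in J$ and $0$ otherwise is non-constant (since $J$ is a proper non-empty subset) and lies in $\ker\tilde S^\gamma$ by the identity above. Hence $\ker\tilde S^\gamma$ strictly contains $\tspan\{(1,\ldots,1)\}$.

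For the ``irreducible $\Rightarrow$ kernel is one-dimensional'' direction, let $v\in\ker\tilde S^\gamma$, so that $[v,\gamma]=0$ when $v$ is viewed as the diagonal matrix $\mathrm{diag}(v_1,\ldots,v_M)\otimes I_2$. Let $\lambda_1,\ldots,\lambda_p$ be the distinct values taken by the $v_m$'s, and set $J_k:=\{m\in\{1,\ldots,M\}\,:\,v_m=\lambda_k\}$. Then $v=\sum_{k=1}^p\lambda_k\1_{J_k}$ is the spectral decomposition of the (diagonal) Hermitian matrix $v$, so each spectral projector $\1_{J_k}$ is a polynomial in $v$ and therefore also commutes with $\gamma$. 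The $J_k$'s form a partition of $\{1,\ldots,M\}$. If $p\geq2$, then each $J_k$ is a non-empty strict subset of $\{1,\ldots,M\}$ satisfying $[\1_{J_k},\gamma]=0$, contradicting the irreducibility of $\gamma$. Hence $p=1$, meaning $v$ is constant, and $\ker\tilde S^\gamma=\tspan\{(1,\ldots,1)\}$.

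The main subtlety is just the bookkeeping about spin: $v$ corresponds to a matrix that is diagonal on sites but acts as the identity on the spin indices, so the spectral projectors $\1_{J_k}$ inherit the same structure, and the condition $[\1_{J_k},\gamma]=0$ on $\C^{2M}$ is exactly the irreducibility condition as formulated in \Cref{def:irreducible}. No genuine analytic difficulty arises; the proof is purely algebraic and rests on the identification of $\ker\tilde S^\gamma$ with the commutant of $\gamma$ inside the algebra of site-diagonal spin-scalar matrices.
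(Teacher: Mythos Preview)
Your proof is correct and takes essentially the same approach as the paper: both use the identity \eqref{eq:formula_tilde_S} to reduce membership in $\ker\tilde S^\gamma$ to the commutation $[v,\gamma]=0$, and then pass from $v$ to its spectral projections $\1_{J_k}$ (which commute with $\gamma$ since they are polynomials in $v$) to conclude that a non-constant $v$ would yield a non-trivial invariant subset. The only cosmetic difference is that the paper picks a single level set $J=\{m:v_m=v_1\}$ while you decompose into all level sets at once.
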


\begin{proof}
If $\gamma$ commutes with $\1_J$ for a non trivial $J$, then $\1_J$ is also in the kernel by~\eqref{eq:formula_tilde_S}. Hence $\ker(\tilde S^\gamma)$ has multiplicity 2 or more. Conversely, if $\gamma$ acts irreducibly, let us consider an arbitrary vector $v$ in the kernel of $\tilde S^\gamma$, which means that $[\gamma,v]=0$.  Let $J$ be the set of indices $m$ so that $v_m=v_1$. Since $\gamma$ commutes with $v$ it must commute with the spectral projection $\1_J=\1(v=v_1)$ (by~\cite[Sec.~4.9]{Lewin-Spectral}) and we conclude that necessarily $\1_J\in\ker(\tilde S^\gamma)$. The irreducibility assumption tells us that $J=\{1,...,M\}$, hence $v$ is constant and the kernel has multiplicity one, as claimed.
\end{proof}

Next we turn to the link with the matrix $S^\Psi$. We can prove that the irreducibility of $\gamma_\Psi$ implies the invertibility of $S^\Psi$ and is even equivalent to it for Slater determinants.

\begin{theorem}[Irreducibility of $\gamma_\Psi$ and invertibility of $S^\Psi$]\label{thm:irreducible}
Let $\Psi$ be a quantum state so that $\gamma_\Psi$ acts irreducibly as in Definition~\ref{def:irreducible}. Then $S^\Psi$ is invertible. If $\Psi$ is a Slater determinant, the two properties are in fact equivalent.
\end{theorem}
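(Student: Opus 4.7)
The plan is to exploit the identity $\pscal{v, S^\Psi v}_{\R^M} = \|\cV\Psi\|^2$, where $\cV = \sum_m v_m \cN_m$ is the second quantization of the diagonal one-body matrix $v = \sum_m v_m \delta_m$. In particular, $S^\Psi$ fails to be invertible if and only if there exists a nonzero $v \in \R^M$ with $\cV\Psi = 0$. Contrary to the $K^\Psi$ situation, constant vectors cause no trivial degeneracy: $v = c(1,\dots,1)$ gives $\cV\Psi = cN\Psi \neq 0$ whenever $c \neq 0$ and $N \geq 1$.

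For the first implication (irreducibility of $\gamma_\Psi$ implies invertibility of $S^\Psi$), I would argue by contrapositive. Suppose $v \neq 0$ satisfies $\cV\Psi = 0$, so that $e^{it\cV}\Psi = \Psi$ for all $t \in \R$. Using the standard commutation $[\cV, a^\dagger_{n\tau}] = v_n a^\dagger_{n\tau}$, the unitary $e^{it\cV}$ implements multiplication by $e^{itv}$ on the one-particle space $\gH$; hence the one-particle density matrix satisfies $\gamma_\Psi = e^{itv}\gamma_\Psi e^{-itv}$ for all $t$. Differentiating at $t=0$ yields $[\gamma_\Psi, v] = 0$. Since $v$ is diagonal, its spectral projections are exactly the matrices $\1_{J_k}$ associated with its level sets $J_k = \{m : v_m = \lambda_k\}$, and they inherit the commutation with $\gamma_\Psi$. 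Irreducibility then forces every $J_k$ to be either empty or all of $\{1,\dots,M\}$, which means $v$ takes a single value, i.e., is constant. Combined with $\cV\Psi = cN\Psi = 0$, this gives $v = 0$, a contradiction.

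For the converse when $\Psi$ is a Slater determinant, suppose $[\gamma_\Psi, \1_J] = 0$ for some proper non-empty subset $J \subset \{1,\dots,M\}$. Since $\gamma_\Psi$ is then the orthogonal projector onto an $N$-dimensional subspace $V \subset \gH$, the commutation produces the invariant decomposition $V = V_J \oplus V_{J^c}$, where $V_J$ and $V_{J^c}$ are the intersections of $V$ with the ranges of $\1_J$ and $\1_{J^c}$ respectively; let $N_J := \dim V_J$ and $N_{J^c} := \dim V_{J^c}$, so that $N_J + N_{J^c} = N$. Building $\Psi$ from an orthonormal basis adapted to this decomposition, each basis orbital is supported on either $J$ or $J^c$, so $\Psi$ is a joint eigenstate of the partial particle-number operators $\sum_{m\in J}\cN_m$ and $\sum_{m\in J^c}\cN_m$ with eigenvalues $N_J$ and $N_{J^c}$. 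Defining $v_m = N_{J^c}$ for $m \in J$ and $v_m = -N_J$ for $m \in J^c$, one gets $\cV\Psi = (N_{J^c}N_J - N_J N_{J^c})\Psi = 0$; it remains to check $v \neq 0$, which is immediate when both $N_J, N_{J^c} > 0$, and in the degenerate case $N_{J^c} = 0$ follows from $v = N\1_J \neq 0$ since $J$ is non-empty (and symmetrically for $N_J = 0$).

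The main technical step I expect to require careful attention is the second-quantization identity showing that $e^{it\cV}$ acts on one-particle observables as conjugation by $e^{itv}$, which underpins the passage from $\cV\Psi = 0$ to $[\gamma_\Psi, v] = 0$; this is routine but must be verified cleanly. The Slater-determinant hypothesis enters the converse essentially through the projector structure $\gamma_\Psi^2 = \gamma_\Psi$ that guarantees the clean block decomposition $V = V_J \oplus V_{J^c}$, so I do not expect the converse to extend beyond Slater determinants in general.
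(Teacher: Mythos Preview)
Your argument is correct. The forward implication follows the same line as the paper: both derive $[\gamma_\Psi, v] = 0$ from $\cV\Psi = 0$ (you via the unitary $e^{it\cV}$ and differentiation, the paper via the partial trace of $\cV|\Psi\rangle\langle\Psi| = 0$), then use the spectral projections of the diagonal matrix $v$ together with irreducibility to force $v$ constant, hence zero.

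For the converse in the Slater case your route is genuinely different. The paper computes, using the explicit two-particle density of a Slater determinant and $\gamma_\Phi^2=\gamma_\Phi$, the structural identity
\[
S^\Phi = |\rho\rangle\langle\rho| + \tilde S^{\gamma_\Phi},
\]
so that $\ker S^\Phi = \ker\tilde S^{\gamma_\Phi}\cap\rho^\perp$; since the constant potential lies in $\ker\tilde S^{\gamma_\Phi}$ but not in $\rho^\perp$, a nontrivial $\ker S^\Phi$ forces $\dim\ker\tilde S^{\gamma_\Phi}\geq 2$, i.e.\ non-irreducibility. Your argument is instead constructive: from $[\gamma_\Phi,\1_J]=0$ you split the orbital space $V=V_J\oplus V_{J^c}$ and exhibit an explicit nonzero $v$ with $\cV\Phi=0$. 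Your approach is more elementary and avoids the two-particle density computation; the paper's approach buys the rank-one perturbation formula above, which is of independent interest and clarifies exactly how $S^\Phi$ and $\tilde S^{\gamma_\Phi}$ are related.

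One small slip in your degenerate-case bookkeeping: with your definition $v_m=N_{J^c}$ on $J$ and $v_m=-N_J$ on $J^c$, the case $N_{J^c}=0$ gives $v=-N\,\1_{J^c}$ (not $N\,\1_J$), which is still nonzero since $J$ is proper and hence $J^c\neq\emptyset$; it is the case $N_J=0$ that gives $v=N\,\1_J$. This does not affect the validity of the argument.
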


\begin{proof}
Assume that we have $\cV\Psi=0$ for some $\cV=\sum_{m=1}^Mv_m\cN_m$. We have $\cV|\Psi\rangle\langle\Psi|=|\Psi\rangle\langle\Psi|\cV=0$ and, after taking the partial trace in the last $N-1$ variables, we obtain that $v\gamma_\Psi=\gamma_\Psi v$, that is, $\gamma_\Psi$ commutes with $v$. In other words $v$ belongs to the kernel of the matrix $\tilde S^{\gamma_\Psi}$. The irreducibility assumption implies that $v$ is constant. Coming back to the equation $\cV\Psi=0$ we see that the constant must be 0 and we have shown that $S^\Psi$ is invertible.

Next we prove the converse for a Slater determinant $\Psi=\Phi$. Using the formula~\eqref{eq:S_in_terms_of_rho_2} and the explicit expression of $\rho^{(2)}_\Phi$ for Slater determinants, we can compute
\begin{align}
\pscal{v,S^\Phi v}&=\left(\sum_{m=1}^M v_m\rho_m\right)^2-\tr(v\gamma_\Phi v\gamma_\Phi)+\sum_{m=1}^M v_m^2\rho_m\nn\\
&=\left(\sum_{m=1}^M v_m\rho_m\right)^2+\big\langle v,\tilde S^{\gamma_\Phi} v\big\rangle.\label{eq:tilde_S_Slater}
\end{align}
This can also be written as
$$S^\Phi=|\rho\rangle\langle\rho|+\tilde{S}^{\gamma_\Phi},$$
namely $S^\Phi$ is a rank-one perturbation of $\tilde S^{\gamma_\Phi}$ for a Slater determinant. In the first line of~\eqref{eq:tilde_S_Slater}, the first two terms on the right-hand side are the direct and exchange terms, respectively. To go to the second line we used that
$$\sum_{m=1}^M v_m^2\rho_m=\tr(v^2\gamma_\Phi)=\tr(v^2\gamma_\Phi^2)$$
since $\gamma_\Phi$ is a projection for Slater determinants. The relation~\eqref{eq:tilde_S_Slater} tells us that a vector $v$ is in the kernel of $S^\Phi$ if and only if the two terms on the right-hand side vanish. Hence, for Slater determinants, we have
$$\ker(S^\Phi)=\ker(\tilde S^{\gamma_\Phi})\cap\rho^\perp$$
where $\rho^\perp$ is the space of $v$'s such that $\pscal{v,\rho}_{\R^M}=\sum_{m=1}^Mv_m\rho_m=0$. The kernel of $\tilde S^{\gamma_\Phi}$ always contains the constant potential, but the later is not in $\rho^\perp$, because $\pscal{1,\rho}_{\R^M}=N\neq0$. Therefore, $\ker(S^\Phi)\neq\{0\}$ implies $\dim\ker(\tilde S^{\gamma_\Phi})\geq2$. This concludes the proof.
\end{proof}

It is perfectly possible that $S^\Psi$ is invertible although $\gamma_\Psi$ is not irreducible, for a correlated state $\Psi$. For instance, take $M=3$ sites and the correlated state 
$$\Psi=\frac1{\sqrt3}\Big(|1\uparrow,1\downarrow,2\uparrow\rangle+|1\uparrow,2\downarrow,3\uparrow\rangle+|1\downarrow,3\uparrow,3\downarrow\rangle\Big),$$
with an obvious notation. Since two of the three above Slater determinants always have two different orbitals, the one-particle density matrix is just the combination of the individual density matrices. One finds
\begin{multline*}
\gamma_\Psi=\frac23\big(|1\uparrow\rangle\langle1\uparrow|+|1\downarrow\rangle\langle1\downarrow|+|3\uparrow\rangle\langle3\uparrow|\big)\\
+\frac13\big(|2\uparrow\rangle\langle2\uparrow|+|2\downarrow\rangle\langle2\downarrow|+|3\downarrow\rangle\langle3\downarrow|\big).
\end{multline*}
It is not at all irreducible because it is diagonal in the canonical basis. In fact, it commutes with all the $\delta_m$ and thus $\tilde S^{\gamma_\Phi}=0$. On the other hand, if we have $\cV\Psi=0$ we obtain the equations
$$
\left\{
\begin{aligned}
2v_1+v_2&=0\\
v_1+v_2+v_3&=0\\
v_1+2v_3&=0\\    
\end{aligned}
\right.
$$
that imply $v_1=v_2=v_3=0$ and therefore that $S^\Psi$ is invertible.

An interesting problem is to understand for what kind of lattice systems the ground state satisfies that $S^\Psi$ or $\tilde S^{\gamma_\Psi}$ are invertible (on the orthogonal of the constant). To our knowledge, only 1D chains have been handled so far, using Perron--Frobenius theory~\cite{PenLeu-21}.

We now address the following question. Imagine that we are given a trajectory of densities $\rho(t)$ and an initial state $\Psi_0$ such that $\rho_{\Psi_0}=\rho(0)$. If $S^{\Psi_0}$ is not invertible, we cannot apply Theorem~\ref{thm:McLachlan} to obtain a unique solution to the geometric principle. Could we slightly perturb $\Psi_0$ and replace it by a closeby $\Psi'_0$ so that $S^{\Psi'_0}$ is invertible and $\rho_{\Psi'_0}=\rho_{\Psi_0}$? In other words, is the set of states such that $S^{\Psi_0}$ is invertible dense within the set of states with given density?

By Theorem~\ref{thm:irreducible}, we know that $\gamma_{\Psi_0}$ is not irreducible. The same result implies that if a normalized wavefunction $\Psi_0'$ is such that $\gamma_{\Psi_0'}$ is irreducible, then $S^{\Psi_0'}$ is invertible. We can thus restrict our attention to the one-particle density matrices. In fact, if we can find a unitary matrix $U$ such that $\gamma'=U\gamma_{\Psi_0}U^\dagger$ is irreducible, then the state $\Psi_0':=U^{\otimes N}\Psi_0$ is such that $\gamma_{\Psi_0'}=\gamma'$ is irreducible. It is therefore natural to ask whether irreducibility can be restored by conjugation by a one-body unitary matrix. Note that if $\Psi_0$ is a Slater determinant and if we want $\Psi'_0$ to be a Slater determinant as well, then $\Psi_0'$ is necessarily of the form $\Psi_0'=U^{\otimes N}\Psi_0$. Indeed, as $\gamma_{\Psi_0}$ and $\gamma_{\Psi'_0}$ are then both rank-$N$ orthogonal projectors, they are unitary equivalent so that there exists a unitary matrix $U \in \C^{M \times M}$ such that $\gamma_{\Psi_0'}=U\gamma_{\Psi_0}U^\dagger$; this implies that $\Psi_0'=U^{\otimes N}\Psi_0$ up to an irrelevant global phase which can be absorbed in the unitary matrix $U$.

\begin{theorem}[Perturbing non-irreducible matrices]\label{thm:reducible_perturb}
Let $\gamma$ be a one-particle density matrix that does not act irreducibly. Upon relabelling the sites we can assume that $[\gamma,\1_J]=0$ with $J=\{1,...,K\}$ where $1\leq K\leq M-1$. Then $\gamma$ takes the block-diagonal form
$$\gamma=\begin{pmatrix}
\gamma_1&0\\
0&\gamma_2\\
\end{pmatrix}.$$
We assume that $\gamma_1$ acts irreducibly on $\{1,...,K\}$ and $\gamma_2$ acts irreducibly on $\{K+1,...,M\}$.

If
\begin{equation}
\max\sigma(\gamma_1)\leq\min\sigma(\gamma_2)\ \text{or}\ \max\sigma(\gamma_2)\leq\min\sigma(\gamma_1),
\label{eq:cond_reducible_extreme}
\end{equation}
then all the $\gamma'=U\gamma U^\dagger$ with $U$ a unitary matrix such that $\rho_{\gamma'}=\rho_\gamma$, are of the same block-diagonal form as $\gamma$, that is, commute with $\1_J$. Therefore, it is \textbf{not possible} to replace $\gamma$ with a unitarily equivalent matrix of the same density acting irreducibly.

In contrast, if
\begin{equation}
\min\sigma(\gamma_1)<\max\sigma(\gamma_2)\ \text{and}\ \min\sigma(\gamma_2)<\max\sigma(\gamma_1)
\label{eq:cond_reducible_interior}
\end{equation}
then for any $\eps>0$ we can find a unitary matrix $U_\eps$ such that $\|1-U_\eps\|\leq \eps$ and $\gamma'=U_\eps\gamma U_\eps^\dagger$ \textbf{acts irreducibly}, with $\rho_{\gamma'}=\rho_\gamma$. In particular, if $\gamma$ is the one-particle density matrix of some state $\Psi$, we obtain a state $\Psi'=U^{\otimes N}\Psi$ for which $S^{\Psi'}$ is invertible, by Theorem~\ref{thm:irreducible}.
\end{theorem}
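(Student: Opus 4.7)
The two cases require different techniques.

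For Case~1, I would argue via a Ky Fan minimax characterization. The matrix $\gamma' = U\gamma U^\dagger$ is unitarily equivalent to $\gamma$, so it has the same spectrum, while the per-site density constraint $\rho_{\gamma'} = \rho_\gamma$ forces $\tr(\1_J \gamma') = \sum_{m\in J}\rho_m = \tr(\gamma_1)$. Under the separation hypothesis, say $\max\sigma(\gamma_1) \leq \mu := \min\sigma(\gamma_2)$, this value is precisely the sum of the $2K$ smallest eigenvalues of $\gamma'$, so $\1_J$ is a minimizer of $Q \mapsto \tr(Q\gamma')$ among rank-$2K$ orthogonal projections. Any such minimizer has the form $\Pi_{<\mu}^{\gamma'} + Q$, where $\Pi_{<\mu}^{\gamma'}$ is the spectral projection of $\gamma'$ onto eigenvalues strictly below $\mu$ and $Q$ is a sub-projection of the $\mu$-eigenspace of $\gamma'$. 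Both summands commute with $\gamma'$ ($\Pi_{<\mu}^{\gamma'}$ by spectral calculus, $Q$ because $\gamma'$ acts as the scalar $\mu$ on its $\mu$-eigenspace), so $\1_J$ commutes with $\gamma'$, which is the block-diagonality claim.

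For Case~2, I would combine an explicit first-order construction with an implicit function argument. The interior condition supplies eigenvalues $\lambda,\lambda' \in \sigma(\gamma_1)$ and $\mu,\mu' \in \sigma(\gamma_2)$ with $\lambda < \mu$ and $\lambda' > \mu'$; let $\phi,\phi'$ be corresponding eigenvectors of $\gamma_1$ (supported in $J$) and $\psi,\psi'$ corresponding eigenvectors of $\gamma_2$ (supported in $J^c$). I would build the off-block-diagonal Hermitian matrix
\[
A = a\bigl(|\phi\rangle\langle\psi| + |\psi\rangle\langle\phi|\bigr) + b\bigl(|\phi'\rangle\langle\psi'| + |\psi'\rangle\langle\phi'|\bigr),
\]
with real coefficients $a,b$ solving $a^2(\lambda - \mu) + b^2(\lambda' - \mu') = 0$; this is feasible because the two factors have opposite signs by the interior condition. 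A direct computation gives $\tr(A_{12}A_{12}^\dagger \gamma_1) - \tr(A_{12}^\dagger A_{12}\gamma_2) = a^2(\lambda - \mu) + b^2(\lambda' - \mu') = 0$, which is equivalent to $\partial_\eps^2 \tr(\1_J\, e^{i\eps A}\gamma e^{-i\eps A})|_{\eps = 0} = 0$, while $A_{12}\gamma_2 - \gamma_1 A_{12} = a(\mu-\lambda)|\phi\rangle\langle\psi| + b(\mu'-\lambda')|\phi'\rangle\langle\psi'| \ne 0$, so $i[A,\gamma]$ is a nontrivial off-block-diagonal tangent direction.

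I would then correct the residual density change with a block-diagonal Hermitian $B(\eps) = O(\eps^2)$, so that $U_\eps := \exp(i\eps A + iB(\eps))$ enforces $\rho_{U_\eps \gamma U_\eps^\dagger} = \rho_\gamma$ exactly. The key linear algebra is that the map
\[
\Lambda : B \longmapsto \bigl(i\tr(B[\gamma,\delta_m])\bigr)_{m=1}^M
\]
sends block-diagonal Hermitian matrices onto exactly $\{v \in \R^M : \sum_{m\in J}v_m = \sum_{m\in J^c}v_m = 0\}$; the adjoint map $v \mapsto [\gamma_1,v|_J]\oplus[\gamma_2,v|_{J^c}]$ has kernel equal to the constants on each block by the irreducibility of $\gamma_1,\gamma_2$, so its cokernel is spanned by $\1_J$ and $\1_{J^c}$. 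Thanks to the balance condition on $A$, the order-$\eps^2$ obstruction falls in the image of $\Lambda$ and is resolved by a suitable $B_2$; the implicit function theorem then extends this to a real-analytic curve $B(\eps)$. The resulting $\gamma'_\eps := U_\eps\gamma U_\eps^\dagger$ is close to $\gamma$ with a non-vanishing $J\leftrightarrow J^c$ off-diagonal block, so it does not commute with $\1_J$; and since $\gamma$ itself does not commute with $\1_{J'}$ for any other nontrivial $J'$ (by irreducibility of $\gamma_1,\gamma_2$), neither does $\gamma'_\eps$ by continuity, which gives the desired irreducibility.

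The hard part will be carrying the perturbative scheme through \emph{all} orders in $\eps$: at order $\eps^n$ the obstruction $R_n$ satisfies $\sum_m (R_n)_m = 0$ automatically, but to fall in the image of $\Lambda$ it must also obey the stronger $\sum_{m\in J}(R_n)_m = 0$, i.e.\ preservation of the $J$-block trace at every order. The balance condition on $A$ only handles order two. The cleanest resolution is to run the entire argument on the codimension-one analytic subvariety $\{U\in\cU(2M) : \tr(\1_J\, U\gamma U^\dagger) = \tr(\gamma_1)\}$, to which $iA$ is tangent by construction; this subvariety is singular at $I$, so the implicit function step has to be performed on its smooth stratum, which will be the delicate part of the proof.
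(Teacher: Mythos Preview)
Your Case~1 is essentially the paper's argument: both use the variational characterization of the sum of the lowest eigenvalues to conclude that $\1_J$ is a spectral projection of $\gamma'$ and therefore commutes with it.

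For Case~2 you correctly identify a gap in your own scheme, but your proposed fix does not work. With $B(\eps)$ restricted to be block-diagonal, the constraint $\tr(\1_J\,U_\eps\gamma U_\eps^\dagger)=\tr(\gamma_1)$ is satisfied only to order $\eps^2$ by your balance condition on $A$, and block-diagonal corrections preserve block traces exactly, so they cannot help at higher order. Your fallback---working on the level set $\{U:\tr(\1_J\,U\gamma U^\dagger)=\tr(\gamma_1)\}$---fails for the reason you hint at but understate: the differential of the defining function at $U=I$ in the direction $iB$ is $\tr(\1_J\,i[B,\gamma])=i\tr([\gamma,\1_J]B)=0$ for \emph{every} $B$, since $[\gamma,\1_J]=0$. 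So $I$ is a critical point of the defining function, and there is no smooth stratum through $I$ on which to run an implicit function argument; the ``delicate part'' you flag is in fact a dead end as stated.

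The paper avoids all of this with a two-step construction. First, it applies two \emph{finite} planar rotations, in $\tspan(\phi_1,\phi_2')$ and $\tspan(\phi_1',\phi_2)$, with angles $\theta_1=\eps$ and $\theta_2=a(\eps)\eps$, where $a(\eps)$ is chosen to solve $(\lambda_2'-\lambda_1)\sin^2\theta_1=(\lambda_1'-\lambda_2)\sin^2\theta_2$; this enforces $\tr(\gamma_1'')=\tr(\gamma_1)$ \emph{exactly} for all $\eps$, not just at second order. The idea you are missing is precisely this second free parameter in the off-diagonal perturbation, tuned nonperturbatively to kill the block-trace obstruction. Second, with block traces exactly preserved, the density is corrected inside each block separately by a block-diagonal unitary $U_1'\oplus U_2'$, obtained by running the geometric-principle flow (mixed-state version of Theorem~\ref{thm:McLachlan}, Appendix~\ref{app:mixed}) along the linear path $\rho(t)=(1-t)\rho_{\gamma''}+t\rho_\gamma$ on $[0,1]$; the irreducibility of $\gamma_1,\gamma_2$ makes the relevant matrices $\tilde S^{\gamma_j}$ invertible, and a Cauchy--Lipschitz estimate shows the flow reaches $t=1$ for $\eps$ small. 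The off-diagonal blocks of $\gamma''$ are merely conjugated by $U_1',U_2'$ and stay nonzero, giving irreducibility.
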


For simplicity, we have stated the result for two irreducible blocks. If there are more, one argues by induction.

The condition~\eqref{eq:cond_reducible_extreme} means that the two spectra of $\gamma_1$ and $\gamma_2$ are ordered in the sense that one of them is completely above the other. Note that $\max\sigma(\gamma_1)\leq\min\sigma(\gamma_2)$ can be detected from the equivalent property that $\sum_{m=1}^K\rho_m=\sum_{k=1}^{2K}\lambda_k(\gamma)$, that is, the average number of electrons in $J$ is exactly equal to the sum of the $2K$ lowest eigenvalues $\lambda_k(\gamma)$ of $\gamma$ (the factor 2 is because of spin). For the reader familiar with the Schur--Horn theorem, we notice that this characterizes the faces of the corresponding convex polytope~\cite{LeiRicTom-99}. Although we are not going to use this result here, the proof of Theorem~\ref{thm:reducible_perturb} is somewhat inspired by the Schur--Horn theorem.

\begin{proof}
First, we assume, for instance, $\max\sigma(\gamma_1)\leq\min\sigma(\gamma_2)$ and consider any $\gamma'=U\gamma U^\dagger$ such that $\rho_{\gamma'}=\rho_\gamma$. We write
$$\gamma'=\begin{pmatrix}
\gamma'_1&\gamma'_{12}\\
\gamma'_{21}&\gamma'_2\\
\end{pmatrix},$$
where we recall that the block decomposition corresponds to the splitting $\{1,...,M\}=\{1,...,K\}\cup \{K+1,...,M\}$. We have
$$\tr(\gamma'_1)=\tr(\gamma_1)=\sum_{k=1}^{2K}\lambda_k(\gamma)=\sum_{k=1}^{2K}\lambda_k(\gamma').$$
The first part is because $\gamma$ and $\gamma'$ have the same density and the second is because they have the same spectrum. Next we use the variational characterization of the sum of the lowest eigenvalues, which is nothing but the Aufbau principle for fermions. We denote by $P:=\1_{}(\gamma'\leq\lambda_{2K})$ the spectral projection corresponding to the lowest $2K$ eigenvalues (the latter can have a rank $\geq2K$ in case of degeneracies). Next a simple calculation from~\cite{BacBarHelSie-99,HaiLewSer-05a} shows that
$$0=\tr\big((\gamma'-\lambda_{2K})(\1_J-P)\big)=\tr\big(|\gamma'-\lambda_{2K}|(\1_J-P)^2\big).$$
Let $P_<=\1(\gamma'<\lambda_{2K})$ be the projection corresponding to the eigenvalues strictly below $\lambda_{2K}$. The above relation implies $P_<(\1_J-P)P_<=P_<(\1_J-1)P_<=0$. Hence $\1_J=P_<+\delta$ for a $\delta$ supported on the kernel of $P_<$. Similarly, we find $P_>(\1_J-P)P_>=P_>\1_JP_>=0$ where $P_>$ is the projection on the eigenvalues strictly above $\lambda_{2K}$. Our conclusion is that $\delta$ is an orthogonal projection whose range is included in $\ker(\gamma'-\lambda_{2K})$. This shows that $\1_J$ is a spectral projection of $\gamma'$, hence commutes with it. This proves the claim that $\gamma'$ has the same block-diagonal structure as $\gamma$, hence is not irreducible.

Next we come to the second part of the theorem. We assume that the two spectra are interlaced. We call $\lambda_1$ and $\lambda_1'$ the lowest and largest eigenvalues of $\gamma_1$, with eigenvectors $\phi_1$ and $\phi_1'$. Similarly, we call $\lambda_2$ and $\lambda_2'$ the lowest and largest eigenvalues of $\gamma_2$, with eigenvectors $\phi_2$ and $\phi_2'$. We thus have $\lambda_1<\lambda_2'$ and $\lambda_1'<\lambda_2$. It is perfectly possible that $\lambda_1=\lambda_1'$ in case of degeneracy. We can always assume that $\phi_1$ and $\phi_1'$ are orthogonal, even when $K=1$, thanks to the spin. Similarly for $\phi_2$ and $\phi_2'$. Next we apply two rotations, replacing
$$(\phi_1,\phi_2')\mapsto(\cos\theta_1\phi_1+\sin\theta_1\phi_2',-\sin\theta_1\phi_1+\cos\theta_1\phi_2')$$
and
$$(\phi_1',\phi_2)\mapsto(\cos\theta_2\phi'_1+\sin\theta_2\phi_2,-\sin\theta_2\phi'_1+\cos\theta_2\phi_2).$$
By doing so we insert non-zero terms outside of the blocks, as needed to make the matrix irreducible. We obtain a new matrix $\gamma''$ whose diagonal blocks are
\begin{multline*}
\gamma''_1=\gamma_1+(\lambda_2'-\lambda_1)\sin^2\theta_1\,|\phi_1\rangle\langle\phi_1|\\
-(\lambda_1'-\lambda_2)\sin^2\theta_2\,|\phi'_1\rangle\langle\phi'_1|
\end{multline*}
and
\begin{multline*}
\gamma''_2=\gamma_2+(\lambda'_1-\lambda_2)\sin^2\theta_2\,|\phi_2\rangle\langle\phi_2|\\
-(\lambda_1'-\lambda_2)\sin^2\theta_2\,|\phi'_2\rangle\langle\phi'_2|.
\end{multline*}
We choose $\theta_1=\eps\ll1$ and $\theta_2=a\eps$ with $a=a(\eps)$ chosen so that
$$(\lambda_2'-\lambda_1)\sin^2\eps=(\lambda_1'-\lambda_2)\sin^2(a\eps).$$
to ensure $\tr(\gamma_1'')=\tr(\gamma_1)$ and $\tr(\gamma_2'')=\tr(\gamma_2)$. In other words $a\sim \sqrt{(\lambda_2'-\lambda_1)/(\lambda_1'-\lambda_2)}$ in the limit $\eps\to0$.

After the rotation, we have an irreducible matrix for $\eps\ll1$ but we have modified the density. The last step consists of applying a block-diagonal rotation in the form
$$U=\begin{pmatrix}
U_1'&0\\
0&U_2'\end{pmatrix}$$
to ensure that $U_1'\gamma_1''(U_1')^\dagger$ has the same density as $\gamma_1$ and $U_2'\gamma_2''(U_2')^\dagger$ has the same density as $\gamma_2$. The off-diagonal blocks get multiplied by $U'_1$ and $U'_2$ but they stay non -zero. Such $U'_1$ and $U'_2$ exist for $\eps\ll1$ because $\gamma_1$ and $\gamma_2$ are irreducible, hence $\gamma_1''$ and $\gamma''_2$ also, for $\eps\ll1$. For instance we can solve the time-dependent equation
$$\partial_t\gamma_1=[[w(t),\gamma_1(t)],\gamma_1(t)],\qquad \gamma_1(0)=\gamma_1''$$
with $w(t)$ satisfying $\sum_{m=1}^Mw_m(t)=0$ and chosen so as to reproduce the density $\rho(t)=(1-t)\rho_{\gamma_1''}+t\rho_{\gamma_1}$ for $0\leq t\leq1$. Then $U'_1$ is the value at time $t=1$ of the solution to $\partial_tU_1(t)=[w(t),\gamma_1(t)]U_1(t)$ and $\gamma'_1=\gamma_1(1)$. The existence of a solution follows from the same arguments as for Theorem~\ref{thm:McLachlan} and the mixed state case in Appendix~\ref{app:mixed}. We only have to verify that the solution exists until the time $t=1$. This follows from the proof of the Cauchy--Lipschitz theorem. Namely, the perturbation of $\gamma_1$ and of the desired density is of order $\eps^2\ll1$ hence the existence time given by the Banach fixed point theorem used in the proof of Cauchy--Lipschitz is at least of order $1/\eps^2$. The argument is the same for $\gamma'_2$. Putting everything together, we have constructed the desired unitary matrix.
\end{proof}

To conclude this appendix, we extract a result dealing specifically with projections.

\begin{theorem}[Perturbing non-irreducible projections]\label{cor:reducible_perturb_proj}
Let $\gamma$ be a one-particle rank-$N$ projection that does not act irreducibly. We assume that $0<\rho_\gamma<2$ everywhere. Then, for any $\eps>0$ we can find a unitary matrix $U_\eps$ such that $\|1-U_\eps\|\leq \eps$ and $\gamma'=U_\eps\gamma U_\eps^\dagger$ \textbf{acts irreducibly} with $\rho_{\gamma'}=\rho_\gamma$. In particular, for the corresponding Slater determinants, although we had $\det(S^\Phi)=0$ we obtain $\det(S^{\Phi'})\neq0$ for $\Phi'=U^{\otimes N}\Phi$.
\end{theorem}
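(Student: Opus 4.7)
The plan is to reduce to Theorem~\ref{thm:reducible_perturb} by showing that, for a rank-$N$ projection $\gamma$ with $0 < \rho_\gamma < 2$ everywhere, the obstruction condition~\eqref{eq:cond_reducible_extreme} cannot occur between any two irreducible sub-blocks, so that the favorable interlacing condition~\eqref{eq:cond_reducible_interior} always applies, and one can merge blocks one pair at a time until full irreducibility is achieved.

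More precisely, I would first decompose $\{1,\ldots,M\}$ into the maximal invariant subsets $J_1,\ldots,J_L$ for $\gamma$, yielding a block-diagonal decomposition $\gamma=\gamma_1\oplus\cdots\oplus\gamma_L$ with each $\gamma_i$ acting irreducibly on $J_i$; if $L=1$ we set $U_\eps=I$ and we are done. Otherwise, since $\gamma$ is an orthogonal projection commuting with each $\1_{J_i}$, each block $\gamma_i$ is itself an orthogonal projection on $\C^{2|J_i|}$, and in particular $\sigma(\gamma_i)\subset\{0,1\}$. The hypothesis $0<\rho_\gamma<2$ rules out the two degenerate possibilities: $\gamma_i=0$ would force $\rho_m=0$ for every $m\in J_i$, while $\gamma_i=I$ would force $\rho_m=2$. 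Hence $\min\sigma(\gamma_i)=0<1=\max\sigma(\gamma_i)$ for every $i$, and condition~\eqref{eq:cond_reducible_interior} holds for any pair $(\gamma_i,\gamma_j)$ viewed as a two-block splitting on $J_i\cup J_j$.

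I would then apply Theorem~\ref{thm:reducible_perturb} iteratively. At step $k\in\{1,\ldots,L-1\}$, the theorem (applied to the two blocks being merged, and extended by the identity on the remaining sites) produces a unitary $V_k$ with $\|I-V_k\|\leq \eps/(L-1)$ such that $V_k\cdots V_1\,\gamma\,V_1^\dagger\cdots V_k^\dagger$ is an orthogonal projection having $L-k$ irreducible blocks, unchanged site-by-site density, and in which every block still has spectrum $\{0,1\}$ (unitary conjugation preserves both the spectrum and, by the theorem, the density of each site). The reasoning of the previous paragraph therefore applies unchanged at each stage, so the iteration goes through. Setting $U_\eps:=V_{L-1}\cdots V_1$, the elementary bound $\|I-AB\|\leq \|I-A\|+\|I-B\|$ for unitaries yields $\|I-U_\eps\|\leq \eps$, and $\gamma':=U_\eps\gamma U_\eps^\dagger$ is an irreducible rank-$N$ projection with $\rho_{\gamma'}=\rho_\gamma$; Theorem~\ref{thm:irreducible} then delivers $\det(S^{\Phi'})\neq 0$ for the associated Slater determinant $\Phi'=U_\eps^{\otimes N}\Phi$. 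The only delicate point is verifying that the hypotheses of Theorem~\ref{thm:reducible_perturb} persist through the iteration, but this is automatic: the merged block is unitarily equivalent to a projection with spectrum $\{0,1\}$ and is irreducible by construction, while the untouched blocks retain their irreducibility, so the key spectral-interlacing property~\eqref{eq:cond_reducible_interior} is preserved at every step.
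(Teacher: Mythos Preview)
Your proof is correct and follows essentially the same approach as the paper: decompose $\gamma$ into irreducible blocks, use $0<\rho_\gamma<2$ to rule out the trivial blocks $\gamma_i=0$ or $\gamma_i=I$ so that every block has spectrum $\{0,1\}$, and then apply Theorem~\ref{thm:reducible_perturb} inductively to merge blocks pairwise. You have simply fleshed out the inductive step (the explicit control $\|I-V_k\|\leq\eps/(L-1)$ and the verification that the interlacing condition persists after each merge) that the paper leaves implicit.
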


This result says that in Kohn--Sham theory, if the sites are never empty or full, one can always perturb an initial Slater determinant $\Phi_0$ into a new Slater determinant with the same density, for which the geometric dynamics is well-posed for some time.

\begin{proof}
We decompose $\gamma$ into irreducible blocks, that is, we write $\{1,...,M\}=\cup J_k$ (disjoint union) with $[\gamma,\1_{J_k}]=0$ and $\gamma_k:=\1_{J_k}\gamma\1_{J_k}$ acting irreducibly on $J_k$. The spectrum of $\gamma$ is the union of the spectra of the $\gamma_k$'s and it only contains $0$'s and $1$'s. If one $\gamma_k=0$, then we have $\rho=0$ on $J_k$, which contradicts our assumption that $0<\rho_\gamma<2$. Similarly, if $\gamma_k=1$ on $J_k$ then we have $\rho\equiv2$ on $J_k$, which is also impossible. Our conclusion is that the spectra of the blocks must all contain both $0$'s and $1$'s. We can thus apply Theorem~\ref{thm:reducible_perturb} inductively and obtain an irreducible rank-$N$ projection close to $\gamma$.
\end{proof}

\section{Geometric principle for mixed states}\label{app:mixed}

In this appendix we provide the mixed state version of the geometric principle introduced in Section~\ref{sec:McLachlan}. Mixed states are more complicated objects than pure states. For pure states we only need to impose the normalization condition $\|\psi\|^2=1$ and can simply put the latter in the list of constraints, by requiring that the identity matrix $I_d$ belongs to $\text{span}_\R(\cO_1,...,\cO_M)$. For mixed states we have to require that $\Gamma$ is a Hermitian matrix satisfying $\tr(\Gamma)=1$ and $\sigma(\Gamma)\subset[0,1]$ (where $\sigma(\Gamma)$ designates the spectrum of $\Gamma$). These constraints are not so easy to handle.

To deal with this difficulty., we restrict our attention to time-dependent Schrödinger equations whose solution is evolving on the orbit 
$$\text{Orb}(\Gamma_0):=\left\{U\Gamma_0U^\dagger,\ U\in U(d)\right\}$$ 
of the initial state $\Gamma_0$, under the action of the unitary group. In other words, we require that the evolved state $\Gamma(t)=U(t)\Gamma_0U(t)^\dagger$ is unitarily equivalent to the initial state for all times. We recall that $\text{Orb}(\Gamma_0)$ forms a manifold, whose dimension depends on the spectrum of $\Gamma_0$. At any $\Gamma\in \text{Orb}(\Gamma_0)$, the manifold is locally parametrized by
$$e^{iT}\Gamma e^{-iT}=\Gamma+i[T,\Gamma]+O(T^2)$$ 
with $T^\dagger=T$. The tangent space at $\Gamma$ therefore consists of the operators of the form $i[T,\Gamma]$ with $T^\dagger=T$. The time-dependent equations must therefore take the von Neumann form $i\partial_t\Gamma(t)=[T(t),\Gamma(t)]$ for some $T(t)^\dagger=T(t)$, in such a way that the velocity always belongs to the tangent space, hence $\Gamma(t)$ belongs to $\text{Orb}(\Gamma_0)$. 

Next we discuss how to add constraints in this framework, in the form $\tr(\cO_m\Gamma(t))=o_m(t)$ for some Hermitian matrices $\cO_m$. This amounts to working in a sub-manifold of the orbit $\text{Orb}(\Gamma_0)$. The previous discussion leads us to look for a modified von Neumann equation in the special form
\begin{equation}
i\partial_t\Gamma(t)=\big[H(t)+G\big(t,\Gamma(t)\big)\,,\,\Gamma(t)\big]
\label{eq:GP_mixed_pre}
\end{equation}
where $G\big(t,\Gamma(t)\big)$ is a Hermitian operator used to impose the constraints and $H(t)$ is the time-dependent Hamiltonian. This is the mixed state version of the abstract modified Schrödinger equation~\eqref{eq:Schrodinger_perturbed}.

We explained in~\eqref{eq:McLachlan_rank2}-\eqref{eq:geom_pert_pure} that for the Geometric Principle, the pure state equation \eqref{eq:McLachlan} can indeed be written in the form~\eqref{eq:GP_mixed_pre} with
\begin{equation}
G\big(t,\Gamma(t)\big):=i\sum_{m=1}^Mw_m(t)[\cO_m, \Gamma(t)].
\label{eq:form_G_mixed}
\end{equation}
We claim this is the general form for mixed states too. 

To explain this, we again first restrict ourselves to time-independent constraints $o_m(t)\equiv o_m$, so that we are working in a fixed sub-set of the orbit $\text{Orb}(\Gamma_0)$, denoted by
$$\cC_{\Gamma_0}:=\left\{\Gamma=U\Gamma_0U^\dagger\ :\ \tr(\Gamma\cO_m)=o_m,\ m=1,...,M\right\}.$$
This set can again be decomposed into a smooth part $\cM_{\Gamma_0}$ and a singular part  $\cS_{\Gamma_0}$, which are properly defined below. The tangent space at a $\Gamma$ must consists of the $i[T,\Gamma]$ so that 
the expectation values of the observables $\cO_m$ do not change to leading order, leading to the condition that $\tr(\cO_m i[T,\Gamma])=0$ for all $m=1,...,M$. Noticing that $\tr(\cO_m i[T,\Gamma])=-\tr(i[\cO_m,\Gamma]T)$ we conclude that $T$ must be orthogonal to the operators $i[\cO_m,\Gamma]$ for the real Hilbert--Schmidt scalar product $\pscal{A,B}_\text{HS}=\tr(AB)$ of Hermitian matrices. We thus introduce the real linear space
(which one should note is \emph{not} the tangent space of $\cM_{\Gamma_0}$)
\begin{align*}
\cT_\Gamma 
    & :=\big\{T=T^\dagger,\ \pscal{T,i[\cO_m ,\Gamma]}_\text{HS}=0,\ 1\leq m\leq M\big\}
\intertext{and its orthogonal complement}
\cN_\Gamma
    &:=\text{span}_\R\big(i[\cO_1,\Gamma],...,i[\cO_M,\Gamma)\big)
    \\
    &=\left\{i\sum_{m=1}^Mw_m[\cO_m,\Gamma],\, \ w_1,...,w_M\in\R\right\}.
\end{align*}
The structure is therefore similar to the pure state case, if we use the Hilbert--Schmidt scalar product of matrices. We can now properly define the regular part $\cM_{\Gamma_0}$ of the set of constrained states as the $\Gamma\in\text{Orb}(\Gamma_0)$ so that the operators $i[\cO_m,\Gamma]$ are $\R$-linearly independent. This can be reformulated by requiring the matrix
\begin{equation}
\begin{aligned}
(\Sigma^\Gamma)_{mn}   
    & :=\frac12\pscal{i[\cO_m,\Gamma],i[\cO_n,\Gamma]}_\text{HS}
    \\ & =\frac12\tr\left([[\cO_m,\Gamma],\Gamma]\cO_n\right)
\end{aligned}
\label{eq:M_mixed}
\end{equation}
to be invertible, i.e. $\det(\Sigma^\Gamma)\neq0$. The singular set $\cS_{\Gamma_0}=\cC_{\Gamma_0}\setminus\cM_{\Gamma_0}$ is composed of the $\Gamma\in\text{Orb}(\Gamma_0)$ so that $\det(\Sigma^\gamma)=0$. The matrix $\Sigma^\Gamma$ plays the same role as the matrix $S^\psi$ that we had for pure states. Of course since we have restricted the dynamics to the orbit $\text{Orb}(\Gamma_0)$ we are here always working with states by definition  and we do not require anymore that $I_d\in\text{span}(\cO_1,...,\cO_M)$. (Otherwise $\Sigma^\Gamma$ cannot be invertible; Indeed, if $I_d=\sum_{m=1}^Mc_m\cO_m$ for some $c_m$'s then $\sum_{m=1}^Mc_m[\cO_m,\Gamma]=[I_d,\Gamma]=0$ because the identity matrix commutes with all states $\Gamma$.)

All in all, this leads to the statement that, in the geometric principle, $H(t) + G(t,\Gamma(t))$ is the projection of $H(t)$ onto the space $\cT_{\Gamma(t)}$ for the Hilbert-Schmidt scalar product. 
(This follows from the commutator form of \eqref{eq:GP_mixed_pre}, whence one should project onto the space $\cT_\Gamma$ and \emph{not} onto the tangent space, which consists of commutators $i[T,\Gamma]$ with $T\in \cT_\Gamma$.) 
That is, $G(t,\Gamma(t)) \in \cN_{\Gamma(t)}$ is of the form 
\begin{equation*}
    G(t,\Gamma(t)) = i\sum_{m=1}^Mw_m(t)[\cO_m,\Gamma(t)]
\end{equation*}
for some real-valued $w_m(t)$, as was claimed above. 

The argument is similar in the case of time-dependent constraints and our conclusion is that the \textbf{geometric principle for mixed states} reads
\begin{equation}
\boxed{i\partial_t\Gamma(t)=\left[H(t)+i\sum_{m=1}^Mw_m(t)[\cO_m,\Gamma(t)]\,,\,\Gamma(t)\right]}
\label{eq:GP_mixed_2}
\end{equation}
with real numbers $w_m(t)$ to be determined so as to fulfill the desired time-dependent constraints. 
Existence and uniqueness of the $w_m(t)$'s is proved in the same way as in Theorem~\ref{thm:McLachlan}, under the assumption that $\Sigma^{\Gamma_0}$ is invertible. 

Finally, we remark that, for $H(t)\equiv0$, Equation~\eqref{eq:GP_mixed_2} resembles the double-bracket flow, sometimes used to diagonalize matrices~\cite{BloBroRat-92,BacBru-10}, with the difference that the coefficients $w_m$ are in our case nonlinear functions of $\Gamma(t)$.
This can be thought of as a mixed state version of the resemblance with static DFT for imaginary time in the pure state case \cite{PenLee-PRA-25} mentioned in \Cref{sec:geomectric-structure-constraints} above.

\section{The algebraic viewpoint and other choices of the correction term}
\label{sec.algebra.corr.term}

In this appendix, we provide an algebraic viewpoint on the choice of the correction term $G$ in Eq.~\eqref{eq:McLachlan_rank2} as a linear combination of simple operators belonging to the Lie algebra generated by the observables, the Hamiltonian and the density matrix representing the state. This viewpoint encompasses the specific choices presented in the main body of the paper.

For convenience, we work in the mixed state formalism of Appendix~\ref{app:mixed}  and consider the von Neumann equation
\begin{align} \label{eq:Liouville_equation}
   i \partial_t \Gamma(t) & = [H(t)+G\big(t,\Gamma(t)\big),\Gamma(t)], \\
   \Gamma(0) & =\Gamma_0. \label{eq:initial_condition}
\end{align}
When $\Gamma_0=|\psi_0\rangle\langle\psi_0|$, this is equivalent to the modified Schr\"odinger equation~\eqref{eq:McLachlan_rank2} with initial condition $\psi(0)=\psi_0$. We assume that the constraints are given by 
\begin{equation}
    \label{eq:TD_constraints}
    \tr(\mathcal O_m(t)\Gamma(t))=o_m(t), \quad 1 \le m \le M, \quad t \ge 0,
\end{equation}
where $\mathcal O_m : [0,+\infty) \to \C^{d \times d}_{\rm herm}$ and $o_m : [0,+\infty) \to \R$ are continuously differentiable functions such that $o_m(0)=\tr(\mathcal O_m(0)\Gamma_0)$ for all $m=1,..., M$. In contrast with the formalism used in the main body of the paper, we allow here time-dependent observables, for later purposes. 

Differentiating \eqref{eq:TD_constraints} in time, we obtain that a necessary and sufficient condition for a solution of \eqref{eq:Liouville_equation} to satisfy the constraints~\eqref{eq:TD_constraints} is
\begin{equation}
    \label{eq:CNS_1}
    \tr(i[G(t,\Gamma(t)),\mathcal O_m(t)]) = b_m(t,\Gamma(t)),
\end{equation}
with
$$
b_m(t,\Gamma):=o_m'(t)-\tr\left( \left({\mathcal O}_m'(t)+i[H(t),\mathcal O_m(t)]\right)\Gamma\right).
$$
It is natural to choose $G$ of the form
\begin{equation}
    \label{eq:special_form_F}
    G(t,\Gamma)=\sum_{j=1}^J \alpha_j(t) \mathcal B_j(t,\Gamma),
\end{equation}
where $\alpha_j : [0,+\infty) \to \R$ and $\mathcal B_j : [0,+\infty) \times \C^{d \times d}_{\rm herm} \to \C^{d \times d}_{\rm herm}$ are continuously differentiable functions. 
The various approaches considered in the theoretical sections above correspond to various choices of operators $\mathcal B_j(t,\Gamma)$:
\begin{description}
    \item[{[VP]}] the variational principle corresponds to $J=M$ and $\mathcal B_j(t)=\mathcal O_j(t)$;
    \item[{[GP]}] the geometric principle  corresponds to $J=M$ and $\mathcal B_j(t,\Gamma)=i[\mathcal O_j(t),\Gamma]$;
    \item[{[OP]}] the oblique principle corresponds to $J=M$ and $\mathcal B_j(t,\Gamma)=\cos \theta \; \mathcal O_j(t) +\sin \theta \; i [\mathcal O_j(t),\Gamma]$.
\end{description}
More generally, it is natural to choose the $\mathcal B_j(t,\Gamma)$'s in the Lie algebra generated by the operators $\mathcal O_j(t)$, $\Gamma$, and $H(t)$. 
This leads naturally to also study
\begin{description}
    \item[{[CP]}] the {\em current principle}, corresponding to $J=M$ and $\mathcal B_j(t):=-i[H(t),\mathcal O_j(t)]$.
\end{description}
Further, a natural generalization of the oblique principle is to consider any matrix-type interpolation between the geometric and variational principle as follows
\begin{description}
    \item[{[gOP]}] the \emph{generalized oblique principle}, corresponding to $J=M$ and, for some real $M\times M$ matrices $\Theta^{(1)}(t)$ and $\Theta^{(2)}(t)$, 
    $$\mathcal B_j(t,\Gamma) := \sum_{k=1}^M \Bigl(\Theta^{(1)}_{jk}(t) \; \mathcal O_k(t) + \Theta^{(2)}_{jk}(t) \; i [\mathcal O_k(t),\Gamma] \Bigr).$$ 
\end{description}
For $G$ of the form \eqref{eq:special_form_F}, the necessary and sufficient condition \eqref{eq:CNS_1} reads
\begin{equation}
    \label{eq:CNS_2}
    \mathfrak M(t,\Gamma(t)) \alpha(t) = b(t,\Gamma(t)),
\end{equation}
where $\alpha(t) := (\alpha_1(t), \cdots,\alpha_J(t)) \in \R^J$, $b(t,\Gamma) := (b_1(t,\Gamma),\cdots,b_M(t,\Gamma)) \in \R^M$ and $\mathfrak M : [0,+\infty) \times \C^{d \times d}_{\rm herm} \to \R^{M \times J}$ is the matrix-valued function defined by
\begin{equation}
    [\mathfrak M(t,\Gamma)]_{mj} = \tr(i [\mathcal B_j(t,\Gamma),\mathcal O_m(t)]\Gamma).
\end{equation}
For each of the choices [VP], [GP], [OP], [CP], and [gOP], $\mathfrak M(t,\Gamma)$ is a square matrix. Assuming that \eqref{eq:Liouville_equation}-\eqref{eq:initial_condition}, with $G$ given by \eqref{eq:special_form_F} has a solution $\Gamma(\cdot)$ on the interval $[0,T)$, $T > 0$, satisfying \eqref{eq:TD_constraints}, and assuming that $\mathfrak M(t,\Gamma(t))$ is invertible at each $t \in [0,T)$, we obtain, by combining~\eqref{eq:Liouville_equation}, \eqref{eq:initial_condition}, \eqref{eq:special_form_F}, and \eqref{eq:CNS_2}, that $\Gamma(t)$ is a solution on $[0,T)$ to the Cauchy problem
\begin{align}
\label{eq:Cauchy_problem}
   i \partial_t \Gamma(t) & = \bigl[H(t) 
   +\mathfrak M(t,\Gamma(t))^{-1}  b(t,\Gamma(t)) \cdot \mathcal B(t),\Gamma(t)\bigr], \\
   \Gamma(0) & =\Gamma_0, \label{eq:initial_condition_2}
\end{align}
where $\mathcal B(t):=(\mathcal B_1(t),\cdots,\mathcal B_M(t))$ is a vector-valued operator. If $\mathfrak M(0,\Gamma_0)$ is invertible, it follows from the Cauchy--Lipschitz  theorem that ~\eqref{eq:Cauchy_problem}-\eqref{eq:initial_condition_2} has a unique maximal continuously differentiable solution on a time-interval $[0,T_*)$ with either $T_*=+\infty$ or $\displaystyle \lim_{t \to T_*}\mbox{det}(\mathfrak M(t,\Gamma(t)))=0$, and that, on the time interval $[0,T_*)$, $\Gamma(t)$ is the unique solution to~\eqref{eq:Liouville_equation}-\eqref{eq:initial_condition} satisfying \eqref{eq:TD_constraints}.

Depending on the setting under consideration, the matrix $\mathfrak M(t,\Gamma)$ has the following expression
\begin{description}
    \item[{[VP]}] 
    $[\mathfrak M^{\rm VP}(t,\Gamma)]_{mn} = \tr( i[\mathcal O_n(t),\mathcal O_m(t)] \Gamma)$. In particular, $\mathfrak M^{\rm VP}(t,\Gamma) \in \R^{M \times M}_{\rm antisym}$;
    \item[{[GP]}] $[\mathfrak M^{\rm GP}(t,\Gamma)]_{mn} = \tr( (i[\mathcal O_m(t),\Gamma]) (i[\mathcal O_n(t),\Gamma]))$. In particular, $\mathfrak M^{\rm GP}(t,\Gamma) \in \R^{M \times M}_{\rm sym}$;
    \item[{[OP]}] $\mathfrak M^{\rm OP}(t,\Gamma) = \cos \theta \; \mathfrak M^{\rm VP}(t,\Gamma) + \sin \theta \; \mathfrak M^{\rm GP}(t,\Gamma)$;
    \item[{[CP]}] $[\mathfrak M^{\rm CP}(t,\Gamma)]_{mn} =\tr(  [[H(t),\mathcal O_n(t)],\mathcal O_m(t)]\Gamma)$;
    \item[{[gOP]}] 
    $\mathfrak{M}^{\mathrm{gOP}}(t,\Gamma)$\\ $\phantom{xxx} =\mathfrak{M}^{\rm VP}(t,\Gamma) \Theta^{(1)}(t)^T +\mathfrak{M}^{\rm GP}(t,\Gamma) \Theta^{(2)}(t)^T$.
\end{description}
These matrices coincide with the ones considered in Sections~\ref{sec:TDVP}, \ref{sec:McLachlan}, and Appendix~\ref{app:mixed}, namely
\begin{align*}
\mathfrak M^{\rm VP}(t,|\Psi\rangle\langle\Psi|) &= 2 A^\Psi(t), \\ \mathfrak M^{\rm GP}(t,\ket{\Psi}\bra{\Psi}) &= 2\Sigma^{\ket{\Psi}\bra{\Psi}}(t), \\ \mathfrak M^{\rm CP}(t,|\Psi\rangle\langle\Psi|)& =-2K^\Psi(t). 
\end{align*}
Recall that $\Sigma^{\ket{\Psi}\bra{\Psi}}$ is the analogue of the matrix $S^\Psi$ in the density-matrix formalism.

We note that for any $\alpha\in \R^M$ we have
$$
\alpha^T \mathfrak M^{\rm GP}(t,\Gamma) \alpha = \left\| i \left[ \sum_{m=1}^M \alpha_m \mathcal O_m(t),\Gamma\right] \right\|_{\rm HS}^2. 
$$
Thus, $\mathfrak M^{\rm GP}(0,\Gamma_0)$ is not invertible if and only if $\Gamma_0$ commutes with some non-trivial linear combination of the observables $\mathcal O_m(0)$. We also see that if $M$ is odd, $\mathfrak M^{\rm VP}(0,\Gamma)$ is never invertible, since it is an antisymmetric matrix of odd order.

\medskip

\paragraph*{Commuting observables and the van Leeuwen equation.}
More can be said about the variational principle (i.e. $J=M$ and $B_m(t)=\mathcal O_m(t)$ for all $1 \le m \le M$), in the special case when all the observables $\mathcal O_m(t)$ commute (i.e. $[\mathcal O_m(t),\mathcal O_n(t)]=0$ for all $1 \le m,n \le M$). Then, $\mathfrak M^{\rm VP}(t,\Gamma)=0$ for all $t$ and $\Gamma$ and the necessary and sufficient condition~\eqref{eq:CNS_2} reads
\begin{equation} \label{eq:constraint_derivatives}
    \tr\left( \left( {\mathcal O}_m'(t)+i[H(t),\mathcal O_m(t)]\right)\Gamma(t)\right)= o_m'(t).
\end{equation}
If $\tr\left( \left({\mathcal O}_m'(0)+i[H(0),\mathcal O_m(0)]\right)\Gamma_0\right) \neq o_m'(0)$ the equations for the variational principle have no solution. If the condition  
\begin{equation}\label{eq:new_condition}
\tr\left( \left( {\mathcal O}'(0)+i[H(0),\mathcal O_m(0)]\right)\Gamma_0\right) = o_m'(0)
\end{equation}
is satisfied, we can replace the original set of constraints \eqref{eq:TD_constraints} by the new set of constraints~\eqref{eq:constraint_derivatives} since the set of equations \eqref{eq:Liouville_equation}, \eqref{eq:initial_condition}, \eqref{eq:TD_constraints} is equivalent to the set of equations~\eqref{eq:Liouville_equation}, \eqref{eq:initial_condition}, \eqref{eq:constraint_derivatives} if condition \eqref{eq:new_condition} is fulfilled. This amounts to taking $B_m(t)=\mathcal O_m(t)$ and the new set of observables and expectations values 
$$
\widetilde{\mathcal O}_m(t)= {\mathcal O}_m'(t)+i[H(t),\mathcal O_m(t)] \quad \mbox{and} \quad \widetilde o_m(t) = o_m'(t).
$$
The new necessary and sufficient condition reads
\begin{equation} \label{eq:CNS_3}
    \widetilde{\mathfrak M}^{\rm VP}(t,\Gamma(t)) \alpha(t) = \widetilde b(t,\Gamma(t)),
\end{equation}
with
\begin{align*}
   & [\widetilde{\mathfrak M}^{\rm VP}(t,\Gamma)]_{mn} := \tr\left( i [\mathcal O_n,\widetilde{\mathcal O}_m(t)]\Gamma \right) \nonumber \\
       &   = \tr \left( i\left[\mathcal O_n(t), {\mathcal O}_m'(t)\right] \Gamma \right)  +  \tr \left( [[H(t),\mathcal O_m(t)],\mathcal O_n(t)]\Gamma \right),
\end{align*}
and
\begin{align*}
    \widetilde b_m(t,\Gamma):
    &=\widetilde o_m'(t)-\tr\left( \left({\widetilde{\mathcal O}}_m'(t)+i[H(t),\widetilde{\mathcal O}_m(t)]\right)\Gamma\right) \nonumber 
    \\ & = o_m''(t) -  \tr \bigg( ({\mathcal O}_m''(t)+2i[H(t),{\mathcal O}_m'(t)] \\
    & \qquad +i [H'(t),\mathcal O_m(t)]- [H(t),[H(t),\mathcal O_m]])\Gamma \bigg).
\end{align*}
This is another form of the van Leeuwen equation~\eqref{eq:vanLeeuwen_finite_dim}.

For the example considered in Section~\ref{sec:Hubbard}, the matrix $\widetilde{\mathfrak M}^{\rm VP}(t,\Gamma)$ is given by 
$$
    [\widetilde{\mathfrak M}^{\rm VP}(t,\Gamma)]_{mn} = - \tr( (i[\mathcal J_m(t), \mathcal N_n]) \Gamma),
$$
where $\mathcal N_n$ is the density operator at site $n$ (Eq.~\eqref{eq:cN}), and $\mathcal J_m(t):=i[H(t),\mathcal N_m] $ the current operator at site $m$.

As a final remark, let us mention that the matrix $\widetilde{\mathfrak M}^{\rm VP}(t,\Gamma)$ also appears in the time-dependent current density-functional theory studied in \Cref{sec.current.DFT}, since, in this setting, the VP matrix is of the form 
$$
\mathfrak M^{\rm VP}(t,\Gamma) = \left( \begin{array}{cc} 0 & -\widetilde{\mathfrak M}^{\rm VP}(t,\Gamma)^T \\ \widetilde{\mathfrak M}^{\rm VP}(t,\Gamma) & * \end{array} \right),
$$
so that the above matrix is invertible if and only if $\widetilde{\mathfrak M}^{\rm VP}(t,\Gamma)$ is invertible.

%

\end{document}